\newcommand{\alphaT}[0]{\gamma_T}
\newcommand{\mpl}[0]{Marchenko-Pastur law}
\renewcommand{\thesubsubsection}{\arabic{section}.\arabic{subsubsection}}
\newcommand{\captionfonts}{\normalsize}
\long\def\@makecaption#1#2{%
  \vskip\abovecaptionskip
  \sbox\@tempboxa{{\captionfonts #1: #2}}%
  \ifdim \wd\@tempboxa >\hsize
    {\captionfonts #1: #2\par}
  \else
    \hbox to\hsize{\hfil\box\@tempboxa\hfil}%
  \fi
  \vskip\belowcaptionskip}
\newcommand{\ie}{i.\,e.\xspace~}
\newcommand{\eg}{e.\,g.\xspace~}
\newtheorem{thm}{Theorem}
\newtheorem{corol}{Corollary}
\newtheorem{assum}{Assumption}
\newtheorem{defn}{Definition}
\newtheorem{prop}{Proposition}
\newtheorem{remk}{Remark}
\renewcommand{\maketitle}{\bgroup\setlength{\parindent}{0pt}
	\begin{flushleft}
	{\huge	\@title}
		
\vspace{.5cm}		
		\@author
	\end{flushleft}\egroup
}
\date{}
\title{
From univariate to multivariate coupling between continuous signals and point processes: \newline
	a mathematical framework
}
\author{ {\large Shervin Safavi$^{1,3}$, Nikos K. Logothetis$^{ 1,4}$, Michel Besserve$^{ 1,  2,*}$.}\\
	\vspace*{.5cm}
	\itshape
	{$^{ 1}$MPI for Biological Cybernetics, T\"ubingen, Germany.}\\
	{$^{ 2}$MPI for Intelligent Systems, T\"ubingen, Germany.}\\
	{$^{ 3}$IMPRS for Cognive and Systems Neuroscience, University of T\"ubingen, Germany.}\\
	{$^{ 4}$University of Manchester, United Kingdom.}\\
	{$^{*}$ Correspondence: \rm \texttt{michel.besserve@tuebingen.mpg.de}.}
}
\begin{document}


\maketitle

%

\noindent{\bf Keywords:} Phase locking, point processes, random matrix theory, martingales, stochastic integrals, singular value decomposition.

\thispagestyle{empty}
\markboth{}{NC instructions}

\begin{center} {\bf Abstract} \end{center}
Time series datasets often contain heterogeneous signals, 
composed of both continuously changing quantities and discretely occurring events. 
The coupling between these measurements may provide insights into key underlying mechanisms of the systems under study. 
To better extract this information, we investigate the asymptotic statistical properties of coupling measures between continuous signals and point processes. 
We first introduce martingale stochastic integration theory as a mathematical model for a family of statistical quantities that include the Phase Locking Value, a classical coupling measure to characterize complex dynamics. Based on the martingale Central Limit Theorem, we can then derive the asymptotic Gaussian distribution of estimates of such coupling measure, that can be exploited for statistical testing. Second, based on multivariate extensions of this result and Random Matrix Theory,
we establish a principled way to analyze the low rank coupling between a large number of point processes and continuous signals. For a null hypothesis of no coupling, we establish sufficient conditions for the empirical distribution of squared singular values of the matrix to converge, as the number of measured signals increases, to the well-known Marchenko-Pastur (MP) law, and the largest squared singular value converges to the upper end of the MPs support. This justifies a simple thresholding approach to assess the significance of multivariate coupling.
Finally, we illustrate with simulations the relevance of our univariate and multivariate results in the context of neural time series, addressing how to reliably quantify the interplay between multi channel Local Field Potential signals and the spiking activity of a large population of neurons.

\section{Introduction}
The observation of highly multivariate temporal point processes,
corresponding to the activity of a large number of individuals or units,
is pervasive in many applications 
(\eg neurons in brain networks \citep{johnson1996point}, 
members in social networks \citep{dai2016recurrent,de2016learning}). 
As the number of observed events per unit may remains small, 
inferring the underlying dynamical properties of the studied system from such observations is challenging. 
However, in many cases, it is possible to observe continuous signals whose coupling with the events can offer key insights. 

In Neuroscience, this is the case of the extracellular electrical field, 
which provides information complementary to spiking activity.
Local Field Potentials (LFP), are mesoscopic \citep{Liljenstroem2012} signals resulting from the superposition of the electric potentials generated by ionic currents flowing across the membranes of the cells located close the tip of recording electrodes.
The LFP reflects neural cooperation due to the anisotropic cytoarchitecture of most brain regions,
allowing the summation of the extracellular currents resulting from the activity of neighboring cells. As such, a number of subthreshold integrative processes 
(\ie modifying the neurons' internal state without necessarily triggering spikes) 
contribute to the LFP signal
\citep{Buzsaki2012,Buzsaki2013sbs,Einevoll2014mal,pesaranInvestigatingLargescaleBrain2018,herrerasLocalFieldPotentials2016}.

Reliably quantifying the \emph{coupling} between activities of individual units 
(\eg spikes generated by individual neurons) in a circuit 
and the aggregated measures (such as the LFP) may provide insights into underlying network mechanisms,
as illustrated in the electrophysiology literature.
At the single neuron level, the relationship of spiking activity to subthreshold activity has broad implications for the underlying cellular and network mechanisms at play.
For instance, it has been suggested that synaptic plasticity triggers changes in the coupling between spikes and LFPs 
\citep{grosmark2012REMSleepReorganizes,grosmark2016DiversityNeuralFiring}.
Regarding the putative functional role of such observed couplings, it has been hypothesized to 
 support cognitive functions such as attention. Such \emph{coordination by oscillations} hypothesis
 proposes that network oscillations modulate differentially the excitability of several target populations, such that a sender population can emit messages during the window of time for which a selected target is active, while unselected targets are silenced \citep{friesRhythmsCognitionCommunication2015,womelsdorfModulationNeuronalInteractions2007a,fries2005MechanismCognitiveDynamics}.


In the case of two continuous signals, 
coupling measures such as coherence and Phase Locking Value ($\rm PLV$) \citep{rosenblum2001phase,pereda2005nonlinear} are widely used and their statistical properties have been investigated, 
in particular in the stationary Gaussian case \citep{brillinger1981time,aydore2013note}. 
In a similar way, $\rm PLV$
\citep{ashidaProcessingPhaseLockedSpikes2010} and Spike-Field Coherence (SFC) \citep{mitra2007observed} 
can measure spike-LFP coupling 
(see among others: \cite{vinck2012ImprovedMeasuresPhasecoupling,vinck2010PairwisePhaseConsistency,jiangMeasuringDirectionalityNeuronal2015,zarei2018IntroducingComprehensiveFramework,liUnbiasedRobustQuantification2016}), 
and are broadly used to makes sense of the role played by neurons in coordinated network activity \citep{buzsaki2015WhatDoesGamma}. There are notable contributions investigating potential biases of those measures, when both point processes and continuous signals are involved \citep{lepage2011dependence,kovach2017BiasedLookPhase}. However, two questions relevant for practical applications remain: (1) the effect of intrinsic variability of spike occurrence on key statistical properties of the estimates, such as the variance, have not yet been thoroughly described; (2) how to extend rigorous statistical analysis of spike-filed coupling  in the context of the highly multivariate signals available with modern recording techniques, remains largely unaddressed. 


We address these two questions by using continuous time martingale theory (see \eg \citet{liptser2013statistics}), the related concept of stochastic integration (see \eg \citep{protter2005stochastic}) 
and Random Matrix Theory \citep{anderson2010introduction,capitaine2016SpectrumDeformedRandom}.
The Martingale Central Limit Theorem (CLT) allows us to derive analytically the asymptotic Gaussian distribution of a general family of coupling measure that can be expressed as stochastic integrals.
We exploit this general result to show that the classical univariate $\rm PLV$ estimator is also asymptotically normally distributed, and provide the analytical expression for its mean and variance. 
Furthermore, we study potential sources of bias for the commonly used \textit{von Mises} coupling model \citep{ashidaProcessingPhaseLockedSpikes2010}.
We then go beyond univariate coupling measures and analyze the statistical properties of a family of multivariate coupling  measures taking the form of a \emph{matrix} with stochastic integral coefficient. 
We characterize the jointly Gaussian asymptotic distribution of matrix coefficients, and exploiting Random Matrix Theory (RMT) principles to show that, after appropriate normalization,
the spectral distribution of such large matrices under the null hypothesis (of absence of coupling),
follows approximately the Marchenko-Pastur (MP) law \citep{marchenko1967distribution}, while the magnitude of the largest singular value converges to fixed value whose simple analytic expression depends only of the shape of the matrix.
We finally show how this result provides a fast and principled procedure to detect significant singular values of the coupling matrix, reflecting an actual dependency between the underlying signals. In the appendices, we included detailed proofs and background material on RMT and stochastic integration, such that non-expert readers can further  apply these tools in Neuroscience.


\section{Background}
\subsection{Spike field coupling in Neuroscience}
Although our results are relevant to a broad range of applications, 
within and beyond Neuroscience, we will use the estimation of spike-LFP coupling introduced above, as the guiding example of this paper.
Spikes convey information communicated between individual neurons. 
Such communications is believed to be encoded in the occurrence times of successive spike events, which are typically modeled with point processes
(\eg Poisson \citep{softky1993highly} or Hawkes process \citep{truccolo2016PointProcessObservations,kruminCorrelationbasedAnalysisGeneration2010}).

While oscillatory dynamics is ubiquitous in the brain and instrumental to its coordinated activity  \citep{Buzsaki2006,Buzsaki2013sbs,peterson2018HealthyOscillatoryCoordination},
it is often challenging to uncover based solely on the sparse spiking activity of recorded neurons.
On the other hand, LFPs often exhibit oscillatory components that can be isolated with signal processing tools (typically band-pass filtering or template matching), such that pairing the temporal information from LFPs and spiking activity can help extract reliable markers of neural coordination.

An example of coupling measure achieving such pairing 
is the Phase Locking Value ($\rm PLV$).
Given event (spike) times $\{t_j\}$ where $j \in \{1,2,\dots, N\}$ 
(where $N$ is number of spikes in the spike train)
and $\phi(t)$ the time-varying phase of a oscillatory continuous signal
which is typically a band-passed filtered LFP,
phase locking between these signals is estimated by the complex number 
\begin{equation}
  \label{eq:plvDef}
 \widehat{\rm PLV} = \frac{1}{N}\sum_{j = 1}^{N} e^{\boldsymbol{i}\phi(t_j)}\,,\mbox{ with } \boldsymbol{i}^2=-1\,.
\end{equation}
We use a hat notation to reflect that this quantity is \textit{empirical}: 
indeed, even if we assume a fixed $\phi$, 
the $\rm PLV$  depends on the specific values of event times $t_j$. 
In the present work, we will assume these points are drawn from a Poisson process, with a possibly time varying rate (inhomogeneous Poisson process), 
such that we can define a \textit{population} statistics that is a function of the point process population distribution instead of its empirical counterpart. 
We will then address under which conditions the empirical $\rm PLV$ reflects a true coupling between the rate of underlying point process and $\phi$.

\label{sec:spc}
\subsection{Counting process martingales}\label{sec:cpm}
In this paper, we use a continuous time framework leading to powerful results based on concise deterministic and stochastic integral expressions, which can trivially be approximated using discrete time signals in practice. A (continuous time) stochastic process $M = \{M(t); t \in [0, \tau]\}$ is a zero-mean martingale relative\footnote{Any martingale in this paper is zero-mean}
to the filtration $\{\mathcal{F}_t\}$ 
(which represents the past information accumulated up to time~$t$)
if (1)~$M(0)=0$, (2)~it is adapted to $\{\mathcal{F}_t\}$ 
(informally the law of $M$ up to time $t$ ``uses'' only past information up to $t$),
and (3)~it satisfies the martingale property
\begin{equation}\label{eq:martingaleDef}
E\left[M(t)|\mathcal{F}_s\right] = M(s),
\quad
\text{for all}
\quad
t > s\,.
\end{equation}

Consider now a (univariate) counting process
$\left\{(N(t),\mathcal{F}_t);\,t\geq 0\right\}$, counting the number of events that occurred up to time $t$, adapted to filtration
$\{\mathcal{F}_t\}$
\cite[Chapter 2]{aalen2008SurvivalEventHistory}.
Under mild assumptions, it has a Doob-Meyer decomposition
\begin{equation}
  \label{eq:cpm}
  N(t) = M(t)+\int_0^t \lambda(t)dt\,,
\end{equation}
where $\lambda(t)$ is a predictable process with respect to $\{\mathcal{F}_t\}$ called the intensity function,  
and $M(t)$ is a martingale, called the compensated counting process. 
Figure~\ref{fig:martingaleDecomp} shows an illustration of this decomposition for a Poisson process with sinusoidal intensity.
\begin{figure}
  \centering
  \includegraphics[width=.6\linewidth]{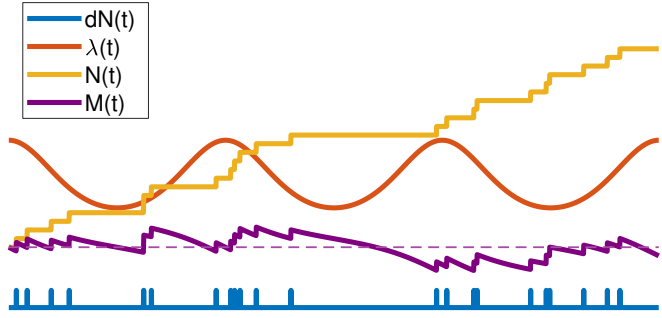}
  \caption{
    Doob-Meyer decomposition for an example inhomogenous Poisson process with oscillatory of rate $\lambda(t)$ of frequency $f = 1$ Hz, average firing rate $\lambda_0 = 5$ Hz (dashed line indicates the reference 0). See  section \ref{ssec:simulation_1d} for the detail of the simulation.
\label{fig:martingaleDecomp}
  }
  
\end{figure}


Consider now an empirical coupling measure $c$ between a (real or complex) predictable process $x(t)$ 
and $N(t)$ observed during time interval $[0,\,T]$, which takes the form of the stochastic integral (see e.g. \citet{protter2005stochastic})
\begin{equation}
  \label{eq:defC}
  \widehat{c} =\sum_{t_k<T} x(t_k) = \int_{0}^T x(t) dN(t)\,,
\end{equation}
where $\{t_k\}$ denote the jump times of the counting process 
(note that the $\rm PLV$ defined in Eq.(\ref{eq:plvDef}) is a normalized version of such coupling). 
The empirical coupling measure, $c$, can then be decomposed as
\begin{equation}\label{eq:doobmeyer}
\widehat{c} =  \int_{0}^T x(t) \lambda(t) dt + \int_{0}^T x(t) dM(t) \,.
\end{equation}
Interestingly, it can be shown that the second integral on the right hand-side is also a martingale  
(see \eg \citet[Theorem 18.7]{liptser2013statistics2}). 

In order to keep our results concise, 
we assume the following deterministic setting in the remainder of this paper 
(see section~\ref{sec:discussion} for potential extensions).
\begin{assum}\label{assum:deter}
  Assume the intensity function, $\lambda(t)=\lambda(t|\mathcal{F}_t)$ of $N(t)$, and the signal $x(t)$ are deterministic bounded left-continuous and adapted to $\mathcal{F}_t$ over $[0,\,T]$.
\end{assum}
Note this entails that $N(t)$ is a (possibly inhomogeneous) Poisson process 
\cite[Theorem 18.10]{liptser2013statistics2}. Under Assumption~\ref{assum:deter}, the terms of Eq.~(\ref{eq:doobmeyer}) separates the deterministic part from the (zero-mean) random fluctuations of the measure, that are integrally due to the martingale term. Using martingale properties, the statistics of the coupling measure are\footnote{See Appendix~\ref{app:Hmartingale} for more details.}
\begin{equation}\label{eq:uniCstats}
  c^* \triangleq \mathbb{E} \left[\widehat{c}\right] \!= \!\!\int_{0}^T \!\!\!\! x(t) \lambda(t) dt\quad \text{and}\quad \text{Var}[\widehat{c}] \!\! = \!\! 
  \mathbb{ E}\left[|\widehat{c}-c^*|^2\right]\!=\!\int_{0}^T \!\!\!\! |x|^2(t) \lambda(t) dt\,.
\end{equation}
In case $x(t)$ integrates to zero,  the \emph{expected coupling} $c^*$ reflects the covariation across time between $x(t)$ and the intensity of the point process up to random fluctuations.


\subsection{Random matrix theory}\label{sec:RMTbackground}
As datasets get increasingly high dimensional, 
it becomes important to replace the above univariate measure $\widehat{c}$ 
by a quantity that summarizes the coupling between a large number of units and continuous signals. 
This extention leads to assessing the spectral properties of a coupling matrix $\widehat{\textbf{C}}$ 
which gathers all pairwise measurements. 
However, such task is non-trivial due to the martingale fluctuations affecting $\widehat{\textbf{C}}$, 
leading to spurious non-zero coupling coefficients and can also hide the deterministic structure of the matrix associated to significant coupling.

Random matrix theory allows investigating the spectral properties of some matrices in noisy settings by studying their asymptotic spectral properties as dimensions grows to infinity. 
Any $(p\times p)$ complex Hermitian or real symmetric matrix $\boldsymbol{M}$ has a set of $p$ real eigenvalues $\{\ell_k\}$ 
(where we put several times the same eigenvalue in the set according to its multiplicity). 
One classically studied quantity is then the \textit{empirical spectral distribution} (ESD) 
(or \textit{empirical eigenvalue distribution}, 
see \eg \citet{mingo2017free,anderson2010introduction}) of the set of all eigenvalues $\{\ell_k\}$. 
ESD indistinctly refers
(with a slight abuse of language), 
to either the probability measure 
(also called \textit{spectral measure} in our case)
\[
\mu_{\boldsymbol{M}} (t) = \frac{1}{p}(\delta_{\ell_1}(t)+\cdots+\delta_{\ell_p}(t)),\, t\in \mathbb{R}\,,
\]
where $\delta_{\ell_k}$ is the dirac measure with unit mass in $\ell_k$,
or to its associated cumulative distribution
\[
F_{\boldsymbol{M}}(t) = \int_{-\infty}^{t} d\mu_M(s)\,.
\]
Seminal works by \citet{wigner55,wigner1958distribution}, \citet{marchenko1967distribution} and many others have established the convergence 
of the ESD of large random matrix ensembles (see Appendix~\ref{app:convergence} for the precise notions of convergence). 
In particular, for a sequence of matrices $\left\{\boldsymbol{X}_n\right\}_{n>0}$ of dimension $p\times n$ such that
 $\frac{p}{n}\underset{n\rightarrow+\infty}{\rightarrow} \alpha\leq 1$, 
with coefficients sampled i.i.d. from a (possibly complex) standard Normal distribution,
the ESD of the Wishart matrix $\textbf{S}_n=\frac{1}{n}\boldsymbol{X}_n \boldsymbol{X}_n^H$
(where $.^H$ indicates the transposed complex conjugate)
converges to the Marchenko-Pastur (MP) law $\mu_{MP}(x)$ \citep{marchenko1967distribution} with density
\begin{equation}
  \label{eq:mpLaw}
  \frac{d\mu_{MP}}{dx}(x) = \begin{cases}
    \frac{1}{2\pi \alpha x}\sqrt{(b-x)(x-a)}\,,&a\leq x\leq b,\\
    0\,,&\text{otherwise\,,}
  \end{cases}
\end{equation}
with $a=(1-\sqrt{\alpha})^2$ and $b=(1+\sqrt{\alpha})^2$. Additionally, the smallest and largest eigenvectors converge to $a$ and $b$, respectively. Importantly, these convergences also hold in the case $\alpha>1$, but Eq.(\ref{eq:mpLaw}) is modified to account for the rank deficiency of the Wishart matrix, imposing $p-n$ zero eigenvalues in the spectrum (see Section~\ref{sec:wishart} for details).

We will show that the martingale fluctuations of the coupling matrices also cause spectral convergence to the MP law, in absence of actual coupling between the signals. Recent results on the low rank Perturbation \citep{capitaine2016SpectrumDeformedRandom,loubaton2011almost,Benaych2012singular} of random matrices 
suggest we can exploit this convergence to further assess the significance of eigenvalues of the coupling matrix with respect to those purely resulting from random fluctuations.


\section{Assessment of univariate coupling}

\subsection{Mathematical formulation}

  We consider the setting of $K\geq 1$ independent trials of measurements on $[0,\,T]$ available
to estimate the coupling statistics by the trial average  
\[\widehat{\text{\rm c}}_K=\frac{1}{K}\sum_{k = 1}^K\int_0^T x(t)dN^{(k)}(t)\,,  
\]
where $\{N^{(k)}\}$ are $K$ independent copies of the process $N(t)$, associated to each trial. As this paper focuses on the statistical properties induced by the intrinsic variability of point process realizations, we assumed above that the continuous signal that does not change across trials. However, including some forms of variability across trials, such as random time shifts affecting all processes in the same way, would not affect the results, baring additional technical details.
 
We exploit a Central Limit Theorem (CLT) for martingales to show the residual variability 
(difference between empirically estimated $\widehat{\text{\rm c}}_K$ and the expected coupling $\text{\rm c}^*$ of Eq.(\ref{eq:uniCstats})) 
is asymptotically normally distributed.
We formally state it in Theorem~\ref{thm:1Dcoupling}.

\begin{thm}
  \label{thm:1Dcoupling}
  Assume $(\mathcal{F}_t,x(t),\lambda(t))$ satisfy Assumption~\ref{assum:deter}. Then,
  \[
    \mathbb{E}[\widehat{\text{\rm c}}_K]  \triangleq\text{\rm c}^* =  \int_0^T x(t)\lambda(t)dt \quad \text{and} \quad
    {\rm Var}[\widehat{\text{\rm c}}_K] = \frac{1}{K}\int_0^T x^2(t)\lambda(t)dt\,.
\]
Moreover, as the number of trials increases, fluctuations converge in distribution
\[    \sqrt{K}\left(\widehat{\text{\rm c}}_K-\text{\rm c}^*\right)\underset{K\rightarrow +\infty
    }{\longrightarrow}\mathcal{N}\left(0,\int_0^T x^2(t)\lambda(t)dt\right)\,.
  \]
\end{thm}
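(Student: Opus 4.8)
The plan is to separate the claim into an elementary moment computation and a Central Limit argument, both built on the single-trial statistics already recorded in Eq.(\ref{eq:uniCstats}). Writing $\widehat{\text{\rm c}}_K=\frac1K\sum_{k=1}^K\widehat{\text{\rm c}}^{(k)}$ with $\widehat{\text{\rm c}}^{(k)}=\int_0^T x(t)\,dN^{(k)}(t)$, the trials are i.i.d.\ copies satisfying Assumption~\ref{assum:deter}, so each $\widehat{\text{\rm c}}^{(k)}$ has mean $\text{\rm c}^*$ and variance $\int_0^T x^2(t)\lambda(t)\,dt$ by Eq.(\ref{eq:uniCstats}). Linearity of expectation then gives $\mathbb{E}[\widehat{\text{\rm c}}_K]=\text{\rm c}^*$, and independence across trials gives $\mathrm{Var}[\widehat{\text{\rm c}}_K]=K^{-2}\sum_{k}\mathrm{Var}[\widehat{\text{\rm c}}^{(k)}]=\frac1K\int_0^T x^2(t)\lambda(t)\,dt$.

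For the limiting distribution, I would realize the rescaled fluctuation as the terminal value of a single martingale and invoke a continuous-time (Rebolledo-type) martingale CLT. Applying the Doob--Meyer decomposition~(\ref{eq:doobmeyer}) trial by trial, the deterministic integrals collapse to $\text{\rm c}^*$ and leave
\[
\sqrt{K}\bigl(\widehat{\text{\rm c}}_K-\text{\rm c}^*\bigr)=\frac{1}{\sqrt{K}}\sum_{k=1}^K\int_0^T x(t)\,dM^{(k)}(t)=H_K(T),\qquad H_K(t):=\frac{1}{\sqrt{K}}\sum_{k=1}^K\int_0^t x(s)\,dM^{(k)}(s).
\]
As a scaled sum of independent stochastic integrals against compensated counting processes, $H_K$ is itself a martingale (Section~\ref{sec:cpm}).

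Next I would compute its predictable quadratic variation. Using $\langle M^{(k)}\rangle_t=\int_0^t\lambda(s)\,ds$ for counting-process martingales together with trial independence,
\[
\langle H_K\rangle_t=\frac{1}{K}\sum_{k=1}^K\int_0^t x^2(s)\,d\langle M^{(k)}\rangle_s=\int_0^t x^2(s)\lambda(s)\,ds,
\]
which is deterministic and, notably, independent of $K$; at $t=T$ it equals the target variance $\sigma^2:=\int_0^T x^2(s)\lambda(s)\,ds$. It then remains to verify the Lindeberg (asymptotic negligibility) condition on jumps: since $H_K$ jumps only at event times, by amounts $|x(t_k)|/\sqrt{K}\le\|x\|_\infty/\sqrt{K}$, for any fixed $\epsilon>0$ the indicators $\mathbbm{1}(|\Delta H_K|>\epsilon)$ vanish once $K$ is large, so the Lindeberg sum converges to zero. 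Here the boundedness of $x$ in Assumption~\ref{assum:deter} is precisely what guarantees negligibility. Rebolledo's theorem then yields $H_K\Rightarrow$ a Gaussian martingale with variance function $\int_0^t x^2\lambda\,ds$, whence $H_K(T)\to\mathcal{N}(0,\sigma^2)$.

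The main difficulty is not the limit theorem per se --- because the trials are i.i.d., the classical Lindeberg--L\'evy CLT already applies to the sum of the finite-variance variables $\widehat{\text{\rm c}}^{(k)}$ --- but rather the bookkeeping that makes the martingale route rigorous: correctly identifying the predictable bracket of $\int x\,dM^{(k)}$ through the counting-process compensator, and checking the conditional Lindeberg condition. I would present the martingale argument nonetheless, relegating the standard stochastic-integration facts to the appendix, because it is precisely the machinery that generalizes to the matrix-valued coupling measure analyzed later in the paper.
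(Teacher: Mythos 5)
Your proposal is correct and follows essentially the same route as the paper: the same martingale $\frac{1}{\sqrt{K}}\sum_k\int_0^\cdot x\,dM^{(k)}$, the same deterministic predictable variation $\int_0^t x^2\lambda\,ds$, and the same verification of the jump-negligibility condition via the uniform bound $\|x\|_\infty/\sqrt{K}$, concluding with the Rebolledo-type martingale CLT of \citet[Appendix~B]{aalen2008SurvivalEventHistory}. Your aside that the classical i.i.d.\ Lindeberg--L\'evy CLT would already suffice for the scalar case is accurate, and your stated reason for preferring the martingale formulation (it is the argument that generalizes to the matrix-valued setting) matches the paper's motivation.
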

\begin{proof}[Sketch of the proof]
  It relies on the decomposition of Eq.(\ref{eq:doobmeyer}). 
  As described in Appendix~\ref{app:Hmartingale}, 
  the martingale property is preserved by the stochastic integral term, 
  and allows us to exploit a martingale CLT to prove convergence to a Gaussian distribution.
\end{proof}
We can exploit Theorem~\ref{thm:1Dcoupling} 
to derive the asymptotic properties of the  $\rm PLV$ introduced in Section~\ref{sec:spc}.
For that, we adapt the empirical estimate of Eq.(\ref{eq:plvDef}) to the  $K$ trials setting introduced above and define
\begin{equation}\label{eq:multiTrialPLV}
  \widehat{\text{\rm PLV}}_K
  \!=\! \frac{1}{\sum_{k=1}^K N_k}\sum_{k=1}^K \sum_{j=1}^{N_k} e^{\boldsymbol{i} \phi(t_j^k)} \,,
\end{equation}
where $N_k$ is the number of events observed during trial $k$ and $\left\{t_j^k\right\}$ is the collection of the time stamps of these events. 
The specificity of this multi-trial estimate is to use a single normalization constant corresponding to the the total number of events pooled across trials
\footnote{This allows the normalization factor to converge to a deterministic quantity as $K\rightarrow+\infty$}.
For this estimate we get the following result.
\begin{corol}\label{corol:1DPLV}
  Assume $(\mathcal{F}_t,x(t)\!=\!e^{\boldsymbol{i}\phi(t)},\lambda(t))$ satisfy Assumption~\ref{assum:deter}, 
  where $\phi$ is real-valued and stands for the phase of the signal $x$. 
  Then the expectation of the $\rm PLV$ statistics $\widehat{\text{\rm PLV}}_K$ 
  estimated from $K$ trials of measurements on $[0,\,T]$ tends to the limit
  \begin{equation}
    \label{thePLV}
    \text{\rm PLV}^* \!=\! \int_0^T e^{\textbf{i}\phi(t)}\lambda(t)dt/\Lambda(T)\,,\quad 
    \text{with}\quad \Lambda(T)\!=\!\int_0^T \lambda(t)dt\,,
  \end{equation}
  Moreover, as $K\rightarrow +\infty$ the  residual,
  \begin{equation}\label{eq:residPLV}
    {\sqrt{K}}\left(\widehat{\text{\rm PLV}}_K-\text{\rm PLV}^*\right)\,,
  \end{equation}
  converges in distribution to a zero-mean complex Gaussian variable $Z$ 
  (\ie the joint distribution of real and imaginary parts is Gaussian), such that
  \[
    \text{\rm Cov}\!
    \left[
      \begin{matrix}
	\text{Re}\{Z\}\\
	\text{Im}\{Z\}
      \end{matrix}
    \right]\!=\!\frac{1}{\Lambda(T)^2}\int_0^T \!\!\! M(t)\lambda(t)dt\,,
    \,
    \text{ where}
    \,
    M(t)\!=\!\left[
      \begin{matrix}
	\cos^2(\phi(t))&\sin(2\phi(t))/2\\
	\sin(2\phi(t))/2&	\sin^2(\phi(t))
      \end{matrix}\right].
  \]
\end{corol}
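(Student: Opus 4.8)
The plan is to recognize $\widehat{\text{\rm PLV}}_K$ of Eq.~(\ref{eq:multiTrialPLV}) as a \emph{ratio} of two trial-averaged stochastic integrals of the type already handled by Theorem~\ref{thm:1Dcoupling}. Writing $A_K=\frac{1}{K}\sum_{k=1}^K\int_0^T e^{\boldsymbol{i}\phi(t)}dN^{(k)}(t)$ for the numerator and $B_K=\frac{1}{K}\sum_{k=1}^K N_k=\frac{1}{K}\sum_{k=1}^K\int_0^T dN^{(k)}(t)$ for the denominator, we have $\widehat{\text{\rm PLV}}_K=A_K/B_K$. Both are instances of $\widehat{\text{\rm c}}_K$ from Theorem~\ref{thm:1Dcoupling}, the first with complex integrand $x(t)=e^{\boldsymbol{i}\phi(t)}$ and the second with $x(t)\equiv 1$, so $A_K$ converges in probability to $\text{\rm c}^*=\int_0^T e^{\boldsymbol{i}\phi(t)}\lambda(t)dt$ and $B_K$ to $\Lambda(T)=\int_0^T\lambda(t)dt>0$. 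The continuous mapping theorem (the ratio being continuous away from $B=0$) then gives $\widehat{\text{\rm PLV}}_K\to \text{\rm c}^*/\Lambda(T)=\text{\rm PLV}^*$. Since $\widehat{\text{\rm PLV}}_K$ is a weighted average of unit-modulus complex numbers, $|\widehat{\text{\rm PLV}}_K|\le 1$ almost surely, so bounded convergence upgrades this to $\mathbb{E}[\widehat{\text{\rm PLV}}_K]\to\text{\rm PLV}^*$; the event $\{B_K=0\}$ on which the estimate is ill-defined has probability decaying to zero and is discarded.

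For the fluctuation result I would establish a bivariate central limit theorem for the real and imaginary parts of the numerator by the Cram\'er--Wold device, which lets me reuse Theorem~\ref{thm:1Dcoupling} directly: for fixed reals $\alpha,\beta$, the combination $\alpha\,\text{Re}(A_K)+\beta\,\text{Im}(A_K)$ equals $\widehat{\text{\rm c}}_K$ with the \emph{real} integrand $x(t)=\alpha\cos\phi(t)+\beta\sin\phi(t)$, hence is $\sqrt{K}$-asymptotically normal with variance $\int_0^T(\alpha\cos\phi+\beta\sin\phi)^2\lambda\,dt$. Expanding this quadratic form in $(\alpha,\beta)$ identifies the limiting covariance of $\sqrt{K}\big(\text{Re}(A_K)-\text{Re}(\text{\rm c}^*),\,\text{Im}(A_K)-\text{Im}(\text{\rm c}^*)\big)$ as $\int_0^T M(t)\lambda(t)dt$, with $M(t)$ exactly the matrix of the statement once $\cos\phi\sin\phi$ is rewritten as $\sin(2\phi)/2$. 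Dividing by the denominator and invoking Slutsky's theorem to replace $B_K$ by its limit $\Lambda(T)$ in the $1/\sqrt{K}$ scaling then yields the claimed complex Gaussian limit $Z$ with covariance $\frac{1}{\Lambda(T)^2}\int_0^T M(t)\lambda(t)dt$.

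The step I expect to be the main obstacle is the rigorous treatment of the ratio, since $A_K$ and $B_K$ are built from the \emph{same} compensated counting martingales and are therefore correlated. A fully rigorous delta method applied to $(a,b)\mapsto a/b$ at $(\text{\rm c}^*,\Lambda(T))$ would add to the $\frac{1}{\Lambda(T)}$-scaled numerator fluctuation a denominator term $-\frac{\text{\rm c}^*}{\Lambda(T)^2}\sqrt{K}\big(B_K-\Lambda(T)\big)$, whose contribution is governed by the cross-covariations $\int_0^T\cos\phi\,\lambda\,dt$ and $\int_0^T\sin\phi\,\lambda\,dt$ between the numerator and denominator integrands. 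The stated covariance $\frac{1}{\Lambda(T)^2}\int_0^T M(t)\lambda(t)dt$ is precisely what results when the pooled event count is treated as the deterministic constant $\Lambda(T)$ in the limit---the regime anticipated by the single-normalization choice and its footnote---so the remaining care lies in justifying this replacement as the intended simplification and in controlling the joint martingale fluctuations uniformly enough for the Slutsky step to apply.
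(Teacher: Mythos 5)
Your decomposition of $\widehat{\text{\rm PLV}}_K$ as a ratio $A_K/B_K$ of two trial-averaged stochastic integrals, handled by Theorem~\ref{thm:1Dcoupling} with $x=e^{\boldsymbol{i}\phi}$ for the numerator and $x=1$ for the denominator and then combined by Slutsky, is exactly the paper's route (the paper writes $\widehat{\text{\rm PLV}}_K=\nu_K\,\widehat{\text{\rm c}}_K$ with $\nu_K=1/B_K$). The only methodological difference is how joint normality of real and imaginary parts is obtained: you use the Cram\'er--Wold device on the real integrands $\alpha\cos\phi+\beta\sin\phi$, while the paper computes the $2\times 2$ predictable covariation matrix directly from Eq.~(\ref{eq:martinCov}); both give $\int_0^T M(t)\lambda(t)\,dt$. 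Your bounded-convergence argument for the expectation (using $|\widehat{\text{\rm PLV}}_K|\leq 1$) is also slightly more careful than the paper's, which passes from convergence in distribution to a constant to convergence of expectations without comment.

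The obstacle you flag in your last paragraph is genuine, and it is in fact a gap in the paper's own proof, not only in yours. The paper decomposes $\sqrt{K}(\widehat{\text{\rm PLV}}_K-\text{\rm PLV}^*)=\sqrt{K}\nu_K(\widehat{\text{\rm c}}_K-\text{\rm c}^*)+\sqrt{K}(\nu_K \text{\rm c}^*-\text{\rm PLV}^*)$ and asserts that the second term ``clearly vanishes'' because $\nu_K\rightarrow 1/\Lambda(T)$. But $B_K-\Lambda(T)$ is exactly of order $1/\sqrt{K}$ (its variance is $\Lambda(T)/K$ by Theorem~\ref{thm:1Dcoupling} with $x=1$), so $\sqrt{K}(\nu_K-1/\Lambda(T))$ converges to a nondegenerate Gaussian of variance $1/\Lambda(T)^3$ and the second term converges to a nonzero Gaussian whenever $\text{\rm c}^*\neq 0$. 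Carrying out the delta method you describe, with the cross-covariations $\mathrm{Cov}(\mathrm{Re}\,Z_A,Z_B)=\int_0^T\cos\phi\,\lambda\,dt=\mathrm{Re}\,\text{\rm c}^*$ and $\mathrm{Cov}(\mathrm{Im}\,Z_A,Z_B)=\mathrm{Im}\,\text{\rm c}^*$, yields a limiting covariance equal to the stated $\frac{1}{\Lambda(T)^2}\int_0^T M(t)\lambda(t)\,dt$ \emph{minus} the rank-one matrix $\frac{1}{\Lambda(T)^3}\,v v^\top$ with $v=(\mathrm{Re}\,\text{\rm c}^*,\,\mathrm{Im}\,\text{\rm c}^*)^\top$. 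The two agree precisely when $\text{\rm PLV}^*=0$ (e.g.\ the uncoupled, integer-period settings), but differ otherwise, so the covariance as stated requires either this restriction or the rank-one correction; you are right that replacing $B_K$ by the constant $\Lambda(T)$ inside the $\sqrt{K}$-scaled residual is the step that Slutsky alone cannot justify.
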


\begin{proof}[Sketch of the proof]
  This relies on applying Theorem~\ref{thm:1Dcoupling}
  to the real and imaginary parts of $e^{\boldsymbol{i} \phi(t)}$.
  In addition, the coupling between both quantities is taken into account by replacing the variance of univariate quantities $\widetilde{V}(t)$ in
  Theorem~\ref{thm:1Dcoupling} by a covariance
  matrix that can be assessed with martingale results given in Appendix~\ref{app:Hmartingale}.
\end{proof}
\begin{remk}\label{remk:sinCoupl}
	For the simple case of a $T/k$-periodic sinusoidal signal ($k$ integer), such that $\phi(t)=2\pi kt/T$, and a sinusoidal modulation of the intensity with phase shift $\varphi_0$ and modulation amplitude $\varkappa$ such that
	\[
	\lambda(t)=\lambda_0\left(1+ \varkappa \cos\left( \phi(t)-\varphi_0 \right)\right),\, \lambda_0>0,\,0\leq\varkappa\leq 1\,,
	\]
	we get easily with trigonometric identities that $\text{\rm PLV}^*=\frac{1}{2}\varkappa e^{\boldsymbol{i}\varphi_0}$ and the residual of Eq.(\ref{eq:residPLV}) converges to an isotropic complex Gaussian of total variance\footnote{The sum of the variances of real and imaginary parts.} $\frac{1}{\lambda_0 T}$. Such that the coupling strength $\varkappa$ affects the mean but not the variance of the PLV estimate. 
	
	Also, it is easy to see that if $\lambda(t)$ is modulated by a sine wave at a different integer multiple $m\neq k$ of the fundamental frequency $1/T$, such that $\lambda(t)=\lambda_0 + \varkappa \cos\left( 2\pi mt/T-\varphi_0 \right)$, the $\text{\rm PLV}^*$ vanishes and the residual's variance remains the same. These properties make PLV straighforward to interpret and test for sinusoidal coupling with a carefully chosen observation duration $T$. The reader can refer to  Assumption~\ref{assum:linearPhase} and  Corollary~\ref{corol:uniuncoupl_linearPhase} for a formal statement of this remark.
\end{remk}
We can use Corollary~\ref{corol:1DPLV}
to predict the statistics of PLV estimates for other models of phase-locked spike trains. A classical model uses the \textit{von Mises distribution} (also known as circular normal distribution) 
with parameter $\kappa\geq 0$ to model the concentration of spiking probability around a specified locking phase $\phi_0$
(for more details see \citet{ashidaProcessingPhaseLockedSpikes2010}). 
The original model uses a purely sinusoidal time series by assuming a linearly increasing phase $\phi(t)=2\pi f t $, where $f$ is the modulating frequency, to derive the intensity of an inhomogeneous Poisson spike train
\begin{equation}\label{eq:VmrateClass}   
  \lambda(t)=\lambda_0 \exp \left( \kappa\cos(\phi(t)-\varphi_0) \right) \,.
\end{equation}
resulting in an analytical expression for the asymptotic complex-valued PLV,
\[
  \text{\rm PLV}^* = 
  e^{\textbf{i}\varphi_0}\frac{\int_{0}^{\pi} \cos(\theta) 
    \exp(\kappa\cos(\theta))d\theta}{\int_{0}^{\pi} 
    \exp(\kappa\cos(\theta))d\theta}=e^{\textbf{i}\varphi_0}\frac{I_1(\kappa)}{I_0(\kappa)}\,,
\]
with the $I_k$'s denoting the modified Bessel functions of the first kind for $k$ integer 
(see \eg \citet[p.~376]{abramowitz1972handbook}):
\[
  I_k(\kappa)=\frac{1}{\pi}\int_{0}^{\pi} \cos(k\theta)  \exp(\kappa\cos(\theta))d\theta\,.
\]
Compared to the sinusoidal coupling described in Remark~\ref{remk:sinCoupl}, whose PLV magnitude can reach at most 1/2, this model can achieve arbitrary large PLV, which might explain why it is more frequently used in applications.

The following corollaries
(Corollary~\ref{corol:kappaplv} and Corollary~\ref{corol:uniuncoupl})
derive the asymptotic covariance of the variability of the $\rm PLV$ estimate around this theoretical value 
(which is novel to the best of our knowledge). 
Furthermore, the results are derived in a more general model setting accounting for ``biases''\footnote{They are biases in the sense that one would expect a coupling measure to vanish if there is no coupling in the data generating procedure.} due to non-linear phase increases $\phi(t)$,
and observation intervals that are not multiples of the modulating oscillation period.
It should be noted that the mentioned biases are inherent to the estimator's definition. 
They happen independently of additional biases originating from the phase estimation procedure
(\eg phase extraction via Hilbert transform, see \citet{kovach2017BiasedLookPhase}).

We thus assume a coupling, parameterized by $\kappa$ between a possibly non-linearly increasing phase $\phi(t)$ and a point process with intensity
\begin{equation}\label{eq:Vmrate}          
  \lambda(t)=\lambda_0 \exp \left( \kappa\cos(\phi(t)-\varphi_0) \right)
  \frac{d\phi}{dt}(t)\,.
\end{equation}
Note that for linearly increasing phases, this coupling amounts to the classical von Mises model of Eq.(\ref{eq:VmrateClass}). The additional factor 
$\frac{d\phi}{dt}(t)$ allows to preserve the analytical expression of $\rm PLV$ statistics even for non-linearly increasing phases, providing a novel generalization of the von Mises model
(see Corollary~\ref{corol:kappaplv_linearPhase} in Appendix~\ref{app:addColos} for a simplified version of Corollary~\ref{corol:kappaplv} assuming a linearly increasing phase $\phi(t)=2\pi f t $).

\begin{corol}
  \label{corol:kappaplv}
  Under the assumptions of Corollary~\ref{corol:1DPLV}, 
  assume additionally that $\phi(t)$ is continuous, 
  strictly increasing and piece-wise differentiable on $[0,\,T]$ 
  and the intensity of the point-process is given by Eq.~(\ref{eq:Vmrate}), 
  for a given $\kappa \geq 0$,
  then the expectation of the multi-trial {\rm PLV} estimate converges 
  (for $K\rightarrow +\infty$) to
  \begin{equation}\label{eq:PLVmeanGeneral}
    \text{\rm PLV}^*=\frac{\int_{\phi(0)}^{\phi(T)} e^{\textbf{i}\theta}
      \exp(\kappa\cos(\theta-\varphi_0))d\theta}{\int_{\phi(0)}^{\phi(T)}
      \exp(\kappa\cos(\theta-\varphi_0))d\theta}\,.
  \end{equation}
  If in addition $[0,\,T]$ corresponds to an integer number of periods of the oscillation,
  \begin{equation}\label{eq:classicPLVbessel}
    \text{\rm PLV}^* =	e^{\textbf{i}\varphi_0}\frac{\int_{0}^{\pi} \cos(\theta)
      \exp(\kappa\cos(\theta))d\theta}{\int_{0}^{\pi} \exp(\kappa\cos(\theta))d\theta}
    = e^{\textbf{i}\varphi_0}\frac{I_1(\kappa)}{I_0(\kappa)}\,,
  \end{equation}    
  and the scaled residual
  $\sqrt{K}\left(\widehat{\text{\rm PLV}}_K - \text{\rm PLV}^* \right)$
  converges to a zero mean complex Gaussian $Z$ with the following covariance:
  \begin{equation}
    \label{eq:covPLSPT}
    \text{\rm Cov}
    \left[
      \begin{matrix}
        \text{Re}\{Ze^{-i\varphi_0}\}\\
        \text{Im}\{Ze^{-i\varphi_0}\}
      \end{matrix}
    \right]=\frac{1}{2\lambda_0 (\phi(T)\!-\!\phi(0)) I_0(\kappa)^2}\left[
      \begin{matrix}
        I_0(\kappa)\!+\!I_2(\kappa) & 0\\
        0 & I_0(\kappa)\!-\!I_2(\kappa)
      \end{matrix}\right]\,.
  \end{equation}
\end{corol}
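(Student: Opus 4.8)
The plan is to specialize Corollary~\ref{corol:1DPLV} to the intensity of Eq.~(\ref{eq:Vmrate}), the whole purpose of the extra factor $\frac{d\phi}{dt}(t)$ being to enable a clean change of variables $\theta=\phi(t)$. Recall that Corollary~\ref{corol:1DPLV} already delivers $\text{\rm PLV}^*=\int_0^T e^{\boldsymbol{i}\phi(t)}\lambda(t)dt/\Lambda(T)$ together with the limiting Gaussian $Z$ whose real/imaginary covariance is $\frac{1}{\Lambda(T)^2}\int_0^T M(t)\lambda(t)dt$. Everything below is obtained by substituting $\lambda(t)=\lambda_0\exp(\kappa\cos(\phi(t)-\varphi_0))\frac{d\phi}{dt}(t)$ and integrating.

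For the mean, I would first plug this $\lambda$ into both the numerator and $\Lambda(T)$. Since $\phi$ is continuous, strictly increasing and piecewise differentiable, it is a bijection of $[0,T]$ onto $[\phi(0),\phi(T)]$, and the substitution $\theta=\phi(t)$, $d\theta=\frac{d\phi}{dt}(t)\,dt$ is valid on each differentiability piece; summing the pieces converts $\int_0^T(\cdot)\frac{d\phi}{dt}dt$ into $\int_{\phi(0)}^{\phi(T)}(\cdot)\,d\theta$. The common factor $\lambda_0$ cancels between numerator and denominator, yielding Eq.~(\ref{eq:PLVmeanGeneral}) directly. When $[0,T]$ spans an integer number $n$ of periods, $\phi(T)-\phi(0)=2\pi n$; shifting to $\psi=\theta-\varphi_0$ and using $2\pi$-periodicity of the integrand reduces every integral to $n$ copies of an integral over $[0,2\pi]$. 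The numerator factors as $e^{\boldsymbol{i}\varphi_0}\int_0^{2\pi}e^{\boldsymbol{i}\psi}e^{\kappa\cos\psi}d\psi$, whose imaginary part vanishes by parity, and the Bessel identity $\int_0^{2\pi}\cos(k\psi)e^{\kappa\cos\psi}d\psi=2\pi I_k(\kappa)$ turns the ratio into $e^{\boldsymbol{i}\varphi_0}I_1(\kappa)/I_0(\kappa)$, i.e. Eq.~(\ref{eq:classicPLVbessel}).

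For the covariance I would exploit that multiplying $Z$ by $e^{-\boldsymbol{i}\varphi_0}$ is a planar rotation, so $Ze^{-\boldsymbol{i}\varphi_0}$ is exactly the limiting fluctuation of the coupling built from $e^{\boldsymbol{i}(\phi(t)-\varphi_0)}$ rather than $e^{\boldsymbol{i}\phi(t)}$. Hence its covariance is again $\frac{1}{\Lambda(T)^2}\int_0^T M(t)\lambda(t)dt$, but with $\phi$ replaced by $\psi=\phi-\varphi_0$ throughout $M$. Performing the same change of variables, the diagonal entries become $\lambda_0\int_0^{2\pi n}\cos^2\psi\,e^{\kappa\cos\psi}d\psi$ and its $\sin^2$ analogue; expanding $\cos^2\psi=\tfrac12(1+\cos 2\psi)$ and $\sin^2\psi=\tfrac12(1-\cos 2\psi)$ and reapplying the Bessel identity gives $\pi n(I_0\pm I_2)$, while the off-diagonal $\tfrac12\int\sin(2\psi)e^{\kappa\cos\psi}$ vanishes by parity. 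Dividing by $\Lambda(T)^2=(\lambda_0(\phi(T)-\phi(0))I_0(\kappa))^2$ and simplifying with $\phi(T)-\phi(0)=2\pi n$ reproduces Eq.~(\ref{eq:covPLSPT}).

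The routine part is the Bessel bookkeeping; the step that needs genuine care is the change of variables combined with the rotation argument. Concretely, the substitution $\theta=\phi(t)$ must be justified piece-by-piece because $\phi$ is only piecewise differentiable, and it is precisely the factor $\frac{d\phi}{dt}$ in Eq.~(\ref{eq:Vmrate}) that makes the transformed integrals $\phi$-free and hence expressible through Bessel functions; without it the nonlinear warping of time would leave a residual $(\frac{d\phi}{dt})^{-1}$ weight and destroy the closed form. I would also verify carefully that the reduction to full periods—used for both Eq.~(\ref{eq:classicPLVbessel}) and Eq.~(\ref{eq:covPLSPT})—genuinely requires the integer-period hypothesis, since only then do the off-diagonal and imaginary contributions cancel exactly; for partial periods only the general Eq.~(\ref{eq:PLVmeanGeneral}) survives.
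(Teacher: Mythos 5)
Your proposal is correct and follows essentially the same route as the paper's proof: substitution of the intensity of Eq.~(\ref{eq:Vmrate}) into Corollary~\ref{corol:1DPLV}, the change of variables $\theta=\phi(t)$ enabled by the $\frac{d\phi}{dt}$ factor, the rotation by $e^{-\boldsymbol{i}\varphi_0}$ for the covariance, and reduction via $2\pi$-periodicity, parity, and the integral form of the modified Bessel functions. The only (welcome) refinement beyond the paper is your explicit remark that the substitution must be carried out piece-by-piece on the differentiability intervals of $\phi$.
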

\begin{proof}[Sketch of the proof]
  This is based on plugging the intensity function $\lambda(t)$ of Eq.(\ref{eq:Vmrate}) in Corollary~\ref{corol:1DPLV}.
  Using change of variable in the integrals ($\phi(t)$ to $\theta$) and exploiting the symmetries of the functions, 
  the integrals in the analytical expressions of the expectation and covariance turn into modified Bessel functions $I_k$ for $k$ integer.  
\end{proof}
The above result has important consequences on the assessment of $\rm PLV$ from data. 
In particular, it exhibits key experimental requirements for $\rm PLV$ estimates to match the classical Bessel functions expression of Eq.(\ref{eq:classicPLVbessel}). 
These are: 
(1) evaluate $\rm PLV$ on an integer number of periods 
(this is critical for trials with short duration), 
(2) take into account the fluctuations of the rate of increase of the phase $\phi(t)$ across the oscillation period. 
This second point is critical in applications where the phase is inferred from signals (such as LFPs) trough the Hilbert transform, 
as non-linearities of the underlying phenomena may lead to non-sinusoidal oscillations, 
with periodic fluctuations of the time derivative of the phase $\phi'(t)$. 
To further emphasize the consequences of this aspect, 
we also derive the asymptotic distribution of ${\rm PLV}$ for a
homogeneous Poisson process which corresponds to the special case $\kappa=0$ of the classical von Mises coupling of Eq.~(\ref{eq:Vmrate}). 
Although there is no actual coupling between events and the continuous signal in such case,\footnote{In the sense that we can generate the homogeneous spike train and the oscillation without parametric models that do not share any information} the
 non-linear phase increase leads asymptotically (for $K$ large) to a non-vanishing $\rm PLV$ estimate and to false detection of coupling.

\begin{corol}
  \label{corol:uniuncoupl}
  Under the assumptions of Corollary~\ref{corol:1DPLV},
  we assume additionally that the point process is homogeneous Poisson with rate $\lambda_0 $
  and that $\phi(t)$ is strictly increasing (almost everywhere) and differentiable on $[0,\,T]$. 
  Let $\theta \mapsto \tau(\theta)$ be its inverse function (such that $\tau(\phi(t))=t$).
  Then the expectation of $\widehat{\mbox{PLV}}_K$ converges (for $K\rightarrow +\infty$) to
  \begin{equation}
    \label{eq:theoPLVhomoSPT}
    \text{\rm PLV}^*=\frac{\int_{\phi(0)}^{\phi(T)} e^{\textbf{i}\theta}
      \tau'(\theta)d\theta}{\phi(T)-\phi(0)}\,,
  \end{equation}
  and the scaled residual,
  \[
    Z =  \sqrt{K}\left(\widehat{\rm PLV}_K - \rm PLV^* \right)\,,
  \]
  converges to a zero mean complex Gaussian:
  \begin{equation*}
    \sqrt{K}\left(\widehat{\rm PLV}_K - \rm PLV^* \right)
    \underset{K\rightarrow +\infty}
    {\longrightarrow}\mathcal{N}\left(
      \left[\begin{matrix}
          0\\
          0
        \end{matrix}\right], \rm Cov(Z)\right)\,,
  \end{equation*}
  with the following covariance
    \[
    \rm Cov(Z) = 
    \frac{1}{\lambda_0 T^2} \int_{\phi(0)}^{\phi(T)}
    \left[
      \begin{matrix}
	\cos^2(\theta)&\sin(2\theta)/2\\
	\sin(2\theta)/2&	\sin^2(\theta)
      \end{matrix}\right]
    \tau'(\theta) d\theta\,.
  \]
\end{corol}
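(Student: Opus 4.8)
The plan is to read this statement off Corollary~\ref{corol:1DPLV} by specializing to the constant intensity \(\lambda(t)\equiv\lambda_0\) and then transporting every integral from the time axis to the phase axis through the substitution \(\theta=\phi(t)\). First I would verify that the hypotheses of Corollary~\ref{corol:1DPLV}, and hence Assumption~\ref{assum:deter}, hold in this setting: the signal \(x(t)=e^{\boldsymbol{i}\phi(t)}\) satisfies \(|x(t)|=1\) and is continuous (so left-continuous) because \(\phi\) is differentiable, while the homogeneous rate \(\lambda_0\) is deterministic and bounded. Consequently the asymptotic normality of \(\sqrt{K}(\widehat{\mathrm{PLV}}_K-\mathrm{PLV}^*)\) and the time-domain expressions for \(\mathrm{PLV}^*\) and for the \(2\times 2\) covariance are inherited directly from Corollary~\ref{corol:1DPLV}; no new martingale central limit argument is required, and only algebraic simplification remains.

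Next I would perform the change of variables \(t=\tau(\theta)\), \(dt=\tau'(\theta)\,d\theta\), with limits running from \(\phi(0)\) to \(\phi(T)\). For the mean, the normalization evaluates to \(\Lambda(T)=\int_0^T\lambda_0\,dt=\lambda_0 T\), so the common factor \(\lambda_0\) cancels between numerator and denominator and
\[
\mathrm{PLV}^*=\frac{\int_0^T e^{\boldsymbol{i}\phi(t)}\,dt}{T}=\frac{\int_{\phi(0)}^{\phi(T)} e^{\boldsymbol{i}\theta}\,\tau'(\theta)\,d\theta}{\int_{\phi(0)}^{\phi(T)}\tau'(\theta)\,d\theta},
\]
where I have used \(\int_{\phi(0)}^{\phi(T)}\tau'(\theta)\,d\theta=\tau(\phi(T))-\tau(\phi(0))=T\) to write the normalization in the phase domain. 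For the covariance I would substitute \(\lambda(t)\equiv\lambda_0\) and \(\Lambda(T)=\lambda_0 T\) into the matrix integral of Corollary~\ref{corol:1DPLV}: one power of \(\lambda_0\) from the integrand cancels one of the two in \(\Lambda(T)^2=\lambda_0^2 T^2\), leaving the prefactor \(1/(\lambda_0 T^2)\), and the same substitution turns \(\int_0^T M(t)\,dt\) into \(\int_{\phi(0)}^{\phi(T)} M(\tau(\theta))\,\tau'(\theta)\,d\theta\). Since \(\phi(\tau(\theta))=\theta\), the entries of \(M(\tau(\theta))\) are precisely \(\cos^2\theta\), \(\sin(2\theta)/2\) and \(\sin^2\theta\), which reproduces the stated covariance matrix.

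The step I expect to be the main obstacle is justifying this substitution under the weak regularity assumed on \(\phi\), namely that it is differentiable and strictly increasing almost everywhere rather than a genuine diffeomorphism. I would argue that strict monotonicity makes \(\phi\) a continuous increasing bijection of \([0,T]\) onto \([\phi(0),\phi(T)]\), so the inverse \(\tau\) exists and is continuous and increasing, and that ``strictly increasing almost everywhere'' should be read as \(\phi'>0\) almost everywhere. Wherever \(\phi'>0\) the inverse function theorem gives \(\tau'(\theta)=1/\phi'(\tau(\theta))\); on the exceptional null set \(\tau\) may fail to be differentiable, but \(\phi'>0\) a.e.\ is what ensures that \(\tau\) is absolutely continuous, that \(\tau'\) is finite a.e.\ and integrable with \(\int_{\phi(0)}^{\phi(T)}\tau'=T\), and hence that the Lebesgue change-of-variables formula applies. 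Once this is secured, the two integrals are well defined and the remaining identifications are routine trigonometry.
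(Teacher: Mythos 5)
Your proposal is correct and follows the same route as the paper's own proof: specialize Corollary~\ref{corol:1DPLV} to the constant intensity $\lambda_0$, use $\Lambda(T)=\lambda_0 T$, and transport the mean and covariance integrals to the phase axis via $t=\tau(\theta)$, with your extra care about the almost-everywhere regularity of $\phi$ and the absolute continuity of $\tau$ being a justification the paper leaves implicit. One small observation: your (correct) normalization $\int_{\phi(0)}^{\phi(T)}\tau'(\theta)\,d\theta=T$ shows that the denominator $\phi(T)-\phi(0)$ in Eq.~(\ref{eq:theoPLVhomoSPT}) agrees with the derivation only when the phase advances by exactly $T$ over the observation window (as in the paper's worked example $\tau(\theta)=\theta+\epsilon\sin\theta$ on $[0,2\pi]$), so this minor discrepancy lies in the statement rather than in your argument.
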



\begin{proof}[Sketch of the proof]
  The result stems from using the intensity function $\lambda_0 $ in Corollary~\ref{corol:1DPLV}.
  Then using change of variable in the integrals and exploiting the symmetries of the functions.
\end{proof}

This corollary will be further illustrated in the next paragraphs.



\subsection{Application to bias assessment}\label{ssec:coro3theoPredict}
Corollary~\ref{corol:uniuncoupl} predicts scenarios
where in absence of modulation of spiking activity 
(having a constant intensity function $\lambda(t) = \lambda_0$) 
the expectation of the $\rm PLV$ estimates remain far from zero even when the number of trials is large
\ie the coupling between a homogeneous point process and a continuous
oscillatory signal would \emph{appear} significant and reflect a form of bias.
Corollary~\ref{corol:uniuncoupl} allows to compute this bias and therefore correct it.

One such case is when the observation interval is not an \emph{integer} number of oscillation periods.
To demonstrate it analytically, 
we can start from the $\rm PLV$ expectation with the constant intensity $\lambda_0$,
\begin{equation}
  \label{eq:plvOrgDef}
  {\rm PLV^*}
 =  \frac{\int_0^T e^{\boldsymbol{i}\phi(t)}\lambda(t)dt}{\int_0^T \lambda(t)dt}=\frac
    {\lambda_0 \int_{0}^{T} e^{\boldsymbol{i}\phi(t)} dt}
    {\lambda_0\int_{0}^{T} dt}=\frac{1}{T}\int_{0}^{T} e^{\boldsymbol{i}\phi(t)} dt
    \,.
 \end{equation}
Furthermore, we assume $\phi(t)$ has linear phase (Assumption~\ref{assum:linearPhase}):
$  \phi(t) =  2\pi f t$,
where $f$ is the frequency of oscillation of the continuous signal.
We then get:
\begin{align}
  {\rm PLV^*}
  & =\frac{1} {T} \int_{0}^{T} e^{\boldsymbol{i}2\pi f t} dt
 = \frac{1}{2\pi\alphaT\boldsymbol{i}} \left(e^{2\pi\alphaT\boldsymbol{i}} - 1\right)\,.
    \label{coro3thePred}
\end{align}
where $\alphaT =  Tf$
is the ratio of length of the time series ($T$) of signal to period of oscillation $\frac{1}{f}$.
As is noticeable in Eq.(\ref{coro3thePred}),
the coupling measure $\rm PLV^*$ is not zero when $\alphaT$ is not an integer number.
Notably, this bias affects both the magnitude and the phase of the $\rm PLV^*$ estimate.

Furthermore, even using an observation interval covering an integer number of periods,
non-linear increases in phase may lead to a non-vanishing $\rm PLV$.
This can be demonstrated with a simple example.
Again, we can start from original definition of $\rm PLV$ expectation (Eq.(\ref{thePLV})),
but now we do not assume the linearity of the phase. 
As introduced in Corollary~\ref{corol:uniuncoupl}, 
let $\theta \mapsto \tau(\theta)$ be the inverse of $\phi(t)$
and we use Eq.(\ref{eq:theoPLVhomoSPT}) to compute the $\rm PLV^*$.
We take a sinusoidal modulation over the oscillation period: 
$\tau(\theta)=\theta+\epsilon \sin(\theta)$ with $|\epsilon|<1$
\footnote{To guaranty the phase to be strictly increasing}. 
We thus get a non-vanishing asymptotic expected PLV:
\[
  \text{\rm PLV}^*=\frac{1}{2\pi}{\int_{0}^{2\pi} e^{\textbf{i}\theta} (1+\epsilon\cos(\theta))d\theta}=\epsilon{\int_{0}^{\pi} e^{\textbf{i}\theta} \cos(\theta)d\theta}=\epsilon/2\neq 0, \mbox{if } \epsilon\neq 0.
\]

Our theoretical framework can be used for developing methods to correct such biases.
In the linear phase setting, 
bias can be avoided simply by using an integer number of periods for coupling estimation.
Nevertheless, in the presence of non-linear phase evolution of the continuous signal, 
appropriate treatment is more challenging.
In this case, we can use the theoretical phase (if available) or its empirical estimate  to evaluate ${\rm PLV}^*$ under constant spike intensity assumptions with Eq.(\ref{eq:theoPLVhomoSPT})
and subtract this quantity to the estimated PLV.
For resolving issues that arise due to non-linearity of the estimated phase, 
specialized methods have been suggested.
For instance, \cite{hurtado2004phaseJump} dealt with phase jumps 
(which are a particular form of non-linearity) 
by interpolating the signal from the available data before and after the sudden change;
or \cite{cole2019cycle} introduce a \emph{cycle-by-cycle} method for analysis oscillatory dynamics.
In this method, they consider a linear phase for each detected cycle of oscillation.
Therefore, with this linear choice of phase, 
one can avoid the spurious coupling that can appear due to phase non-linearities.
Based on our framework, theoretically motivated methods that are not relying on the linearization of the phase can be developed.





\subsection{Simulations}\label{ssec:simulation_1d}
We demonstrate the outcome of our theoretical results using simulated phase-locked spike trains 
(similar to what has been introduced in Corollary~\ref{corol:kappaplv} and \ref{corol:kappaplv_linearPhase}) 
and sinusoidal oscillations.  
For generating phase-locked spike trains, 
we adopt the method introduced in \cite{ashidaProcessingPhaseLockedSpikes2010}. 
As the model has already been described elsewhere \citep{ashidaProcessingPhaseLockedSpikes2010} 
we restrict ourselves to a brief explanation.


To generate phase-locked or periodic spike trains based on the classical von Mises model
with rate $\lambda(t)$ as introduced in Eq.(\ref{eq:VmrateClass}), 
we use purely sinusoidal continuous signal $x(t)$ by assuming a linearly increasing phase $\phi(t)=2\pi f t $ with $f = 1 Hz$ and
various coupling strength ($\kappa$) 
(see Appendix~\ref{tb:parmasList} for lists of parameters used for each figure).
Based on this simulation we perform two numerical experiments 
to demonstrate the practical relevance of our (asymptotic) theoretical results.

\subsubsection*{Experiment 1}

In order to demonstrate the validity of Corollary~\ref{corol:kappaplv} and \ref{corol:kappaplv_linearPhase}, in Figure~\ref{fig:uniVarSim}
we show the empirical distribution of the normalized residual of the $\rm PLV$ estimate and compare it to its asymptotic theoretical distribution.  
We simulate two cases, one with homogeneous Poisson spike trains ($\kappa=0$) 
and one with phase-locked spike train ($\kappa = 0.5$) with Poisson statistics.
In both cases we observe the agreement between theory and simulation, 
as the joint distribution of real and imaginary part approaches an isotropic Gaussian. The slightly non-Gaussian shape of the real part histogram for $\kappa=0.5$ suggests however a slower convergence to the normal distribution in the case of coupled signals.

\begin{figure}
  \centering
  \includegraphics[width=.8\linewidth]{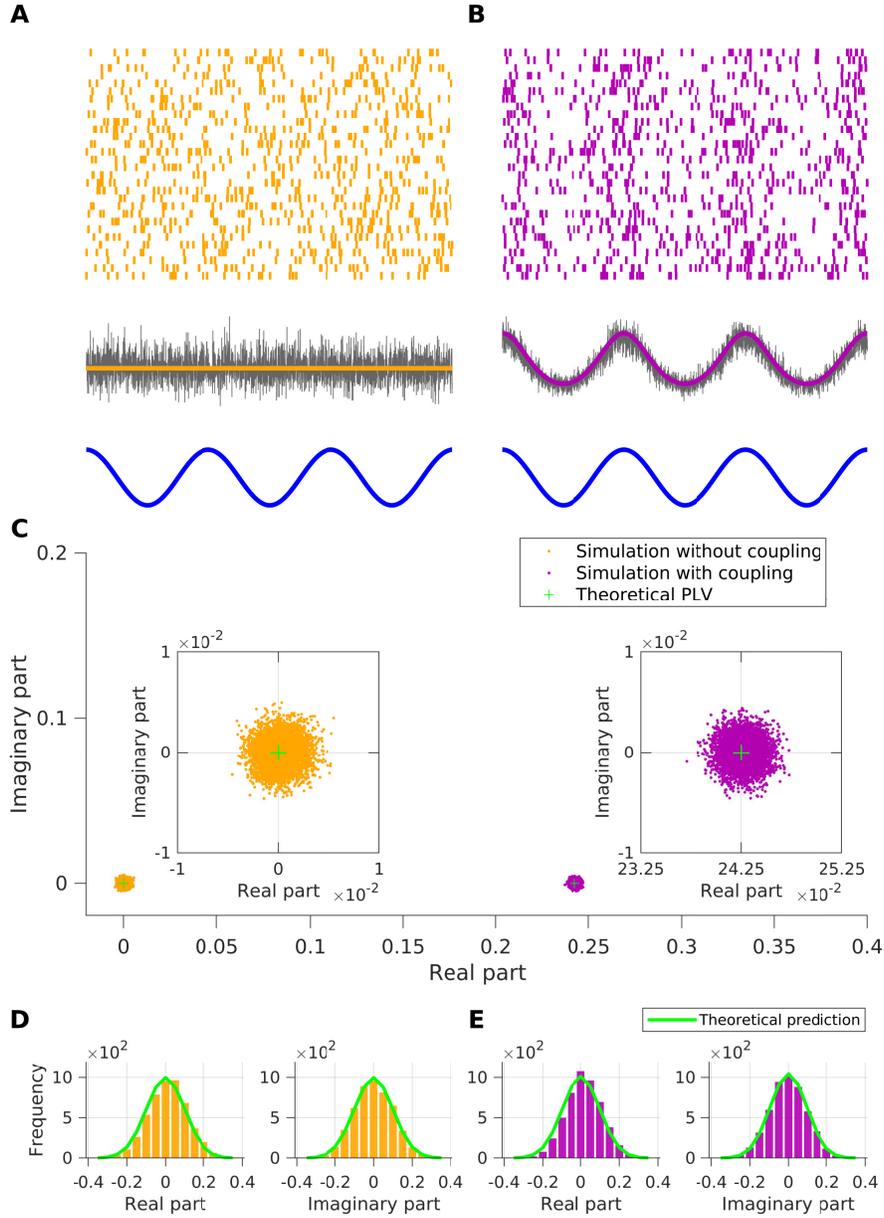}
  \caption{
    Simulation of
    (A) homogeneous Poisson spike trains and
    (B) phase-locked spike train  with Poisson statistics (von Mises model with $\kappa=0.5$).
    First row:
    example raster plot of the spikes.
    Second row:
    empirical firing rate (gray line) and
    ground truth firing rate (orange and purple trace).
    Third row:
    continuous signal $x(t)$.
    (C) Scatter plots represent the complex-valued $\rm PLV$s estimates.
    Each dot represents one realization of the simulation.
    Insets depict the zoomed version of both distributions.
    Green crosses indicate the theoretical complex-valued $\rm PLV$.
    (D-E)
    Histograms of real and imaginary parts
    for simulation (D) without coupling
    and (E) with coupling.
    Green lines indicate the theoretical predictions of corresponding distributions according to Corollary~\ref{corol:kappaplv} and \ref{corol:kappaplv_linearPhase}, 
    and the  bars indicate the empirical distributions.
    Note the subtle difference between real and imaginary part in (D) vs (E).
    See Table~\ref{table:figParam_uniVar} for parameters used for this figure. 
  }
  \label{fig:uniVarSim}
\end{figure}

\subsubsection*{Experiment 2}

We demonstrate an application of Corollary~\ref{corol:uniuncoupl} for bias evaluation with a simple simulation.
In section \ref{ssec:coro3theoPredict} was pointed out that  using a non-integer $fT$ ($T$ is not a multiple of the oscillation period)
can lead to spurious correlation between the point process and the oscillatory continuous signal.
By using Eq.(\ref{coro3thePred}) we can compute this bias.

We use a simulation similar to the one used in the previous experiment with an oscillatory signal and a homogeneous Poisson spike train ($\kappa = 0$) and investigate the coupling between these two signals.
If the length of the continuous signal is not an integer number of oscillation period
the $\rm PLV$ estimate has a non-zero empirical mean (see Figure~\ref{fig:coro3exp}A and B)
while when it is a multiple of number of oscillation period,
the estimate matches the ground truth (see Figure~\ref{fig:coro3exp}C).
In Figure~\ref{fig:coro3exp}D we compare the theoretical prediction and the numerical simulation for various length of the signals, showing this effect disappears when with an observation window covering a large number of oscillation periods.

\begin{figure}[ht]
  \includegraphics[width=\linewidth]{}
  \caption{
    (A-C) Distribution of simulated complex-valued $\rm PLV$s (gray dots),
    average of the simulated $\rm PLV$s (red circle),
    and theoretical prediction based on Eq.(\ref{coro3thePred}) (green crosses)
    for (A) $\alphaT = 0.75$, (B) $\alphaT = 0.5$, and (C) $\alphaT = 1$.
    All complex-valued $\rm PLV$s are represented in the complex plane.
    Angles indicate the locking phase and the radius the $\rm PLV$.
    (D) $\rm PLV$ for different interval lengths $T$.
    Boxplots represent the simulated $\rm PLV$s and the dashed green trace
    represents theoretical prediction of the expectation based on Eq.(\ref{coro3thePred}). 
    Vertical broken blue lines indicates integer number of oscillation period.
    See Table~\ref{table:figParam_coro3exp} for parameters used for this figure.
    %
    \label{fig:coro3exp}
  }
\end{figure}



\section{Assessment of multivariate coupling}\label{sec:multiCoupl}
As a natural extension of the scalar case discussed in the previous section,
we now consider the expected coupling matrix 
$\boldsymbol{C}^*$ 
between a $n$-dimensional vector of counting processes $\boldsymbol{N}$ 
with associated intensity vector $\boldsymbol{\lambda}(t)$ 
and a multivariate $p$-dimensional signal $\boldsymbol{x}(t)$, 
and its estimate based on independent trials $\widehat{\boldsymbol{C}}_K$, 
respectively defined as
\begin{equation}
  \label{eq:defEmpMultVarC}
  \boldsymbol{C}^* = \int_0^T \boldsymbol{x}(t)\boldsymbol{\lambda}(t)^\top dt\quad \text{and} \quad \widehat{\boldsymbol{C}}_K = \frac{1}{K}\sum_{k=1}^K\int_0^T \boldsymbol{x}(t)d\boldsymbol{N}^{(k)}(t)^\top\,.
\end{equation}
In this multivariate setting, 
the coupling matrix between the point process and continuous signal can be characterized by the singular value(s) of $\boldsymbol{C}^*$
\[
\sigma_1 \geq \sigma_2 \geq \dots \geq \sigma_{p} \geq 0\,,
\]
and associated orthonormal singular vectors $\{(\boldsymbol{u}_k,\boldsymbol{v}_k)\}$, such that 
\[
\boldsymbol{C}^* = \sum_{k = 1}^p\boldsymbol{u}_k\sigma_k\boldsymbol{v}_k^H\,.
\]
When the dimension of the coupling matrix gets large, recovering the entire structure of $\boldsymbol{C}^*$ using its estimate $\widehat{\boldsymbol{C}}_K$ becomes unlikely due to the fluctuations of individual coupling coefficients investigated in the previous section.  However, the largest singular values may remain reliably estimated because they correspond to a low rank structure of the matrix that stand out from the noise. Random matrix theory provides justifications for this approach by characterizing the spectral properties of ``noisy'' matrices. Up to a normalization explained later, this will involve indirectly characterizing the behavior of the empirical singular vectors $\{\widehat{\sigma}_k\}$ of the estimate matrix $\widehat{\boldsymbol{C}}_K$, by analyzing the the eigenvalues of the hermitian matrix
$\frac{1}{n}{\widehat{\boldsymbol{C}}_K}{\widehat{\boldsymbol{C}}_K}^H$ denoted
\[
\ell_1 \geq \ell_2 \geq \dots \geq \ell_{p} \geq  0\,.
\]
These are related to each other by the relation $\widehat{\sigma}_k = \sqrt{n\ell_k}$ for all $k$.

\subsection{Mathematical formulation}
\label{ssec:multiTheo}
We now replace Assumption~\ref{assum:deter} to adapt to this multivariate setting, 
restricting ourselves to the (null) hypothesis of no coupling between continuous signals and point process, 
reflected in an homogeneous Poisson process assumption. Let us denote $\bar{x}$ the complex conjugate of $x$ and $\delta$  the Kronecker delta symbol
\begin{equation}\label{eq:delta}
\delta_{lj} = \begin{cases}
1, \mbox{ if } l=j,\\ 0, \mbox{ otherwise.}
\end{cases}	
\end{equation}
\begin{assum}[Complex multivariate case]\label{assum:multi}
We consider an infinite sequence
$\{x_j(t)\}_{j\geq 1}$ 
of complex valued left-continuous
deterministic functions uniformly bounded on $[0,\,T]$ and assume 
\begin{itemize}
	\item[(1)] For all $i,j\geq 1$, 
$
	\frac{1}{T}\int_0^{T} \!\!\!\bar{x}_i x_j dt = \delta_{ij}\,\mbox{ and }\,\int_0^T \!\!\!x_i x_j dt=0\,.
$ 
\item[(2)] For all $i\geq 1$, $\int_0^T x_i dt=0$,
\item[(3)] There exist $0<\lambda_{\min}<\lambda_{\max}$ and a sequence of independent homogenous Poisson processes $\{N_i\}_{i\in \mathbb{N}^*}$'s with associated rates $\{\lambda_i\}_{i\in \mathbb{N}^*}$ in the interval $[\lambda_{\min},\,\lambda_{\max}]$.
\end{itemize}
\end{assum}
While the assumptions on $\{x_i(t)\}$ are designed for complex signals, which is the classical case when dealing with PLV-like quantities, the results of this section also hold for real signals by using assumption $\frac{1}{T}\int_0^{T} x_i x_j dt=\delta_{ij}$ instead of the above condition~(1). Condition~(2) is also added to ensure that there is no trivial bias leading to a non-vanishing expectation of the coupling coefficients (as noticed in a specific example in section~\ref{ssec:coro3theoPredict}). Indeed, when the time average of each signal vanishes, based on Theorem~\ref{thm:1Dcoupling}, the expectation of all univariate coupling measures for a homogeneous Poisson process vanish. We then exploit a multivariate generalization of the martingale CLT to characterize the distribution of the coupling matrix given these assumptions.
\begin{thm}
  \label{thm:multiclt}
  For a given $n,\,p\geq 1$, and all $K\geq1$ we use sequences of signals defined in Assumption~\ref{assum:multi} to build multivariate continuous signal $\boldsymbol{x}(t)=(x_j)_{j=1 \dots p}$ and
  $K$ independent copies of multivariate Poisson process $\boldsymbol{N}(t)=(N_i)_{i=1 \dots n}$ with rate vector $\boldsymbol{ \lambda}=[\lambda_1, \dots ,\lambda_n]^\top$. 
  Then the normalized coupling matrix $\sqrt{K}\widehat{\boldsymbol{C}}_K \text{diag}(\sqrt{T\boldsymbol{\lambda}})^{-1}$ of Eq.~(\ref{eq:defEmpMultVarC}) 
  converges in distribution for $K\rightarrow +\infty$ to a  
  matrix with i.i.d. complex standard normal coefficients.
\end{thm}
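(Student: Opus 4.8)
The plan is to reduce the matrix statement to a fixed-dimensional central limit theorem over the $K$ independent trials, using the martingale decomposition of Eq.~(\ref{eq:doobmeyer}) only to pin down first and second moments. First I would write the $(j,i)$ entry of the estimator of Eq.~(\ref{eq:defEmpMultVarC}) as $(\widehat{\boldsymbol{C}}_K)_{ji} = \frac{1}{K}\sum_{k=1}^{K}\xi_{ji}^{(k)}$ with $\xi_{ji}^{(k)} = \int_0^T x_j(t)\,dN_i^{(k)}(t)$, and apply the Doob--Meyer decomposition of Eq.~(\ref{eq:cpm}). Since each $N_i$ is homogeneous Poisson of constant rate $\lambda_i$, its compensator is $\lambda_i t$, so $\xi_{ji}^{(k)} = \lambda_i\int_0^T x_j\,dt + \int_0^T x_j\,dM_i^{(k)}$; condition~(2) of Assumption~\ref{assum:multi} forces $\int_0^T x_j\,dt = 0$, killing the deterministic part and leaving a pure martingale integral with zero mean, whence $\mathbb{E}[\widehat{\boldsymbol{C}}_K]=0$ and $\sqrt{K}\widehat{\boldsymbol{C}}_K = \sqrt{K}(\widehat{\boldsymbol{C}}_K - \mathbb{E}\widehat{\boldsymbol{C}}_K)$. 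The decisive structural remark is that the matrices $(\xi_{ji}^{(k)})_{j,i}$ are i.i.d.\ across $k$, so after centering this is a normalized sum of i.i.d.\ matrix-valued summands and I can invoke the ordinary multivariate CLT rather than a genuinely continuous-time martingale CLT.

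The second step is to compute the full second-order structure of one summand, which is where the remaining conditions of Assumption~\ref{assum:multi} enter through the isometry already used in Eq.~(\ref{eq:uniCstats}) together with the predictable covariation $\langle M_i,M_{i'}\rangle(t)=\delta_{ii'}\lambda_i t$. Writing $Z_{ji}=(T\lambda_i)^{-1/2}\xi_{ji}^{(k)}$ for the normalized entries, I would evaluate four quantities: the variance $\mathbb{E}[|Z_{ji}|^2]=(T\lambda_i)^{-1}\int_0^T|x_j|^2\lambda_i\,dt=1$ by the diagonal case of condition~(1); the cross-covariance for $i=i'$, $j\neq j'$, which is $T^{-1}\int_0^T\bar{x}_{j'}x_j\,dt=\delta_{jj'}=0$ by the off-diagonal case of condition~(1); the vanishing of every covariance with $i\neq i'$ by independence of the Poisson processes; and the pseudo-moments $\mathbb{E}[Z_{ji}Z_{j'i}]=T^{-1}\int_0^T x_j x_{j'}\,dt=0$, which the second identity of condition~(1) annihilates (including the pseudo-variance at $j=j'$). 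Collected together, these give $\mathbb{E}[Z_{ji}\overline{Z_{j'i'}}]=\delta_{jj'}\delta_{ii'}$ and $\mathbb{E}[Z_{ji}Z_{j'i'}]=0$, which is exactly the covariance fingerprint of a family of independent, unit-variance, circularly symmetric (proper) complex normals.

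To upgrade these moment identities to convergence in distribution I would stack the real and imaginary parts of all $pn$ entries into a real $2pn$-vector and apply the Cram\'er--Wold device: an arbitrary fixed real linear functional of this vector is again a normalized sum of centered i.i.d.\ real variables of finite variance (finiteness following from boundedness of the $x_j$ and $\lambda_i\leq\lambda_{\max}$), so the classical Lindeberg--L\'evy CLT yields scalar asymptotic normality, and hence joint Gaussianity of the limit. If one prefers to remain inside the paper's framework, the same conclusion follows from the multivariate martingale CLT applied to the martingale obtained by concatenating the $K$ trials, with increments controlled by the predictable covariation above. Since a Gaussian vector whose coordinates are pairwise uncorrelated has independent coordinates, the limit decouples into $pn$ independent proper complex standard normals, as claimed.

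I expect the main obstacle to be organizational rather than analytic: marshalling the four species of second moment (variance, pseudo-variance, and the two cross terms) over all index pairs $(j,i),(j',i')$ so that the limiting $2pn\times2pn$ real covariance is verified to be a scalar multiple of the identity matrix corresponding to proper complex normals, and keeping the complex conventions consistent --- in particular the role of the conjugation in condition~(1) for the genuine covariances and of the non-conjugated identity for eliminating all pseudo-moments. A secondary technical point is to justify the isometry and the exchange of expectation with stochastic integration for complex integrands against the real martingales $M_i^{(k)}$; this is immediate once each complex integral is split into real and imaginary parts and the deterministic, bounded, left-continuous nature of the $x_j$ granted by Assumption~\ref{assum:multi} is invoked.
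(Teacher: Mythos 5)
Your proof is correct, but it takes a genuinely different route from the paper. The paper vectorizes the matrix-valued process, writes it as a single stochastic integral $\int_0^t \boldsymbol{H}(s)\,d\boldsymbol{P}^\top(s)$ against the pooled compensated Poisson process with a block-diagonal integrand, and invokes the continuous-time multivariate martingale CLT of \citet[Appendix~B.3]{aalen2008SurvivalEventHistory}, verifying its two conditions via the predictable variation process. You instead observe that, with $n$ and $p$ fixed and the asymptotics taken in $K$ over i.i.d.\ trials, the normalized estimator is simply a centered, $\sqrt{K}$-scaled sum of i.i.d.\ matrix-valued summands, so the classical Lindeberg--L\'evy CLT via Cram\'er--Wold suffices; stochastic integration is used only to compute first and second moments (mean zero from Assumption~\ref{assum:multi}(2), covariances from the isometry and conditions (1) and (3)). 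Your route is more elementary and, notably, more explicit on one point the paper's proof glosses over: you separately verify that the pseudo-moments $\mathbb{E}[Z_{ji}Z_{j'i'}]$ vanish --- which is precisely what the second identity $\int_0^T x_i x_j\,dt=0$ of Assumption~\ref{assum:multi}(1) is for, and which is needed to conclude the limit is a \emph{proper} (circularly symmetric) complex Gaussian rather than merely one with identity Hermitian covariance. What the paper's martingale-CLT formulation buys in exchange is a functional statement (convergence of the whole process on $[0,T]$, not just its value at $T$) and a template that would survive asymptotics in $T$ or random predictable intensities, where the trial-wise i.i.d.\ structure is no longer the operative source of Gaussianity.
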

\begin{proof}[Sketch of the proof]
  This essentially uses a generalization of the CLT to multivariate point processes described in \citep[Appendix~B]{aalen2008SurvivalEventHistory}. 
Based on the statistics of stochastic integrals presented in Appendix~\ref{app:Hmartingale}, 
assumptions on $\boldsymbol{x}$ entail vanishing correlation between all matrix coefficients and lead to the analytical expression of the covariance matrix.
\end{proof}

This result suggests that for large $n$ and $p=p(n)$, coupling matrices $\widehat{\boldsymbol{C}}^n_K$ of increasing size can be used to build the Wishart-like matrix sequence
\begin{equation}\label{eq:wishartC}
\boldsymbol{S}_n \triangleq \frac{K}{n}\widehat{\boldsymbol{C}}^n_K
\text{diag}(T\boldsymbol{\lambda})^{-1}
(\widehat{\boldsymbol{C}}^n_K)^H
\end{equation}
whose ESD may converge to the \mpl. This is however not guaranteed by classical results due to the non-Gaussianity and dependence of the matrix coefficients of $\widehat{\boldsymbol{C}}^n_K$ for fixed $n$ and $K$. Convergence will thus depend on how much the departure from these assumptions plays a role as $n$ becomes large.
We show in the following Theorem that increasing the number of trials as a function of the dimension guaranties convergence to the MP law.
\begin{thm}\label{thm:cvMP}
In addition to Assumption~\ref{assum:multi}, 
assume an increasing, positive integer sequences $\{p(n),K(n)\}_{n\in \mathbb{N}^*}$ such that $\frac{p(n)}{n}\underset{n\rightarrow+\infty}{\longrightarrow}\alpha\in(0,+\infty)$, and
\begin{equation}\label{eq:4momCond}
\frac{1}{n^2 K(n)^2}\sum_{\Gamma}\left(\int_0^T\bar{x}_{j}{x}_{l}{x}_{j'}\bar{x}_{l'}dt\right)^2\rightarrow 0\,, \mbox{ uniformly in } k\leq n\,,
\end{equation}
where $\Gamma \!=\!\{(j,l,j',l'):1\leq j,l,j',l'\leq p\}\!\setminus\! \{(j,l,j',l'): j\!=\!j'\!\neq\! l\!=\!l' \mbox{ or }j\!=\!l'\!\neq\! j'\!=\!l\}$.
Consider the sequence $\{\widehat{\boldsymbol{C}}_{K(n)}^n\}_{n\in \mathbb{N}^*}$ built as in Theorem~\ref{thm:multiclt} for $p=p(n)$,
then the corresponding sequence $\{\boldsymbol{S}_n\}$ defined by Eq.(\ref{eq:wishartC})
has an ESD converging weakly with probability one to the MP law of Eq.(\ref{eq:mpLaw}).
\end{thm}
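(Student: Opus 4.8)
The plan is to prove the result by the method of moments: for each fixed integer $m\geq 1$ I would show that the normalized trace $\frac{1}{p}\mathrm{Tr}(\boldsymbol{S}_n^m)$ converges almost surely to the $m$-th moment of the MP law of Eq.(\ref{eq:mpLaw}), and then invoke the fact that this compactly supported law is determined by its moments to deduce almost sure weak convergence of the ESD (taking a countable intersection over $m$ of probability-one events). Writing $\boldsymbol{Y}_n=\sqrt{K}\,\widehat{\boldsymbol{C}}^n_{K}\,\mathrm{diag}(\sqrt{T\boldsymbol{\lambda}})^{-1}$ so that $\boldsymbol{S}_n=\frac{1}{n}\boldsymbol{Y}_n\boldsymbol{Y}_n^{H}$, the key preliminary observation is that Assumption~\ref{assum:multi}(1) forces the second-order structure of the entries to match the proper complex Gaussian case \emph{exactly}, for every finite $K$: the computation underlying Theorem~\ref{thm:multiclt} gives $\mathbb{E}[(\boldsymbol{Y}_n)_{jc}\overline{(\boldsymbol{Y}_n)_{j'c}}]=\delta_{jj'}$ and $\mathbb{E}[(\boldsymbol{Y}_n)_{jc}(\boldsymbol{Y}_n)_{j'c}]=0$, while entries in distinct columns are independent since the Poisson processes $\{N_i\}$ are. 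The only discrepancy from a genuine Wishart ensemble therefore lies in the higher joint cumulants of the entries, and the whole difficulty is to show these corrections are asymptotically negligible.

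First I would expand $\frac{1}{p}\mathbb{E}[\mathrm{Tr}(\boldsymbol{S}_n^m)]=\frac{1}{p\,n^{m}}\sum_{\mathbf{a},\mathbf{c}}\mathbb{E}\big[\prod_{s=1}^{m}(\boldsymbol{Y}_n)_{a_s c_s}\overline{(\boldsymbol{Y}_n)_{a_{s+1}c_s}}\big]$ (cyclically, $a_{m+1}=a_1$), with row indices $a_s\in\{1,\dots,p\}$ and column indices $c_s\in\{1,\dots,n\}$. Because distinct columns are independent and centered, the expectation factorizes over the distinct values taken by $\mathbf{c}$, and each factor is developed through the moment--cumulant relations; for Poisson stochastic integrals the joint cumulants are explicit integrals of products of the integrands against the intensity (the formulas underlying Appendix~\ref{app:Hmartingale}). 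The leading contribution comes from configurations in which every column index appears exactly twice and is contracted through the exact covariance $\delta_{jj'}$: the associated non-crossing pairings reproduce precisely the Narayana/Catalan combinatorics of the MP moments, with the ratio $\alpha=\lim p/n$ entering through the count of freely chosen row versus column labels. Crucially, thanks to the exact second-moment matching noted above, this part converges to $\int x^m\,d\mu_{MP}$ \emph{independently} of $K$.

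The heart of the proof is then to bound the remaining terms. Contributions in which some column index appears four times factor through a genuine fourth cumulant of the entries, which scales like $1/K$ and is proportional to an integral of the form $\int_0^T \bar{x}_{j}x_{l}x_{j'}\bar{x}_{l'}\,dt$; the excluded index set $\Gamma$ in Eq.(\ref{eq:4momCond}) is precisely the complement of the index patterns corresponding to the second-order (covariance) contractions already counted in the Gaussian term, so summing the squared magnitudes of these fourth-order integrals over $\Gamma$ with the normalization $n^{-2}K^{-2}$ is exactly the quantity that must vanish. I would therefore show that both the fourth-cumulant correction to each fixed moment and the variance $\mathrm{Var}\big(\frac{1}{p}\mathrm{Tr}(\boldsymbol{S}_n^m)\big)$ are controlled by (a power of) the left-hand side of Eq.(\ref{eq:4momCond}), using Cauchy--Schwarz to pass from linear sums of fourth-order integrals to the squared sums in the condition. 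Cumulants of order $r\geq 5$ are even more strongly suppressed, scaling like $K^{1-r/2}$, and are handled crudely via the uniform bounds $\sup_t|x_j(t)|<\infty$ and $\lambda_{\min}\leq\lambda_i\leq\lambda_{\max}$ of Assumption~\ref{assum:multi}, which make all entry cumulants uniformly bounded. A variance estimate summable in $n$ (again reducing to Eq.(\ref{eq:4momCond})) combined with the Borel--Cantelli lemma then upgrades convergence in expectation to almost sure convergence of each moment.

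I expect the main obstacle to be precisely the simultaneous control of the two limits $n\to\infty$ and $K=K(n)\to\infty$: classical Marchenko--Pastur theorems assume independent entries, whereas here the entries within each column are both dependent and non-Gaussian, becoming effectively Gaussian and independent only as $K$ grows. The delicate bookkeeping is to verify that, among the exponentially many index configurations contributing to the $m$-th moment, every configuration that is not a non-crossing pairing is dominated either by a negative power of $n$ (from over-constrained index loops, as in the classical argument) or by the fourth-moment quantity of Eq.(\ref{eq:4momCond}) (from the intrinsic non-Gaussianity of the point process), and that these two sources of smallness combine into an $o(1)$ total error.
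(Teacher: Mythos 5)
Your strategy is sound in outline but follows a genuinely different route from the paper. The paper does not redo the random-matrix combinatorics at all: it invokes Theorem~1.1 and Corollary~1.1 of \citet{bai2008large} (restated as Proposition~\ref{prop:cvMPdep}), which give MP convergence for sample covariance matrices with independent columns but dependent, non-Gaussian entries within columns via the Stieltjes transform, and the entire proof consists of verifying the four hypotheses of that result. The verification uses It\^o's formula for jump processes (Proposition~\ref{prop:fourthMom}) applied to the stochastic integrals $X_{lk}=\frac{1}{\sqrt{K\lambda_k T}}\int_0^T x_l\,dP_k$: condition~1 is the exact covariance identity you also noted, condition~2 follows from uniform boundedness, and condition~3 reduces, after the integral-product terms are killed by the orthogonality relations of Assumption~\ref{assum:multi}, to precisely the fourth-cumulant quantity of Eq.(\ref{eq:4momCond}), whose $\frac{1}{K\lambda_k T^2}\int_0^T\bar{x}_j x_l x_{j'}\bar{x}_{l'}\,dt$ scaling you correctly identified. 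So the structural inputs of the two arguments are identical (exact second-moment matching from Assumption~\ref{assum:multi}(1), column independence from the independence of the $N_i$, and $1/K$ decay of the fourth cumulants); what differs is that your method-of-moments plan would carry out the trace-expansion bookkeeping from scratch, whereas the paper delegates it wholesale to \citet{bai2008large}. Your route buys self-containedness and a transparent combinatorial explanation of why $\Gamma$ excludes exactly the Wick contractions; its cost is that the hardest step --- showing uniformly over the $m$-th moment expansion that every non-pairing configuration is dominated either by a negative power of $n$ or by Eq.(\ref{eq:4momCond}) --- is exactly what you defer to ``delicate bookkeeping.''

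One concrete weak point: Eq.(\ref{eq:4momCond}) asserts only convergence to zero, with no rate, so it cannot by itself feed a Borel--Cantelli argument for $\mathrm{Var}\bigl(\frac{1}{p}\mathrm{Tr}(\boldsymbol{S}_n^m)\bigr)$ as you propose. The almost-sure statement has to come from column independence alone (e.g.\ a martingale-difference decomposition over columns yielding a fourth central moment of order $n^{-2}$ for the normalized trace, which is how the Stieltjes-transform argument of \citet{bai2008large} obtains it); the moment conditions should only be used to identify the limit of the expectation. With the concentration step restructured in this way, your plan goes through.
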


\begin{proof}[Sketch of the proof]
  We use Theorem 1.1 of \citet{bai2008large} addressing the case of matrices with dependence of coefficients within columns.
  We use It\^o's formula (see Appendix~\ref{app:addBackground}) to check the simplified necessary conditions provided in Corollary~1.1 of \citet{bai2008large}.
  This implies convergence of the Stieltjes transform to the same function as the transform of the MP distribution. 
  By classical results on the Stieltjes transform
  \cite[Theorem~2.4.4]{anderson2010introduction}, 
  this implies  weak convergence to the MP measure
  (\ie convergence for the weak topology --- see Appendix \ref{app:convergence}).
\end{proof}

\begin{remk}
	Condition in Eq.(\ref{eq:4momCond}) determines how many trials are needed at most for spectral convergence. Due to the uniform boundedness assuption on signal $\boldsymbol{x}(t)$, and given the number of terms in the sum is bounded by $n^4$, we can already see that $\frac{n}{K(n)}\rightarrow 0$, i.e. having number of trials increasing at a even slightly faster rate than dimension,  is enough for convergence for any choice of continuous signals respecting orthonormality Assumption~\ref{assum:multi}(1). However, there are cases where even less trials than dimensions are required. An important example is the Fourier basis of the $[0,\,T]$ interval, $\boldsymbol{x}_l(t)=\exp(\boldsymbol{i}2\pi l t/T)$. Then all terms in the sum of Eq.(\ref{eq:4momCond}) vanish, except the ones satisfying $j-j'-l+l'=0$, such that we are left with a number of bounded terms that scale with $n^3$, as a consequence, the condition on the number of trials to achieve spectral convergence becomes $\frac{\sqrt{n}}{K(n)}\rightarrow 0$, such that we need increasingly less trials than dimensions.
\end{remk}

This convergence of the spectral measure to the MP law guaranties eigenvalues do not accumulate in a large proportion above the upper-end of the support of the MP law, however, they do no provide rigorous guaranties regarding convergence of  individual eigenvalues, and in particular, the largest eigenvalue. Although such convergence is satisfied in classical settings (Gaussian i.i.d coefficients), they typically require stronger assumptions than for the (weak) spectral convergence to the MP law, and still only very few results are available in the non-i.i.d. setting. We could however prove such convergence by adding a constraint to our model.
\begin{thm}\label{thm:evCvEnds}
In addition to Assumption~\ref{assum:multi}, assume all homogeneous rates $\lambda_k$ are equal.
Assume two increasing, positive integer sequences $\{p(n),K(n)\}_{n\in \mathbb{N}^*}$ such that 
\begin{equation}\label{eq:cvEVcondition}
\textstyle
\frac{p(n)}{n}{\rightarrow}\alpha\in(0,+\infty)\,\quad \mbox{and}\,\quad \frac{1}{K(n)}\sum_{1\leq i,k\leq p(n)}\int_{0}^{T}|x_i x_j|^2(t)dt<B\,,
\end{equation} for some constant $B$.
Then for the sequence $\{\widehat{\boldsymbol{C}}_{K(n)}^n\}_{n\in \mathbb{N}^*}$ built in Theorem~\ref{thm:multiclt} for $p=p(n)$, the corresponding sequence $\{\boldsymbol{S}_n\}$ defined by Eq.(\ref{eq:wishartC})
has an ESD converging weakly with probability one to the MP law of Eq.(\ref{eq:mpLaw}). Moreover, Let $\ell_1$ and $\ell_{p}$ the largest and smallest eigenvalues of $\{\boldsymbol{S}_n\}$, respectively, then in probability
\[
\ell_1(n)\rightarrow (1+\sqrt{\alpha})^2\quad\mbox{and}\quad\ell_p(n)\rightarrow (1-\sqrt{\alpha})^2\boldsymbol{1}_{\alpha<1}\,. 
\]	
\end{thm}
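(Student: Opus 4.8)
The plan is to use the extra equal-rate hypothesis to turn $\boldsymbol{S}_n$ into a genuine sample-covariance matrix with \emph{independent and identically distributed columns}, and then attack the two edges separately. With $\lambda_k\equiv\lambda$ we have $\text{diag}(T\boldsymbol{\lambda})^{-1}=(T\lambda)^{-1}I$, so $\boldsymbol{S}_n=\frac1n\boldsymbol{Y}_n\boldsymbol{Y}_n^H$ with $\boldsymbol{Y}_n=\sqrt{K/(T\lambda)}\,\widehat{\boldsymbol{C}}_{K}^n$, a $p\times n$ matrix whose $i$-th column is $\frac{1}{\sqrt{KT\lambda}}\sum_{k=1}^{K}\bigl(\int_0^T x_l\,dN_i^{(k)}\bigr)_{l\le p}$. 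Since the $N_i$ are independent copies of one rate-$\lambda$ Poisson process and the $x_l$ are fixed, these columns are i.i.d.; using Assumption~\ref{assum:multi}(1) together with the stochastic-integral variance of Eq.(\ref{eq:uniCstats}) one checks that each column has mean zero, covariance $I_p$, and vanishing pseudo-covariance. This is exactly the setting whose limiting spectral edges are $(1\pm\sqrt{\alpha})^2$, and the equal-rate assumption is precisely what upgrades the merely independent columns of the general model to i.i.d.\ ones, which is what edge results require.

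The spectral-measure part is then inherited from Theorem~\ref{thm:cvMP}: it suffices to verify that Eq.(\ref{eq:cvEVcondition}) implies the fourth-moment condition Eq.(\ref{eq:4momCond}). By Cauchy--Schwarz, $\bigl|\int\bar x_j x_l x_{j'}\bar x_{l'}\,dt\bigr|^2\le\bigl(\int|x_jx_l|^2\bigr)\bigl(\int|x_{j'}x_{l'}|^2\bigr)$, so that $\sum_{\Gamma}(\cdots)^2\le\bigl(\sum_{j,l}\int|x_jx_l|^2\bigr)^2\le(BK)^2$, whence $\frac{1}{n^2K^2}\sum_{\Gamma}(\cdots)^2\le B^2/n^2\to0$. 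This gives weak convergence of the ESD to the MP law, and in particular $\liminf_n\ell_1\ge(1+\sqrt{\alpha})^2$ and $\limsup_n\ell_p\le(1-\sqrt{\alpha})^2\boldsymbol{1}_{\alpha<1}$; the remaining content of the theorem is that no eigenvalue escapes the support.

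For the largest eigenvalue I would use the moment method. From $\ell_1^{m}\le\operatorname{tr}(\boldsymbol{S}_n^{m})$ and Markov's inequality, the claim $P(\ell_1>(1+\sqrt{\alpha})^2+\varepsilon)\to0$ reduces to a bound $\mathbb{E}\operatorname{tr}(\boldsymbol{S}_n^{m})\le p\,\bigl((1+\sqrt{\alpha})^2+o(1)\bigr)^{m}$ for a slowly growing $m=m(n)$. Expanding the trace over closed alternating paths, column independence factorizes the expectation over distinct column indices, and the joint within-column moments are computed with It\^o's formula exactly as in Theorem~\ref{thm:cvMP}. The decisive point is that each entry is a $K^{-1/2}$-normalized sum of $K$ i.i.d.\ stochastic integrals, so its cumulants of order $r\ge3$ carry a factor $K^{1-r/2}$; the fourth-order contributions to the trace moments are precisely the sums $\frac1K\sum_{l,m}\int|x_lx_m|^2$ that Eq.(\ref{eq:cvEVcondition}) keeps bounded, while higher orders are further suppressed. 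Consequently the leading combinatorics coincides with the Gaussian Wishart count, whose non-crossing pairings sum to $(1+\sqrt{\alpha})^{2m}$ up to polynomial factors, and with the ESD lower bound this yields $\ell_1\to(1+\sqrt{\alpha})^2$ in probability. I expect the path counting with dependent entries, and the bookkeeping showing that the non-Gaussian cumulant insertions do not accumulate multiplicatively as $m\to\infty$, to be the main obstacle.

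For the smallest eigenvalue the case $\alpha\ge1$ is immediate: eventually $p>n$, so $\boldsymbol{S}_n$ is rank deficient and $\ell_p=0=(1-\sqrt{\alpha})^2\boldsymbol{1}_{\alpha<1}$. For $\alpha<1$ the hard edge requires a genuine \emph{lower} bound on the smallest singular value of $\boldsymbol{Y}_n/\sqrt{n}$, which the moment method cannot deliver. Here I would run a ``no eigenvalues outside the support'' argument in the spirit of Bai--Silverstein: control $\frac1p\operatorname{tr}(\boldsymbol{S}_n-z)^{-1}$ and its fluctuations for real $z<(1-\sqrt{\alpha})^2$, again exploiting the near-Gaussianity granted by the large $K$ forced through Eq.(\ref{eq:cvEVcondition}) and the i.i.d.\ column structure, to exclude eigenvalues to the left of the edge. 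Combined with $\limsup_n\ell_p\le(1-\sqrt{\alpha})^2$ this gives $\ell_p\to(1-\sqrt{\alpha})^2$. This hard-edge control in the non-i.i.d.-entry regime is the most delicate ingredient of the whole statement.
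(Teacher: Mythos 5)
Your setup and the reduction to a sample-covariance matrix with i.i.d.\ isotropic columns are correct, and your Cauchy--Schwarz verification that Eq.(\ref{eq:cvEVcondition}) implies the fourth-moment condition Eq.(\ref{eq:4momCond}) of Theorem~\ref{thm:cvMP} is a valid (and slightly different) route to the ESD part. The problem is that the actual content of the theorem --- the convergence of $\ell_1$ and $\ell_p$ to the edges --- is precisely what you do not prove. For the upper edge you propose the trace/moment method and then explicitly defer ``the path counting with dependent entries'' and the control of non-Gaussian cumulant insertions as $m\to\infty$; for the lower edge you propose a Bai--Silverstein exclusion argument and again defer it as ``the most delicate ingredient.'' These are not routine gaps: the entries within a column of $\widehat{\boldsymbol{C}}_{K}^n$ are dependent stochastic integrals against the \emph{same} Poisson martingale, so the standard combinatorial trace estimates do not factorize entrywise, and carrying either program through here would be a substantial piece of work in its own right. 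As written, your argument establishes the ESD convergence, hence $\liminf\ell_1\ge(1+\sqrt\alpha)^2$ and $\limsup\ell_p\le(1-\sqrt\alpha)^2\boldsymbol{1}_{\alpha<1}$, plus the trivial rank-deficient case $\alpha>1$, but not the two limits claimed.

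The paper avoids both obstacles by invoking the result of \citet{chafai2018convergence} for empirical covariance matrices with i.i.d.\ isotropic columns, which delivers the MP law \emph{and} the convergence of both extreme eigenvalues to the edges of the support in one stroke, provided the columns satisfy a Strong Tail Projection property: for every rank-$r$ orthogonal projection $P$, $\mathbb{P}\left(\Vert P\boldsymbol{X}_n\Vert^2-r\ge t\right)\le g(r)r/t^2$ with $g(r)\to 0$. Verifying STP reduces, via Chebyshev's inequality, to bounding the variance of $\Vert P\boldsymbol{X}_n\Vert^2$ by a constant; that variance is computed with exactly the fourth-order stochastic-integral moments (Proposition~\ref{prop:fourthMom}) you also invoke, and the resulting bound $\frac{1}{K\lambda T^2}\sum_{k,l\le p(n)}\int_0^T|x_kx_l|^2\,dt$ is where hypothesis Eq.(\ref{eq:cvEVcondition}) enters. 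If you want to salvage your plan, the cleanest fix is to replace your two hand-rolled edge arguments with this citation and spend the effort on the STP verification (and on checking, as the paper does, that the cited result survives the passage from real symmetric to complex Hermitian matrices).
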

\begin{proof}[Sketch of the proof]
	The identical intensities allows us to use the result of \citep{chafai2018convergence} for matrices with i.i.d. columns. We first checked their proof holds also for the complex case by replacing symetric matrices by Hermitian matrices, and squared scalar product by absolute squared hermitian product.  We satisfy their Strong Tail Projection (STP) assumption using Chebyshev's inequality. The necessary fourth order moment conditions exploit the same stochastic integration results as Theorem~\ref{thm:cvMP}.
\end{proof}
\begin{remk}
	Without additional assumptions, the moment condition of Eq.(\ref{eq:cvEVcondition}) is satisfied by choosing $K(n)=n^2$ (as there are $p^2$ bounded moments, scaling as $n^2$ when $n$ grows).
	It is likely from the proof that taking into account more information about the moments of the continuous signal sequence $\{x_j\}$, we can achieve convergence with a lower rate of increase for the number of trials. This is left to future work.
\end{remk}
This result thus provides the guaranties  that under a null hypothesis of no coupling (due to homogeneity of the Poisson processes), the extreme eigenvalues of $S$ will asymptotically cover exactly the full support of the MP law. This will be used in section~\ref{sec:signifEV} to assess significance of the eigenvalues $\ell_k$ by simply checking whether the are larger than $(1+\sqrt{\alpha})^2$.

This significance analysis relies as well on understanding what happens to the eigenvalues when the model departs from the null hypothesis. In a practical setting, we hypothesize that the coupling matrix has a deterministic structure superimposed to the martingale noise modeled in the above results. One qualitative justification of this assumption can be found in Remark~\ref{remk:sinCoupl},  showing that for sinusoidal coupling, 
an non-vanishing expectation proportional to the coupling is superimposed to martingale noise whose distribution is unaffected by coupling, such that the noisy part of the matrix satisfies the conditions of the above theorems. 
As typically done in applications, 
we are mostly interested in low rank structure controlled by the largest singular values of the coupling matrix, providing an interpretable summary of the multivariate interactions.

This naturally leads to modeling departure from the null hypothesis with a low rank perturbation assumption. 
In such case, we assume that the eigenvalue related to significant coupling appear in the spectrum of the perturbed matrix, and can be isolated from the remaining eigenvalues associated to the martingale noise. 
This intuition is justified by results in the case of the Wishart ensemble 
\citep{loubaton2011almost} (see also \citet{Benaych2012singular} for a more general result and \citet{capitaine2016SpectrumDeformedRandom} for an overview of matrix perturbation results), that we restate here:
\begin{thm}[From \cite{loubaton2011almost}, Theorem~6]
  \label{thm:loubaton}
  Let $\boldsymbol{X}_n$ be a $n\times p$ the sequence of i.i.d. complex Gaussian matrices defined in section \ref{sec:RMTbackground}, 
  and $A_n$ be a finite rank perturbation of the null matrix with non zero eigenvalues $\theta_i$. 
  Let $\boldsymbol{M}_n =
  (\frac{1}{\sqrt{n}}\boldsymbol{X}_n + A_n)(\frac{1}{\sqrt{n}}\boldsymbol{X}_n+A_n)^H$. 
  Then as $n \rightarrow \infty$ and $\frac{p}{n} \rightarrow \alpha \in (0,1)$, almost surely,
  \[
    \lambda_i(M_n)\rightarrow 
    \begin{cases}
      \frac{(1+\theta_i)(c+\theta_i)}{\theta_i},\, \mbox{if } \theta_i>\sqrt{\alpha},\\
      (1+\sqrt{\alpha})^2,\, \mbox{otherwise}\,.
    \end{cases}
  \]
\end{thm}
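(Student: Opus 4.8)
The statement is a spiked-model (BBP-type) phase transition for an information-plus-noise matrix with a finite-rank deterministic term, so the plan is to localise the outlier eigenvalues of $M_n$ as zeros of a low-dimensional secular equation and then pass to the limit. Write $G_n = \frac{1}{\sqrt n}X_n$ for the noise part and $W_n = G_n G_n^H$ for the pure Wishart matrix, whose ESD converges to the MP law and which, by the exact-separation / no-eigenvalue-outside-the-support property of sample covariance matrices, has for large $n$ no eigenvalue outside a small neighbourhood of $[a,b]$. Let $A_n = \sum_i \sqrt{\theta_i}\,\boldsymbol a_i \boldsymbol b_i^H$ be a singular value decomposition, with orthonormal families $\{\boldsymbol a_i\}\subset\mathbb C^n$ and $\{\boldsymbol b_i\}\subset\mathbb C^p$, so that the perturbation $\Delta_n = M_n - W_n = G_n A_n^H + A_n G_n^H + A_n A_n^H$ has rank at most $2r$.

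First I would fix $\lambda > b$, outside the limiting support, so that for $n$ large $\lambda\notin\mathrm{spec}(W_n)$ and the resolvent $Q(\lambda) = (W_n - \lambda I_n)^{-1}$ is well defined. Then $\lambda$ is an eigenvalue of $M_n$ iff $\det(I_n + Q(\lambda)\Delta_n) = 0$, and since $\Delta_n$ is supported on the finite-dimensional span of $\{\boldsymbol a_i\}$ and $\{G_n\boldsymbol b_i\}$, the matrix determinant lemma collapses this to a $2r\times 2r$ equation $\det\mathcal M_n(\lambda) = 0$. The entries of $\mathcal M_n(\lambda)$ are the bilinear forms $\boldsymbol a_i^H Q(\lambda)\boldsymbol a_j$, $\boldsymbol a_i^H Q(\lambda)G_n\boldsymbol b_j$ and $\boldsymbol b_i^H G_n^H Q(\lambda)G_n\boldsymbol b_j$, weighted by the $\sqrt{\theta_i}$. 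This reduces an $n$-dimensional spectral question to the analysis of finitely many scalar functions of $\lambda$.

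The core step is to replace $\mathcal M_n(\lambda)$ by its deterministic equivalent, using concentration of resolvent bilinear forms (an anisotropic local law for $W_n$). For deterministic unit vectors one gets almost surely $\boldsymbol a_i^H Q(\lambda)\boldsymbol a_j \to m(\lambda)\,\delta_{ij}$, with $m$ the Stieltjes transform of the MP law; the mixed terms $\boldsymbol a_i^H Q(\lambda)G_n\boldsymbol b_j \to 0$, since they are odd in the Gaussian noise and self-average to zero; and $\boldsymbol b_i^H G_n^H Q(\lambda)G_n\boldsymbol b_j \to (1+\lambda\underline m(\lambda))\,\delta_{ij}$, using the push-through identity $G_n^H(G_nG_n^H-\lambda)^{-1}G_n = I_p + \lambda(G_n^HG_n - \lambda)^{-1}$ with $\underline m$ the companion Stieltjes transform of $G_n^HG_n$. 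Because the cross-vector ($i\neq j$) forms vanish, $\mathcal M_n(\lambda)$ becomes block diagonal in the limit, one $2\times 2$ block per singular value, and $\det\mathcal M_\infty(\lambda)$ factorises into scalar equations of the form $\theta_i\,\Psi(\lambda) = 1$, with $\Psi$ an explicit combination of $m(\lambda)$ and $1+\lambda\underline m(\lambda)$.

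Solving these scalar equations with the explicit MP transform, the admissible branch (valid for $\lambda > b$) defines a strictly increasing function $\Phi$ with $\Phi\big((b,+\infty)\big) = (\sqrt\alpha,+\infty)$; hence $\Phi(\lambda)=\theta_i$ has a solution $\lambda > b$ iff $\theta_i>\sqrt\alpha$, in which case $\lambda = \rho(\theta_i) := \frac{(1+\theta_i)(\alpha+\theta_i)}{\theta_i}$ (this matches the stated formula with $c=\alpha$, and $\rho(\sqrt\alpha) = (1+\sqrt\alpha)^2 = b$), while for $\theta_i\leq\sqrt\alpha$ no outlier solution exists. To upgrade the convergence of the secular function into convergence of genuine eigenvalues, I would combine a zero-counting argument (Hurwitz' theorem, using the almost-sure uniform convergence $\mathcal M_n\to\mathcal M_\infty$ on a contour around $\rho(\theta_i)$) to pin down the right number of eigenvalues near each super-threshold $\rho(\theta_i)$, with the no-eigenvalue-outside-the-support property of $W_n$, which forces the sub-threshold eigenvalues to stick to the edge $b=(1+\sqrt\alpha)^2$. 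The main obstacle I anticipate is exactly this last pair of ingredients: establishing the deterministic equivalents of the resolvent bilinear forms uniformly on a complex neighbourhood of $\lambda>b$, and the exact-separation statement. Both go beyond the weak convergence to the MP law exploited in Theorem~\ref{thm:cvMP} and Theorem~\ref{thm:evCvEnds}; they require an anisotropic local law for the Wishart resolvent and careful control of the mixed terms coupling $G_n$ and $A_n$, rather than merely the normalised trace of $Q(\lambda)$.
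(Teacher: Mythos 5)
The paper offers no proof of this statement: Theorem~\ref{thm:loubaton} is quoted (up to notation) from \citet{loubaton2011almost}, and the surrounding text explicitly leaves any rigorous adaptation to the paper's non-Gaussian, non-i.i.d.\ setting to future work. There is therefore no internal argument to compare yours against; what you have produced is a blind reconstruction of the proof in the cited literature.

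As such a reconstruction, your outline is sound and follows the standard route of \citet{Benaych2012singular} and \citet{loubaton2011almost}: reduce the outlier-eigenvalue condition to a $2r\times 2r$ secular equation via the matrix determinant lemma, replace the resolvent bilinear forms by their deterministic equivalents ($m(\lambda)\,\delta_{ij}$ for $\boldsymbol{a}_i^H Q(\lambda)\boldsymbol{a}_j$, zero for the mixed terms, $(1+\lambda\underline{m}(\lambda))\,\delta_{ij}$ for $\boldsymbol{b}_i^H G_n^H Q(\lambda)G_n\boldsymbol{b}_j$ via the push-through identity), solve the resulting scalar equations, and use a Hurwitz zero-counting argument together with exact separation to convert convergence of the secular function into convergence of eigenvalues. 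Your consistency checks are correct and worth keeping: the $\theta_i$ must be read as eigenvalues of $A_nA_n^H$ (squared singular values of $A_n$), the paper's $c$ in the displayed limit should be $\alpha$, and $\rho(\sqrt{\alpha})=(1+\sqrt{\alpha})^2$ glues the two regimes continuously at the threshold. The two points you flag as the main obstacle are indeed where essentially all of the work lies, and your sketch does not dispatch them: (i) almost-sure convergence of the anisotropic resolvent entries, uniformly on a contour around each candidate outlier location --- weak convergence of the ESD, which is all that Theorems~\ref{thm:cvMP} and~\ref{thm:evCvEnds} deliver, is not enough; and (ii) the no-eigenvalue-outside-the-support property of $W_n$, needed both to make $Q(\lambda)$ well defined for $\lambda>b$ and to pin the sub-threshold eigenvalues to the edge $b$. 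Since you identify these honestly rather than paper over them, this stands as a correct proof outline rather than a complete proof --- which is nonetheless more than the paper itself provides for this statement.
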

A demonstration that this further applies rigorously to our non-Gaussian, non-iid case is left to further work (but see \citet{Benaych2012singular} for a generalization in this direction).
This results shows the upper end of the MP support is indeed the critical threshold for the eigenvalues of $A_n$ to stand out from the noise. 
Below this threshold, the largest eigenvalue convergence to the upper end of the support of the MP distribution is not informative about $\theta_i$. 
Above this threshold, the value of $\theta_i$ can be recovered, 
and detected by comparing the largest eigenvalue to the upper end of the MP distribution.




We next illustrate the interest of these theoretical predictions
in the context of neural time series by reliably quantifying the interplay between multi channel LFP signals and the spiking of multiple neurons. Nevertheless, the results are potentially applicable in other domains as well.
In Neuroscience, $\boldsymbol{x}$ may represent LFP measurements collected on each recording channel, 
and $\boldsymbol{N}$ the spiking activity of different neurons, called units. 
The number of recording channels $n_c$ and recorded units $n_u$ correspond to $p$ and $n$ respectively. 
These number may differ, and as a consequence, the coupling matrix is generally rectangular.


\subsection{Application to significance assessment}\label{sec:signifEV}                        
In order to statistically assess the significance of the largest singular value(s) of 
coupling matrix $\widehat{\boldsymbol{C}}$ 
-- considered as a measure of coupling between point processes and continuous signals --
we need a null hypothesis.
Hypothesis testing based on generation of surrogate data is one of the common methods for significant assessment in Neuroscience and other fields.
Generating appropriate surrogate data can not only be challenging
(see \cite{grunDataDrivenSignificanceEstimation2009,elsayed2017structure} for examples in Neuroscience), 
but also computationally expensive due to increasingly large dimension of modern datasets.
Exploiting our theoretical results for this setting
allows us to perform such statistical assessment in a principled way, without using surrogate data, sparing computational resources.

In order to exploit the results of the theoretical part, it is best to preprocess the $p\times q$ matrix of time-discretized signals $\rm \textbf{L}$ that correspond to $q$ samples over interval $[0,\,T]$, with sampling interval $\Delta=T/q$. The chosen signals are driven by the application (in our case they are preprocessed LFPs, see section~\ref{ssec:simulation_multD} for a simulation reproducing the context of neurophysiolgy data). We assume the rows of $\rm \textbf{L}$ sum to zero to match Assumption~\ref{assum:multi}(2) (and avoid bias in the coupling measure similar to what is described in section~\ref{ssec:simulation_1d}). We then need to process further this signal such that Assumption~\ref{assum:multi}(1) is satisfied approximately. In order to achieve this, we perform classical whitening of the signals to generate matrix $\rm \textbf{X}$, the discrete time approximation of $\boldsymbol{x}(t)$, according to
\begin{equation}\label{eq:whiten}\displaystyle
{\rm \textbf{X} }= {\rm \textbf{W}\textbf{L}},\quad \mbox{ with } \quad {\rm \textbf{W}}=\left( \rm \frac{1}{q}\textbf{L}\textbf{L}^H\right)^{-\frac{1}{2}}\,,
\end{equation}
where the power in the expression of the whitening matrix ${\rm \textbf{W}}$ describes the inversion of a matrix square root, typically achieved via eigenvalue decomposition, and which may require PCA-like dimensionality reduction in practice to minimize the numerical effects of small eigenvalues. This procedure decorrelates the martingale fluctuations of coefficients within the same column of the coupling matrix (see Theorem~\ref{thm:multiclt}), a key requirement for convergence to the MP law.

%

As explained in section~\ref{ssec:multiTheo}, theoretical results support to use 
$
\theta_{DET} = (1+\sqrt{\alpha})^2\,.
$
 the upper end of the support of the MP law, as a detection threshold for the significance of the eigenvalues of the
hermitian matrix,
\begin{equation*}
\boldsymbol{S}_n=\frac{K}{n}\widehat{\boldsymbol{C}}^n_K
\text{diag}(T\boldsymbol{\lambda}_0)^{-1}
(\widehat{\boldsymbol{C}}^n_K)^H\,.
\end{equation*}

The null hypothesis of non-significance of the $k$-th largest singular value $\widehat{\sigma}_k$ of the normalized coupling matrix 
\begin{equation*}\label{eq:normCoupl}
\sqrt{K}\widehat{\boldsymbol{C}}^n_K
\text{diag}(\sqrt{T\boldsymbol{\lambda}_0})^{-1}
\end{equation*}
  should thus be rejected if,
the corresponding $k$-th largest eigenvalue $\ell_k$ of $\boldsymbol{S}_n$ is superior to the significance threshold, leading to the condition:
\begin{equation}
  \label{eq:gPLVsigThr}
  \widehat{\sigma}_k= \sqrt{n\ell_k} > \sqrt{n\theta_{DET}}=\sqrt{n}(1+\sqrt{\alpha})\,,
\end{equation}
and therefore, 
there is a significant coupling between the multivariate point process and continuous signal.
An illustration of our overall significance assessment approach is shown in Figure~\ref{fig:sigAssesDemo}.

\begin{figure}[hpt]
  \centering
  \includegraphics[width = \linewidth]{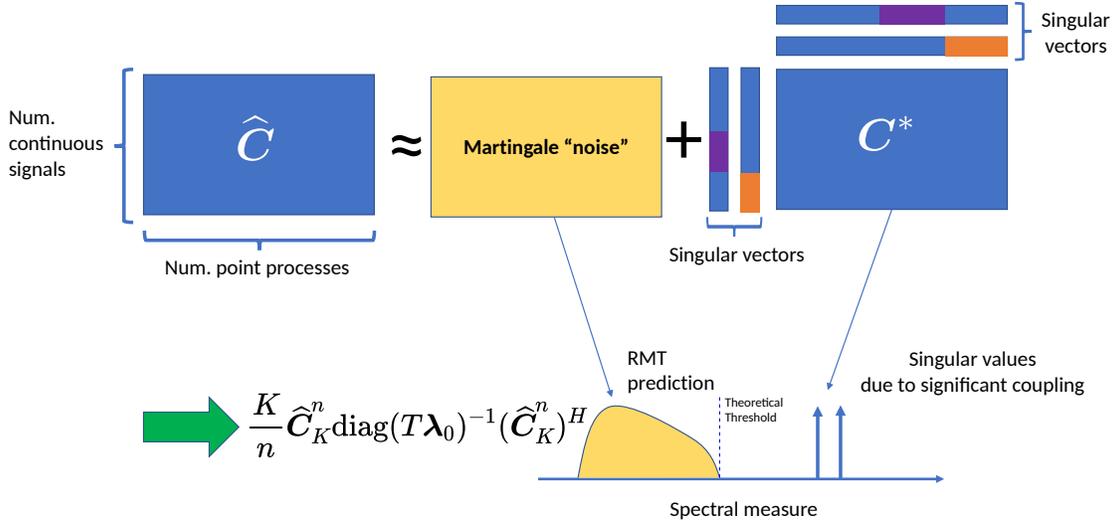}
  \caption{
    We assume $\widehat{\boldsymbol{C}}$ is a superposition of martingale noise and a low rank deterministic matrix $\boldsymbol{C}^*$ reflecting the actual coupling.  
    If the singular values of a normalized version of $\boldsymbol{C}^*$ are large enough (larger than the upper end of the MP law support), 
    theory suggests that they will
    correspond to the largest eigenvalues of $\boldsymbol{S}_n$ appearing beyond the support of MP distributed eigenvalues reflection martingale noise.
    They can thus be detected with a simple thresholding approach (see Eq.(\ref{eq:gPLVsigThr})).
    \label{fig:sigAssesDemo}
  }
\end{figure}

\subsection{Simulation}\label{ssec:simulation_multD}
\label{ssec:signif} We use a simulation to demonstrate the outcome of our (asymptotic) theoretical results on mutlivariate coupling.
Similar to the simulations of section~\ref{ssec:simulation_1d} for the univariate case,
we use simulated phase-locked spike trains with Poisson statistics. 
The main difference between this simulation and the previous one is in synthesizing the LFP. 
In order to simulate multi-channel oscillatory signals that lead to a low rank structure for $\boldsymbol{C}^*$
we use a combination of noisy oscillatory components. 

The LFPs contain $N_{osc}$ oscillatory groups of channels, each channel $l$ within the same group contains same oscillatory components with index $j(l)$, with the time course of all these components being
$O_j(t) = e^{2 \pi \boldsymbol{i} f_j t}$, $j \in \{1,\dots,N_{osc}\}$, with all frequencies $f_i$ comprised in
range $[f_{min},\,f_{max}]$, and all multiple of $1/T$. Due to necessary time axis discretization, the bracket notation $[t]$ indicates the oscillation is sampled at equispaced discrete times $t=\{k\Delta\}_{k=1, \dots, q}$. 
The synthesized discrete time multichannel LFP ($\boldsymbol{\Psi}[t] = \{\psi_l[t]\}_{l = 1, \dots, n_c}$)
can be written as
\begin{equation}
  \Psi_l[t] = O_{j(l)}[t] \odot \exp \left( \boldsymbol{i} \eta_l[t]\right)\,.
\end{equation}
with $\odot$ entrywise product and $\{\eta_l[t])\}$ i.i.d. sampled (white) phase noises contaminating each channel independently 
(see Appendix~\ref{app:circAddNoise} for more details).

In this simulation,
the frequency of the oscillatory component are ranging from 11-15 Hz. 
We used 100 LFP channels ($n_c = 100$) and different choices for the number of spiking units 
(10, 50, and 90).
Spiking activities are simulated in different scenarios, with and without coupling to the LFP oscillations.
In the latter case, 
we have 2 populations of neurons (each consisting in 1/5th of the total number of neurons) 
that are each coupled to one of the oscillatory groups of LFP channels.
Both populations are coupled to their respective oscillation with 
identical strength ($\kappa=0.15$) and phase ($\phi_0 = 0$).

To compute the coupling matrix $\widehat{\boldsymbol{C}}_K$, we first preprocess $\boldsymbol{\Psi}[t]$ by applying band-pass filtering in a range covering $[f_{\min},\,f_{\max} ]$, and convert it to an analytic signal via the discrete time Hilbert transform, leading to data matrix ${\rm \textbf{L}}$, following the standards of PLV analysis in Neuroscience \citep{Chavez2006}.

This signal matrix is then whitened according to Eq.(\ref{eq:whiten}) to yield matrix $\rm \textbf{X}$ the discrete time version of $\boldsymbol{x}(t)$.
The coupling matrix $\widehat{\boldsymbol{C}}_K$ is then computed according to Eq.(\ref{eq:defEmpMultVarC}) using 10 trials (barring trivial approximation to the closest time sample in $\rm \textbf{X}$).

Then in order to approximate the normalization $\sqrt{K}\widehat{\boldsymbol{C}}^n_K
\text{diag}(\sqrt{T\boldsymbol{\lambda}_0})^{-1}$ based on empirical data, we use the total number of events for unit $u$ occurring across all $K$ trials $N_{{\rm tot}}^u={\sum_{k = 1}^K N_k^u}$ , and multiply each column $u$ of the coupling matrix by 
\[
\frac{K}{\sqrt{N_{{\rm tot}}^u}}\approx \frac{K}{\sqrt{K\int_{0}^{T}\lambda_u(t)dt}}\,,
\] 
for the corresponding unit $u$, asymptotically matching the theoretical normalization in the homogeneous Poisson case.

We observe in Figure~\ref{fig:multVarSim}A that in the absence of coupling,
the distribution of eigenvalues originating from the random matrix structure is very close to the theoretically predicted MP  distribution,
and  Figure~\ref{fig:multVarSim}B where we have coupling between spike and 
eigenvalues reflecting the coupling  are  beyond the MP support 
(blue line in Figure~\ref{fig:multVarSim}),
and the eigenvalue bulk below the threshold is also close to MP distribution.
This suggest an easy thresholding approach for significance assessment.

\begin{figure}
  \centering
  \includegraphics[width=.85\linewidth]{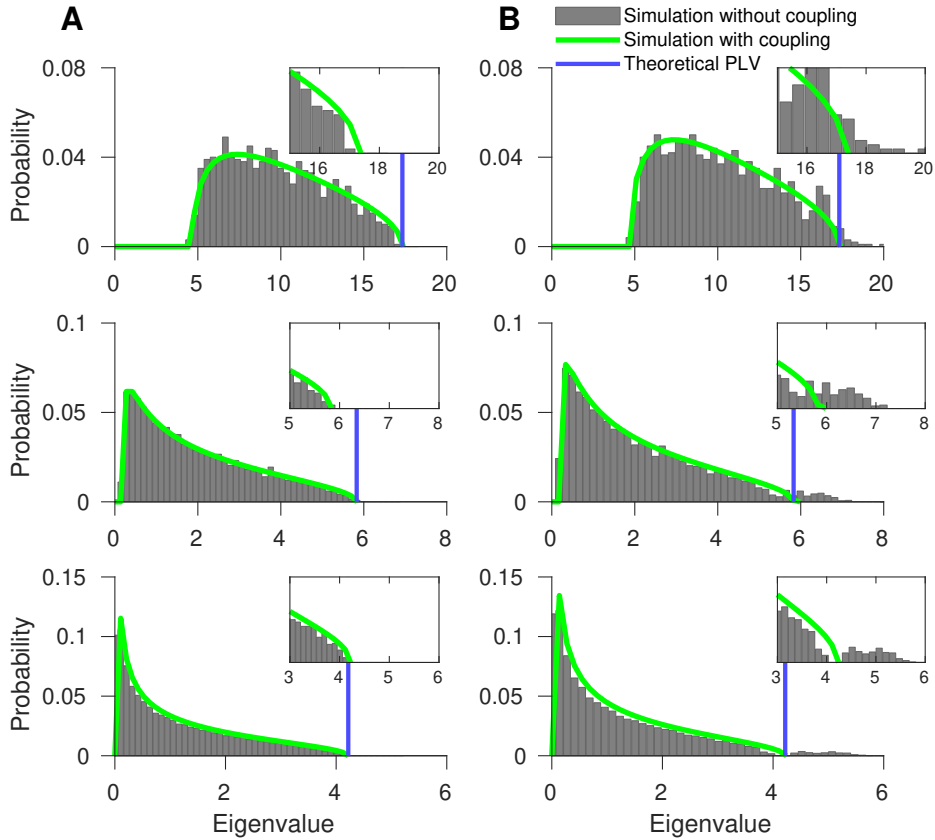}
  \caption{
    Theoretical Marchenko-Pastur distribution (green lines)
    and empirical distribution (gray bars)
    for (A) simulation without coupling ($\kappa=0$) and
    (B) with coupling  ($\kappa=0.15$) between multivariate spikes and LFP.
    Rows  are representing the spectral distribution of simulations
    with different number of spiking units,
    row 1, 2, and 3 respectively 10, 50, and 90
    (which leads to different $\alpha$ for MP law).
    Insets zoom the tail of the distributions.
    Parameters used for this figure are denoted in Tabel~\ref{table:figParam_multVarSim}
    \label{fig:multVarSim}
  }
\end{figure}




\section{Discussion}\label{sec:discussion}

\subsubsection*{Insights for data analysis}
Our theoretical results provide guaranties for specific coupling models to respect univariate and multivariate asymptotic statistics that can be easy exploited for statistical testing. The required assumptions provide guidelines for practical settings that are likely of interest beyond the strict framework that we imposed to get the rigorous results. For the univariate coupling measure, corollaries and simulations point out the importance of the choice of observation interval $[0,T]$, which is particularly sensitive when considering short intervals covering only few oscillation periods. This is the case when doing time resolved analysis or dealing with experiments with short trial duration. Moreover, the univariate results also emphasize the effect of non-linear phase increases, highly relevant in Neuroscience due to pervasive effects of non-linear dynamics in the mesoscopic signals. Our result provide asymptotic bias correction terms that can be used for statistical testing.

In the same way, theoretical results in the multivariate setting may seem to be constrained by our assumptions, but provide critical guidelines to interpret singular values. First, whitening the continuous signals and normalizing the coupling by the square root of the rate are key preprocessing steps to be able to make the asymptotic behavior of the martingale noise invariant to the specifics of the data at hand. This then reduces to an analytical model, the MP law, dependent only on a single matrix shape parameter. After assessment of the significance of the singular value of the normalized coupling matrix, it is of course possible to revert these preprocessing steps to get a low rank approximation of the original coupling matrix (non-normalized, non-whitened) to summarize significant coupling structure in an interpretable way.
A second insight provided by the multivariate results is the role of ``fourth order moments'' of the continuous signals, represented by the integrals of order four monomials of components of $\boldsymbol{x}(t)$, in the MP convergence results. The magnitude of these moments determines the amount of trials asymptotically needed to achieve convergence. Since these moments can be estimated empirically, we can check how they grow with the dimension of the signals in a specific application. With our minimal assumptions on the signals, the number of trials need only to grow at most sublinearly in the dimension for spectral convergence; however, we could only show that convergence of the largest eigenvalues requires at most quadratic increase in the dimension $n$. This last result might be improved in furture work, with extra assumptions, to reach linear growth.

Our theoretical results can be extended in two direction in future works.
The first is toward exploiting point processes different from inhomogeneous Poisson
(\eg Hawkes process) in order to be able to apply the framework in applications where the process intensities are stochastic.
The second direction is toward exploiting recent developments in Random Matrix Theory, 
in order to develop a probabilistic significance assessment. 

\subsubsection*{Extension of signal assumptions}
Our theoretical results assume deterministic continuous signals and point process intensities (see Assumption~\ref{assum:deter}). This entails limitations, such as implicitly assuming the considered point processes are (homogeneous or inhomogeneous) Poisson processes. This assumption may be too restrictive in realistic scenarios
(\eg see  \citet{deger2012statistical,reimer2012modeling,nawrot2008measurement,shinomoto2003differences,maimon2009beyond,shinomoto2009relating}
for examples in Neuroscience).
However, the stochastic integration methods that provide the basis of our results allow the treatment of random signals and intensities, provided they are predictable, which encompasses a wide enough class of processes to cover most applications \citep{protter2005stochastic}.  
Most of our results thus have straightforward generalizations (1) to the case of random continuous signal, with the differences that the variance of the estimates would increase due to the additional variability induced by the signal fluctuations, (2) to the case of random intensities, but the expressions obtained would depend on the statistical properties of $\lambda(t)$, which may or may not have simple analytical expressions.
As a potential direction for extension of the framework,
Hawkes process \citep{hawkes1971point} is a point process
wherein the probability of occurrence of future events can also depend on the sequence of events happened in the past.
Indeed, due this history dependency it is also called a self-exciting process.
Hawkes process is being used for modeling recurrent interactions in various fields,
for instance in finance it is used to model buy or sell transaction events on stock market \citep{embrechts2011} 
in geology to model the origin times and magnitudes of earthquakes \citep{ogata1988},
in online social media to model user actions over time \citep{rizoiu2017}
and even modeling reliability of information on the web and controlling the spread misinformation
\citep{tabibian2017distilling,kim2018leveraging}, and
in Neuroscience to model spike trains \citep{kruminCorrelationbasedAnalysisGeneration2010}.
If we recast our counting process $N(t)$ (Eq.(\ref{eq:cpm}))
to incorporate the history dependency,
in principle we should be able to extend our theoretical results beyond Poisson statistics.

\subsubsection*{Extension beyond binary significance assessment}

We show that the Marchenko-Pastur distribution provides a good approximation of the distribution of eigenvalues in absence of coupling, and the  upper end of its support approximates the largest eigenvalue. This provides us a threshold to assess the significance of empirical singular values. Nevertheless, this hard thresholding approach  does not take into account the actual fluctuations of the largest eigenvalue around this upper end of the support, and thus does not provide of meaningful p-value for the statistical test.

It has been shown that the appropriately rescaled and recentered
\footnote{
  Required recentering and rescaling of the eigenvalues is elaborated
  in \cite{johnstoneDistributionLargestEigenvalue2001,karouiLargestEigenvalueWishart2003,karouiTracyWidomLimit2007}
}
largest eigenvalue of Wishart matrices is asymptotically distributed as the Tracy-Widom distribution
(for example see
\cite{johnstoneDistributionLargestEigenvalue2001,tracyDistributionFunctionsLargest2002,karouiLargestEigenvalueWishart2003,karouiRecentResultsLargest2005,karouiTracyWidomLimit2007},
however note that in some cases of practical relevance, the normal distribution might be more appropriate \citep{baiCentralLimitTheorems2008}).
Such asymptotic distribution of the largest eigenvalue can be exploited for reporting a theoretical p-value for the significance of the coupling
and therefore extending the significance assessment from a binary decision to a probabilistic one.
For example, \cite{kritchmanNonParametricDetectionNumber2009} exploit this idea 
(but in a simpler scenario) to determine the number of the signal component in noisy data.
This extension would allow a precise probabilistic assessment of the significance of weaker couplings leading to eigenvalues in the neighborhood of the asymptotic threshold introduced above.
%




\section*{Conclusion}
We investigated the statistical properties of coupling
measures between continuous signals and point processes.
We first used martingale theory to characterize the distributions univariate coupling measures such as the
 $\rm PLV$, 
then, based on multivariate extensions of this result and Random Matrix Theory,
we establish predictions regarding the null distribution of the singular values of coupling matrices between a large number of point processes and continuous signals, and a principled way to assess significance of such multivariate coupling.
These theoretical results build a solid basis for the statistical assessment of such coupling in applications dealing with high dimensional data.



\subsection*{Acknowledgments}
We are very grateful to Afonso Bandeira and Asad Lodhia
for fruitful discussions at the beginning of the project.
We thank Edgar Dobriban for pointing us to \citet{baiCentralLimitTheorems2008};
Joachim Werner and Michael Schnabel for their excellent
IT support.
This work was supported by the Max Planck Society.

\bibliographystyle{APA}
\bibliography{ploscb2018_gpla,zotlib}
\newpage
\appendix
\makeatletter
\renewcommand{\thesubsection}{\thesection.\arabic{subsection}}
\renewcommand{\thesubsubsection}{\thesubsection.\arabic{subsubsection}}
\makeatother

\section{Proofs of main text theorems}\label{app:proof}
\begin{proof}[Proof of Theorem~\ref{thm:1Dcoupling}]
For the first part of the Theorem (expectation), 
we use the martingale $M^{(k)}$ associated to each copied processes $N^{(k)}$ to rewrite 
\begin{equation}\label{eq:sumMartingale}
\widehat{\text{\rm c}}_K =\frac{1}{K}\sum_{k = 1}^K\int_0^T x(t)dM^{(k)}(t)+\frac{1}{K}\sum_{k = 1}^K\int_0^T x(t)\lambda (t)dt(t)\,.
\end{equation}
As explained above, elements of the sum in the first term is then zero mean martingale, 
and by linearity so is the whole term. 
As a consequence (using the zero mean property), 
the expectation of the first term is zero and only remains the second term
\[
  \mathbb{E}\left[\widehat{\text{\rm c}}_K\right]=\int_0^T x(t)\lambda (t)dt(t)\,.
\]

We then exploit a Central Limit Theorem (CLT) for martingales to prove the second part of the theorem (convergence to Gaussian distribution). 
To satisfy the CLT in such case, 
it is sufficient to find a particular martingale $\widetilde{M}^{(K)}$ sequence that will  satisfy the conditions described in
\citep[p. 63]{aalen2008SurvivalEventHistory} 
($\overset{P}{\rightarrow}$ indicate convergence in probability):
\begin{enumerate}
\item[(1)]
  $\text{Var}(\widetilde{M}^{(K)}(t))\overset{P}{\underset{K\rightarrow+\infty}{\longrightarrow}} \widetilde{V}(t)$ for all $t\in[0,\,T]$, 
  with $\widetilde{V}$ increasing and $\widetilde{V}(0)=0$,
\item[(2)]  
  informally, the size of the jumps of $\widetilde{M}^{(K)}$ tends to zero 
  (see \citet[p. 63]{aalen2008SurvivalEventHistory}). 
  Formally, for any $\epsilon>0$, 
  the martingale $\widetilde{M}_\epsilon^{(K)}(t)$ gathering the jumps $>\epsilon$ satisfies $\mbox{Var}
  \left(\widetilde{M}_\epsilon^{(K)}(t)\right)\overset{P}{\underset{K\rightarrow+\infty}{\longrightarrow}}0$.
\end{enumerate}
Then $\widetilde{M}^{(K)}(t)$ converges in distribution to a Gaussian martingale of variance $\widetilde{V}(t)$.

To achieve these conditions, let us define ${M}^{(k)}$,
the sequence of independent identically distributed zero mean martingales defined on $[0,\,T]$
canonically associated to the point process of each trial $N^{(k)}$.
Then we build martingales $M_x^{(k)}(t)=\int_0^t
x(s)dM^{(k)}(s)ds$ and construct $\widetilde{M}^{(K)}=1/\sqrt{K} \sum_{k=1}^K M_x^{(k)}$.
 
The variance of this later martingale (also called its predictable variation process) 
can be computed based on the rules provided in Appendix~\ref{app:Hmartingale}. 
First due to trial independence
\begin{align}
\label{eq:varMartingaleTot}
\widetilde{V}(t)  = \text{Var}\left(\widetilde{M}^{(K)}(t)\right) = \text{Var}\left(\frac{1}{\sqrt{K}}\sum_{k=1}^{K} M_x^{(k)}(t)\right)
 = \sum_{k=1}^{K} \text{Var}\left(\frac{1}{\sqrt{K}} M_x(t)\right)\,,
\end{align}
 and using Eq.(\ref{eq:martinVar}), we get
\begin{equation}
\label{eq:varMartingaleTot_final}
\widetilde{V}(t)=\frac{1}{K} \sum \int_0^t x^2(t)\lambda(t)dt= \int_0^t x^2(t)\lambda(t)dt\,.
\end{equation}
Eq.(\ref{eq:varMartingaleTot_final}) clearly fulfills CLT's condition (1). 

For condition (2), due to Assumption~\ref{assum:deter}, 
$x(t)$ is bounded, such that there is a $B>0$ satisfying $|x(t)|<B$ over $[0,\,T]$. 
As a consequence, the size of all jumps is bounded by $B/\sqrt{K}$, 
and for any $\epsilon$, $\widetilde{M}_\epsilon^{(K)}(t)$ is the constantly zero for    $K>\frac{B^2}{\epsilon^2}$ and condition (2) is satisfied.

Fulfillment of both conditions lead to convergence in distribution to a Gaussian martingale of variance $\widetilde{V}(t)$,
\begin{equation}
\label{eq:normalDistMartingale}
\widetilde{M}^{(K)}\underset{K\rightarrow +\infty
}{\longrightarrow}\mathcal{N}\left(0,\int_0^T x^2(t)\lambda(t)dt\right)\,.
\end{equation}

Finally, using Eq.(\ref{eq:sumMartingale}) we conclude the proof by noticing that the above martingale corresponds exactly to the
quantity $\sqrt{K}\left(\widehat{\text{\rm c}}_K-\text{\rm c}^*\right)$.
 Therefore,

  \begin{equation}
    \label{eq:normalDistMartingale}
    \sqrt{K}\left(\widehat{\text{\rm c}}_K-\text{\rm c}^*\right)\underset{K\rightarrow +\infty
    }{\longrightarrow}\mathcal{N}\left(0,\int_0^T x^2(t)\lambda(t)dt\right)\,.
  \end{equation}

\end{proof}

\begin{proof}[Proof of Corollary~\ref{corol:1DPLV}]
  We apply Theorem~\ref{thm:1Dcoupling} to $e^{\boldsymbol{i}\phi(t)}$
  (\ie replacing $x(t)$ with $e^{\boldsymbol{i}\phi(t)}$).
  As $e^{\boldsymbol{i}\phi(t)}$ is complex-valued we should have a covariance function for its predictable variation process $\widetilde{V}(t)$.
  The covariance between martingales real part  
  \[
    M_{\text{Re}}(t)=\int_0^t \text{Re}(e^{\boldsymbol{i}\phi(s)})dM(s)ds
  \]
  and imaginary part 
  \[
    M_{\text{Im}}(t)=\int_0^t \text{Im}(e^{\boldsymbol{i}\phi(s)})dM(s)ds
  \]  is given by
  \begin{equation}\label{eq:martngaleCov}
    \int_0^t \text{Re}(e^{\boldsymbol{i}\phi(s)})\text{Im}(e^{\boldsymbol{i}\phi(s)})\lambda(s)ds\,.
  \end{equation}
  The diagonal elements of the covariance function are the predictable variation process of
  $M_{\text{Re}}$ and $M_{\text{Im}}$ that can be computed based on Eq.(\ref{eq:martinVar})
  and the off-diagonal elements are the covariance between martingales real and imaginary part
  that can that can be computed based on Eq.(\ref{eq:martinCov}).
  Therefore, covariance function for its predictable variation process follows as,
  \begin{align}
    \label{eq:martingale2dCov}
    \text{\rm Cov}
    \left[
    \begin{matrix}
      \text{Re}\{Z\}\\
      \text{Im}\{Z\}
    \end{matrix}
    \right] 
    & =
      \left[
      \begin{matrix}
	\int_0^t \left(\text{Re}(e^{\boldsymbol{i}\phi(s)})\right)^2\lambda(s)ds & 
        \int_0^t \text{Re}(e^{\boldsymbol{i}\phi(s)})\text{Im}(e^{\boldsymbol{i}\phi(s)})\lambda(s)ds\\
	\int_0^t \text{Re}(e^{\boldsymbol{i}\phi(s)})\text{Im}(e^{\boldsymbol{i}\phi(s)})\lambda(s)ds\ &
        \int_0^t \left(\text{Im}(e^{\boldsymbol{i}\phi(s)})\right)^2\lambda(s)ds
      \end{matrix}\right]\\
    & =
      \int_0^t
      \left[
      \begin{matrix}
	\cos^2(\phi(s))&\sin(2\phi(s))/2\\
	\sin(2\phi(s))/2&	\sin^2(\phi(s))
      \end{matrix}\right]
                          \lambda(s)ds\,.
  \end{align}

  Similar to Theorem~\ref{thm:1Dcoupling}, as $K\rightarrow +\infty$,
  the residuals converges in distribution to a zero-mean complex Gaussian variable $Z$ 
  (\ie the joint distribution of real and imaginary parts is
  Gaussian).
  \[
    \sqrt{K}\left(\widehat{\text{\rm c}}_K-\text{\rm c}^*\right)
    \underset{K\rightarrow +\infty
    }{\longrightarrow}\mathcal{N}\left(0, \rm Cov(Z)\right)\,.
  \]
  Because Theorem~\ref{thm:1Dcoupling} guaranties the $\sqrt{K}(\widehat{c}_K- c^*)$ tends to a Gaussian with finite variance, 
 $\widehat{c}_K$ tends to the Dirac measure in $c^*$.

However, given that we use $x(t)=e^{\boldsymbol{i}\phi(t)}$, 
$\widehat{\text{\rm c}}_K$ is not exactly the multi-trial $\rm PLV$ estimate. More precisely,
\begin{multline*}
  \widehat{\text{\rm c}}_K = \frac{1}{K}\sum_{k = 1}^K\int_0^T
  e^{\boldsymbol{i}\phi(t)}dN^{(k)}(t)=\frac{1}{K}\sum_{k = 1}^K\sum_{j=1}^{N_k} e^{\boldsymbol{i}\phi(t_j^k)}
  = \frac{\left(\sum_{k=1}^{K}N_k\right)}{K}\widehat{\text{\rm PLV}}_K\,.
\end{multline*}
Thus we can write $\widehat{\text{\rm PLV}}_K=\nu_K\cdot \widehat{\text{\rm c}}_K$, with $\nu_K=\frac{K}{\left(\sum_{k=1}^{K}N_k\right)}$.
With the same techniques (using $x(t)=1$), we can show convergence in distribution of $\nu_K$ to a constant, 
\[
\frac{1}{\nu_K}=\frac{\left(\sum_{k=1}^{K}N_k\right)}{K} = 1/K \sum_k \int_0^T 1\cdot dN^{(k)}\underset{K\rightarrow+\infty}{\longrightarrow} \int_0^T \lambda(t)dt=\Lambda(T)\,.
\]
This leads to
\[
\nu_K \underset{K\rightarrow+\infty}{\longrightarrow} \frac{1}{\Lambda(T)}\,.
\] 
Following a version of Slutsky's theorem \citep[Theorem~5.10]{mittelhammer1996mathematical}, since $\nu_k$ and $\widehat{\text{\rm c}}_K$ tends to a limit
in distribution, and one of these limits is a constant, then the
product tends to the product of the limits 
such that we get 
\[
\text{\rm PLV}^* = \lim_{K \to \infty} \nu_K\cdot \widehat{\text{\rm c}}_K= \frac{c^*}{\Lambda(T)}\,,
\]
and can decompose the PLV residual as follows: 
  \[
  {\sqrt{K}}\left(\widehat{\text{\rm PLV}}_K-\text{\rm PLV}^*\right)={\sqrt{K}}\nu_K\left(\widehat{\text{\rm c}}_K-\text{\rm c}^*\right)+ {\sqrt{K}}\left(\nu_K\text{\rm c}^*-\text{\rm PLV}^*\right)\,. 
  \]
  Taking the limit of the above equation, the second term clearly vanishes (see the above limit of $\nu_K$), and the first term, using again the limit of products, leads to the final result: 
  \[
 {\sqrt{K}}\left(\widehat{\text{\rm PLV}}_K-\text{\rm PLV}^*\right)
  \underset{K\rightarrow +\infty
  }{\longrightarrow}\mathcal{N}\left(0, \frac{1}{\Lambda(T)^2}\rm Cov(Z)\right)\,.
  \]
 
\end{proof}

\begin{proof}[Proof of Corollary~\ref{corol:kappaplv}]
  We use the intensity function introduced in Eq.(\ref{eq:Vmrate}) in Corollary~\ref{corol:1DPLV}.
  The $\rm PLV$ asymptotic value ($\rm PLV^*$) can be derived from definition introduced in Eq.(\ref{thePLV}),
  \begin{align}
    {\rm PLV^*}
    & =  \frac{\int_0^T e^{\boldsymbol{i}\phi(t)}\lambda(t)dt}{\int_0^T \lambda(t)dt}\\
    & =\frac
      {r_o \int_{0}^{T} e^{\boldsymbol{i}\phi(t)} \exp(\kappa\cos(\phi(t)-\varphi_0))\phi'(t)dt}
      {r_o\int_{0}^{T} \exp(\kappa\cos(\phi(t)-\varphi_0))\phi'(t)dt}\,.
  \end{align}
  We change  the integration variable from $\phi(t)$ to $\theta$,
  \begin{align}
    \label{eq:theoPLVintermed1}
    {\rm PLV^*}
    & =\frac{\int_{\phi(0)}^{\phi(T)} e^{\textbf{i}\theta}
      \exp(\kappa\cos(\theta-\varphi_0))d\theta}{\int_{\phi(0)}^{\phi(T)}
      \exp(\kappa\cos(\theta-\varphi_0))d\theta}\,.
  \end{align}
  To simplify the integral (bring the $\varphi_0$ out of the integral), 
  we change  the integration variable again, from $\theta$ to $\psi$,
  ($\psi = \theta -\varphi_0$),
  \begin{align}
    {\rm PLV^*}
    & = \frac{\int_{\phi(0)-\varphi_0}^{\phi(T)-\varphi_0} e^{\boldsymbol{i}(\psi + \varphi_0)}
      \exp(\kappa\cos(\psi))d\psi}{\int_{\phi(0)-\varphi_0}^{\phi(T)-\varphi_0}
      \exp(\kappa\cos(\psi))d\psi}\\
    & = e^{\textbf{i}\varphi_0}
      \frac{\int_{\phi(0)-\varphi_0}^{\phi(T)-\varphi_0} e^{\boldsymbol{i}\psi}
      \exp(\kappa\cos(\psi))d\psi}{\int_{\phi(0)-\varphi_0}^{\phi(T)-\varphi_0}
      \exp(\kappa\cos(\psi))d\psi}\,.
  \end{align}
Given that that integrand is a $2\pi$-periodic functions (thus the integral is invariant to translations of the integration interval), we get
%
\[
  {\rm PLV^*}
  = e^{\textbf{i}\varphi_0}
  \frac{\int_{-\pi}^{\pi} e^{\boldsymbol{i}\psi} \exp(\kappa\cos(\psi))d\psi}{\int_{-\pi}^{\pi} \exp(\kappa\cos(\psi))d\psi}\,.
\]
Observing that the integrand of the denominator is even, while for the numerator the imaginary part is odd and the real part is even, we get
\[
  {\rm PLV^*}
  = e^{\textbf{i}\varphi_0}
  \frac{\int_{0}^{\pi} \cos(\psi) \exp(\kappa\cos(\psi))d\psi}{\int_{0}^{\pi} \exp(\kappa\cos(\psi))d\psi}\,.
\]
This proves the first part of the corollary (Eq.(\ref{eq:PLVmeanGeneral}).
By using the integral form of the modified Bessel functions $I_k$ \textbf{for $k$ integer}  
(see \eg \citet[p.~181]{watson1995treatise}):
  \begin{align}
    I_k(\kappa) 
    & =\frac{1}{\pi}\int_{0}^{\pi} \cos(k\theta)  \exp(\kappa\cos(\theta))d\theta+\frac{\sin(k\pi)}{\pi}\int_{0}^{+\infty} e^{-\kappa\cosh t -kt}dt \\\label{modBesselAppend}
    &=\frac{1}{\pi}\int_{0}^{\pi} \cos(k\theta)  \exp(\kappa\cos(\theta))d\theta\,,
  \end{align}
  we can derive the compact form:
  \begin{align}    
    {\rm PLV^*}
    \label{eq:theoPLVintermed2}
    & = e^{\textbf{i}\varphi_0}\frac{I_1(\kappa)}{I_0(\kappa)}\,.
  \end{align}
  
  The covariance matrix of the asymptotic distribution,
  can be easily derived by plugging Eq.(\ref{eq:Vmrate}) as $\lambda(t)$ in Corollary~\ref{corol:1DPLV}
  and integrating on $[0,\,T]$:
  \begin{equation}
    \left(\rm Cov(Z) \right)_{11}
    = \frac{\lambda_0}{\Lambda(T)^2} \int_0^T \cos^2(\phi(t))
    \exp\left(\kappa\cos(\phi(t)-\varphi_0)\right)\phi'(t)dt\,.
    \label{covz11}
  \end{equation}
  Based on the above developments, and noticing that the integration intervals corresponds to $2 \pi \alphaT$, with $\alphaT$ the number of oscillation periods, we have \[\Lambda(T)=\lambda_0 2 \alphaT\pi I_0(\kappa)=\lambda_0 2\frac{\phi(T)-\phi(0)}{2\pi}\pi I_0(\kappa)\,,
  \] such that
  \begin{equation}
    \left(\rm Cov(Z) \right)_{11}
    = \frac{1}{\lambda_0\left(\phi(T)\!-\!\phi(0)\right)^2I_0(\kappa)^2} \int_0^T \cos^2(\phi(t))
    \exp\left(\kappa\cos(\phi(t)-\varphi_0)\right)\phi'(t)dt\,.
    \label{covz11}
  \end{equation}
  
  To simplify the rest of the derivations, we transform the complex variable coordinates by using $e^{\textbf{i}\phi(t)}e^{-\textbf{i}\varphi_0}$ instead of $e^{\textbf{i}\phi(t)}$
  as predictable with respect to $\{\mathcal{F}_t\}$
  (\ie replacing $x(t)$ with $e^{\textbf{i}\phi(t)}e^{-\textbf{i}\varphi_0}$ in Theorem~\ref{thm:1Dcoupling}).
  With this change Eq.(\ref{covz11}) becomes,
  \begin{equation}
    \left(\rm Cov(Z) \right)_{11}
    = \frac{1}{\lambda_0\left(\phi(T)\!-\!\phi(0)\right)^2I_0(\kappa)^2} \int_0^T \cos^2(\phi(t)-\varphi_0)
    \exp\left(\kappa\cos(\phi(t)-\varphi_0)\right)\phi'(t)dt\,.  
  \end{equation}
  We change  variable of the integral from $\phi(t)-\varphi_0$ to $\theta$
  and use the following trigonometric identity,
  \begin{eqnarray}
    \label{eq:trigIdCos}
    \cos^2(\theta) = \frac{1}{2} \left( 1 + \cos(2\theta) \right)
  \end{eqnarray}
  to obtain
  \begin{multline*}
    \left(\rm Cov(Z) \right)_{11}
    = \frac{1}{2 \lambda_0\left(\phi(T)\!-\!\phi(0)\right)^2I_0(\kappa)^2}\int_{\phi(0)}^{\phi(T)}
    \left( 1 + \cos(2\theta) \right)
    \exp\left(\kappa\cos(\theta)\right)d\theta \\
    = \frac{1}{2 \lambda_0\left(\phi(T)\!-\!\phi(0)\right)^2I_0(\kappa)^2}\int_{\phi(0)}^{\phi(T)}
    \left( \exp\left(\kappa\cos(\theta)\right) +
      \cos(2\theta)\exp\left(\kappa\cos(\theta)\right) \right)d\theta \,.
  \end{multline*}
    Using again that the integration interval is $2\pi\alphaT$ with $\alphaT$ integer, and integrates $2\pi$-periodic functions (thus the integral is invariant to translations of the integration interval), we get
    \begin{multline*}
      \left(\rm Cov(Z) \right)_{11}
      = \frac{1}{2 \lambda_0\left(\phi(T)-\phi(0)\right)^2I_0(\kappa)^2}
      \left[
        \int_{0}^{2\pi\alphaT}
        \exp\left(\kappa\cos(\theta)\right) d\theta\right. \\
      \left.+
        \int_{0}^{2\pi\alphaT}
        \cos(2\theta)\exp\left(\kappa\cos(\theta)\right) d\theta
      \right]
    \end{multline*}  
    \begin{align}
      \left(\rm Cov(Z) \right)_{11}
      & = \frac{1}{2 \lambda_0\left(\phi(T)-\phi(0)\right)^2I_0(\kappa)^2}
        \left[
        2\alphaT\pi I_0(\kappa)+ 2\alphaT\pi I_2(\kappa)
        \right]\\
      & = \frac{2\pi\alphaT}{ 2\lambda_0\left(\phi(T)-\phi(0)\right)^2 I_0(\kappa)^2}
        \left[
        I_0(\kappa)+  I_2(\kappa)
        \right]\\
      &=\frac{1}{2\lambda_0 \left(\phi(T)-\phi(0)\right) I_0(\kappa)^2}
        \left[
        I_0(\kappa)+  I_2(\kappa)
        \right]\,.\label{eq:4}
    \end{align}
where $\alphaT$ is the number of oscillation periods contained in $[0,\,T]$.
 
  We can have a similar calculation for the imaginary part \ie $\left(\rm Cov(Z) \right)_{22}$ as well,
  but using the identity   $\sin^2(\theta) = \frac{1}{2} \left( 1 - \cos(2\theta) \right)$
  instead of Eq.(\ref{eq:trigIdCos}).
  The off-diagonal elements of the covariance matrix vanish due to symmetry of integrand.

  Therefore, we showed that for a given $\kappa \geq 0$, scaled residual
  \[
    Z' = e^{-\boldsymbol{i}\varphi_0} \sqrt{K}\left(\widehat{\rm PLV}_K - \rm PLV^* \right)\,,
  \]
  converges to a zero mean complex Gaussian with the following covariance:
  \begin{equation*}
    \text{\rm Cov}
    \left[
      \begin{matrix}
        \text{Re}\{Z'\}\\
        \text{Im}\{Z'\}
      \end{matrix}
    \right]=\left[
      \begin{matrix}
        \text{Re}\{Ze^{-i\varphi_0}\}\\
        \text{Im}\{Ze^{-i\varphi_0}\}
      \end{matrix}
    \right]=\frac{1}{2 \lambda_0 (\phi(T)\!-\!\phi(0)) I_0(\kappa)^2}\left[
      \begin{matrix}
        I_0(\kappa)+I_2(\kappa) & 0\\
        0&	I_0(\kappa)-I_2(\kappa)
      \end{matrix}\right]\,.
  \end{equation*}
\end{proof}

\begin{proof}[Proof of Corollary~\ref{corol:uniuncoupl}]
  Similar to Corollary~\ref{corol:kappaplv}, 
  we can derive the asymptotic $\rm PLV$ (Eq.(\ref{eq:theoPLVhomoSPT})) for this case,
  from the definition in Eq.(\ref{thePLV}).
  We apply the intensity function $\lambda = \lambda_0$ in Corollary~\ref{corol:1DPLV}.
  The $\rm PLV$ asymptotic value ($\rm PLV^*$) can be derived simply by changing the integration variable from $\phi(t)$ to $\theta$
(and let $\theta \mapsto \tau(\theta)$ be its inverse).

  The covariance matrix of the asymptotic distribution,
  can be derived by the procedure we used for the proof of
  Corollary~\ref{corol:kappaplv}.
  We plug the rate $\lambda_0$ as $\lambda(t)$ in Corollary~\ref{corol:1DPLV}
  and integrate on $[0,\,T]$:
\begin{equation}
  \left(\rm Cov(Z) \right)_{11}
  = \frac{\lambda_0}{\Lambda(T)^2} \int_{0}^{T} \cos^2(\phi(t)) dt\,.
\end{equation}
By chaining the variable from $\phi(t)$ to $\theta$, we get:
\begin{equation}
  \left(\rm Cov(Z) \right)_{11}
  = \frac{\lambda_0}{\Lambda(T)^2} \int_{\phi(0)}^{\phi(T)} \cos^2(\theta) \tau'(\theta) d\theta\,.
\end{equation}
  %
  As $\Lambda(T) = \int_0^T \lambda_0 dt = \lambda_0 T$,
  we have,
  \begin{align}
    \left(\rm Cov(Z) \right)_{11}
    & = \frac{\lambda_0}{\Lambda(T)^2} \int_{\phi(0)}^{\phi(T)} \cos^2(\theta) \tau'(\theta) d\theta\\
    & = \frac{1}{\lambda_0 T^2} \int_{\phi(0)}^{\phi(T)} \cos^2(\theta) \tau'(\theta) d\theta\,.
  \end{align}
  With a similar calculation for other coefficient of the covariance matrix, we get:
  \[
    \rm Cov(Z) = 
    \frac{1}{\lambda_0 T^2} \int_{\phi(0)}^{\phi(T)}
    \left[
      \begin{matrix}
	\cos^2(\theta)&\sin(2\theta)/2\\
	\sin(2\theta)/2&	\sin^2(\theta)
      \end{matrix}\right]
    \tau'(\theta) d\theta\,.
  \]

  Therefore, we showed that the scaled residual,
  \[
    Z =  \sqrt{K}\left(\widehat{\rm PLV}_K - \rm PLV^* \right)\,,
  \]
  converges to a zero mean complex Gaussian:
  \begin{equation*}
    \sqrt{K}\left(\widehat{\rm PLV}_K - \rm PLV^* \right)
    \underset{K\rightarrow +\infty
    }{\longrightarrow}\mathcal{N}\left(
      \left[\begin{matrix}
        0\\
        0
      \end{matrix}\right], \rm Cov(Z)\right)\,.
\end{equation*}
\end{proof}


\begin{proof}[Proof of Theorem~\ref{thm:multiclt}]
  Similar to proof of Theorem~\ref{thm:1Dcoupling} we rely on a CLT, 
  but this time adapted to the case of vector-valued martingales 
  \citep[Appendix B]{aalen2008SurvivalEventHistory} to prove  this theorem.

  We start from the single trial empirical vector-valued coupling measure
  of Eq.~(\ref{eq:defEmpMultVarC}):
  \begin{equation}
    \label{eq:defMultVarC}
    \boldsymbol{C} = \int_0^t \boldsymbol{x}(t)d\boldsymbol{N}(t)^\top
  \end{equation}
  As for the univariate case,
  under mild assumptions, we can associate a martingale to a vector-valued counting process $\boldsymbol{N}(t)$:
  \begin{equation}
    \label{eq:cpvvm}
    \boldsymbol{M}(t) = \boldsymbol{N}(t) - \int_0^t \boldsymbol{\lambda}(s)ds\,.
  \end{equation}
  As in this theorem we assume $\boldsymbol{\lambda}(t)=\boldsymbol{\lambda}_0, t\in [0,\,T]$,
  we get,
  \begin{equation}
    \label{eq:cpvvms}
    \boldsymbol{M}(t) = \boldsymbol{N}(t) -  \boldsymbol{\lambda}_0 t\,.
  \end{equation}
  The $(p\times n)$ matrix-valued martingale for the empirical coupling matrix of Eq.(\ref{eq:defEmpMultVarC}), 
  resulting from stochastic integration, is
  \begin{equation}
    \label{eq:defCmartingale}
    \boldsymbol{M}_{\boldsymbol{x}}(t) = \int_0^t \boldsymbol{x}(s)d\boldsymbol{M}^\top(s)ds\,,
  \end{equation}
  and can be decomposed similarly to Eq.(\ref{eq:splitCmartingale_background}) as
  \begin{equation}
    \label{eq:splitVVCmartingale}
    \boldsymbol{M}_{\boldsymbol{x}}(t) =  
    \int_{0}^t \boldsymbol{x}(s) d\boldsymbol{N}(s)^\top
    - \int_{0}^t\boldsymbol{x}(s) \boldsymbol{\lambda}_0ds\,.
  \end{equation}
  By generalizing the steps of Theorem~\ref{thm:1Dcoupling}, 
  we introduced the $(p\times n)$-variate martingale
  \begin{align}
    \widetilde{\boldsymbol{M}}^{(K)}(t)
    & = 1/\sqrt{K} \sum_{k=1}^K \boldsymbol{M}_{\boldsymbol{x}}^{(k)}(t) \\
    & = 1/\sqrt{K} \sum_{k=1}^K \int_0^t \boldsymbol{x}(s)\left(d\boldsymbol{M}^{(k)}\right)^\top(s)ds\,.
      \label{eq:poolingMultVar}
  \end{align}
  We now state the CLT theorem for multivariate stochastic integral.
  \begin{prop}[Multivariate Martingale CLT \citep{aalen2008SurvivalEventHistory}, Appendix B.3]
    \label{prop:multVarMartingael}
    Given the (real) matrix valued predictable functions $\boldsymbol{H}^{(K)}(t)$, consider the multivariate stochastic integral of multivariate martingale  $\boldsymbol{M}^{(K)}$ with intensity vector $\boldsymbol{\lambda}^{(K)}(t)$
    \[
      \int_{0}^{t} \boldsymbol{H}^{(K)}(u)d\boldsymbol{M}^{(K)}(u)\,,
    \]
    Assume:
    \begin{itemize}
    \item[(1)] $\int_{0}^{t}
      \boldsymbol{H}^{(K)}(u)\mbox{diag}\{\boldsymbol{\lambda}^{(K)}(u)\}\boldsymbol{H}^{(K)}(u)^\top du
      \overset{P}{\longrightarrow}\boldsymbol{V}(t)$,
    \item[(2)] 
      $\sum_{j=1}^{k}\int_{0}^{t}(\boldsymbol{H}^{(K)}(u))^2
      \boldsymbol{1}_{|\boldsymbol{H}^{(K)}(u)|>\epsilon}\lambda_j^{(K)}(u)du
      \overset{P}{\longrightarrow}0\,,$ for all $t\in [0.T]$ and $\epsilon>0$\,.
    \end{itemize}
    Then above stochastic integral converges in distribution to a mean-zero Gaussian martingale of covariance $\boldsymbol{V}(t)$.
  \end{prop}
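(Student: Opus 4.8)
The plan is to recognize the accumulated process
\[
\boldsymbol{U}^{(K)}(t)=\int_{0}^{t}\boldsymbol{H}^{(K)}(u)\,d\boldsymbol{M}^{(K)}(u)
\]
as a mean-zero vector-valued local martingale obtained by integrating a bounded predictable integrand against the counting-process martingale $\boldsymbol{M}^{(K)}$, and then to invoke the martingale central limit theorem of Rebolledo, which is the engine underlying Appendix~B of \citet{aalen2008SurvivalEventHistory}. Rather than treating the multivariate statement in one stroke, I would first reduce to the scalar case through the Cram\'er--Wold device: for an arbitrary fixed vector $\boldsymbol{a}$, the linear functional $\boldsymbol{a}^\top\boldsymbol{U}^{(K)}$ is again a scalar stochastic integral $\int_0^t\boldsymbol{a}^\top\boldsymbol{H}^{(K)}(u)\,d\boldsymbol{M}^{(K)}(u)$, so that establishing its convergence to a one-dimensional Gaussian of variance $\boldsymbol{a}^\top\boldsymbol{V}(t)\boldsymbol{a}$ for every $\boldsymbol{a}$ yields the finite-dimensional convergence of the vector-valued process.

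The key computation is the matrix-valued predictable variation process $\langle\boldsymbol{U}^{(K)}\rangle$. The structural fact that makes it tractable is that distinct component counting processes almost surely never jump simultaneously, so the compensated martingales are mutually orthogonal, $\langle M_i^{(K)},M_j^{(K)}\rangle(t)=\delta_{ij}\int_0^t\lambda_j^{(K)}(u)\,du$. Substituting this into the bilinear form of the stochastic integral annihilates the cross terms and yields exactly
\[
\langle\boldsymbol{U}^{(K)}\rangle(t)=\int_{0}^{t}\boldsymbol{H}^{(K)}(u)\,\mbox{diag}\{\boldsymbol{\lambda}^{(K)}(u)\}\,\boldsymbol{H}^{(K)}(u)^\top\,du.
\]
With this identity in hand, Assumption~(1) is recognized as the hypothesis that the predictable covariation converges in probability to the deterministic, continuous limit $\boldsymbol{V}(t)$, while Assumption~(2) is the Lindeberg (asymptotic negligibility of large jumps) condition: the jumps of $\boldsymbol{U}^{(K)}$ are $\boldsymbol{H}^{(K)}(t)\,\Delta\boldsymbol{N}^{(K)}(t)$, and the displayed sum is the predictable variation of the martingale collecting only those jumps exceeding $\epsilon$, whose vanishing forbids any surviving jump component in the limit.

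With both hypotheses matched to Rebolledo's two conditions, I would conclude that every scalar functional converges to $\mathcal{N}(0,\boldsymbol{a}^\top\boldsymbol{V}(t)\boldsymbol{a})$, hence by Cram\'er--Wold that the finite-dimensional distributions of $\boldsymbol{U}^{(K)}$ converge to those of a Gaussian process with covariance function $\boldsymbol{V}$; continuity and Gaussianity of the limit follow respectively from the vanishing of the jump part and from the deterministic limit of the predictable variation. The main obstacle is not the algebra but the upgrade from finite-dimensional convergence to genuine convergence in distribution as processes in the Skorokhod topology, which requires tightness; this is exactly what the Lindeberg condition~(2) supplies by preventing mass from escaping into a discontinuous limit. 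In practice I would cite Rebolledo's theorem rather than reprove it, so that the real content reduces to the orthogonality computation above together with the verification that the two stated assumptions coincide with its hypotheses.
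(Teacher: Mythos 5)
Your proposal is sound, but note that the paper itself offers no proof of this proposition: it is imported verbatim from \citet{aalen2008SurvivalEventHistory}, Appendix~B.3, and used as a black box inside the proof of Theorem~\ref{thm:multiclt}. What you have written is essentially a correct reconstruction of the proof in that cited source. The engine there is indeed Rebolledo's martingale CLT; the predictable-variation identity $\langle\boldsymbol{U}^{(K)}\rangle(t)=\int_0^t\boldsymbol{H}^{(K)}\,\mathrm{diag}\{\boldsymbol{\lambda}^{(K)}\}\,(\boldsymbol{H}^{(K)})^\top du$ follows exactly as you say from the almost-sure absence of simultaneous jumps, i.e.\ the orthogonality $\langle M_i^{(K)},M_j^{(K)}\rangle(t)=\delta_{ij}\int_0^t\lambda_j^{(K)}(u)\,du$; condition~(1) is then the convergence of the predictable variation and condition~(2) the Lindeberg-type negligibility of the martingale collecting the large jumps. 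Two minor remarks. First, the Cram\'er--Wold detour is unnecessary: Rebolledo's theorem is available directly for vector-valued local martingales (this is the form Aalen et al.\ use), which spares you the threshold bookkeeping needed to transfer the Lindeberg condition~(2) to the scalar integrands $\boldsymbol{a}^\top\boldsymbol{H}^{(K)}$ (via the implication that $|\boldsymbol{a}^\top\boldsymbol{H}^{(K)}(u)|>\epsilon$ forces $\Vert\boldsymbol{H}^{(K)}(u)\Vert>\epsilon/\Vert\boldsymbol{a}\Vert$, which you would have to spell out). Second, as you correctly flag, Cram\'er--Wold alone yields only finite-dimensional convergence, so process-level convergence in the Skorokhod topology must come from the tightness built into Rebolledo's theorem rather than from the f.d.d.\ argument; since you ultimately cite Rebolledo rather than reprove it, this gap closes itself, and your argument is complete at the same level of rigor as the paper's own usage of the proposition.
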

  %
  We notice that in summing of $K$ trials (Eq.(\ref{eq:poolingMultVar})),
  deterministic signals $\boldsymbol{x}$ remain identical and
  point processes are pooled across $K$-trials.
  Given trials are independent, 
  the counting processes derived from the pooled $K$ trials of Poisson processes $\sum_{k=1}^K \boldsymbol{N}^{(k)}(t)$ are distributed as multivariate Poisson processes with intensity vector $K\boldsymbol{\lambda}_0$, 
  such that
  \begin{equation}
    \label{eq:defCmartingalePooled}
    \widetilde{\boldsymbol{M}}^{(K)}(t)= 1/\sqrt{K} \int_0^t \boldsymbol{x}(s)d\boldsymbol{P}^\top(s)ds\,,
  \end{equation}
  where $\boldsymbol{P}$ is the martingale associated to the pooled process,
  \begin{equation}
    \label{eq:multvarPoissonMartingale}
    \boldsymbol{P}(t) = \left(\sum_{k=1}^K \boldsymbol{N}^{(k)}(t)\right) - \int_0^t K \boldsymbol{\lambda}(s)ds\,.
  \end{equation}

  Given the coupling matrix is matrix-valued, 
  we have to vectorize it in order to apply the above CLT. 
  Let $\mbox{Vec}\{.\}$ be the operator that concatenates the successive columns of a matrix into a larger column vector.
  $\widetilde{\boldsymbol{M}}^{(K)}(t)$ is $(p\times n)$-variate matrix-valued process,
  and its vectorized version, $\mbox{Vec}\{\widetilde{\boldsymbol{M}}^{(K)}(t)\}$,
a  $(pn\times 1)$-variate vector process.
  We notice that we can write Eq.(\ref{eq:defCmartingalePooled}) 
  in vectorized form as
  \[
    \mbox{Vec}\{\widetilde{\boldsymbol{M}}^{(K)}(t)\}= \int_0^t \boldsymbol{H}(s)d\boldsymbol{P}^\top(s)ds\,,
  \]
  with the $(pn\times n)$-variate block diagonal matrix
  \begin{equation}\label{eq:blockDiagMartingale}
    \boldsymbol{H}(s)=
    \frac{1}{\sqrt{K}} \left[
      \begin{matrix}
        \boldsymbol{x}(s) &\boldsymbol{0}&\cdots&\cdots&\boldsymbol{0}\\
        \boldsymbol{0} & \boldsymbol{x}(s)&\boldsymbol{0}&\cdots&\boldsymbol{0}\\
        \boldsymbol{0}&\boldsymbol{0}&\ddots& \ddots& \boldsymbol{0}\\
        \boldsymbol{0}&\ddots&\ddots& \ddots& \boldsymbol{0}\\
        \boldsymbol{0}&  \cdots&\cdots&\boldsymbol{0}&\boldsymbol{x}(s)
      \end{matrix}
    \right]\,.
  \end{equation}

  
  The variance of $\mbox{Vec}\{\widetilde{\boldsymbol{M}}^{(K)}(t)\}$ 
  (a $(pn\times pn)$-variate covariance matrix which is also called predictable variation process) 
  can be write, based on 
  Proposition~\ref{prop:multVarMartingael}, as
  \begin{equation}
    \label{eq:covVVmartSeq}
    \widetilde{\boldsymbol{V}}(t) =
    \int_{0}^{t} \boldsymbol{H}(s) \operatorname{diag}\left\{\boldsymbol{\lambda}(s) \right\} \boldsymbol{H}(s)^\top ds\,.
  \end{equation}
  Since we assume a constant intensity function,  $\boldsymbol{\lambda}(t) = \boldsymbol{\lambda}_0=\{\lambda_k\}_{k}$ ($(n\times 1)$-variate matrix),
  we can simplify Eq.(\ref{eq:covVVmartSeq}) as follows:
  \begin{align}
    \widetilde{\boldsymbol{V}}(t) & =
                                    \int_{0}^{t} \boldsymbol{H}(s) \operatorname{diag}\left\{K \boldsymbol{\lambda}_0 \right\} \boldsymbol{H}(s)^\top ds\,.
  \end{align}
  Replacing $\boldsymbol{H}(s)$ with the block diagonal matrix defined in Eq.(\ref{eq:blockDiagMartingale})
  lead us to,
  \begin{align}
    \widetilde{\boldsymbol{V}}(t) &= \frac{1}{{K}} \left[
                                    \begin{matrix}
                                      \int_{0}^{t}K\lambda_1\boldsymbol{x}(s)\boldsymbol{x}(s)^H ds &\boldsymbol{0}&\cdots&\cdots&\boldsymbol{0}\\
                                      \boldsymbol{0} & \int_{0}^{t}K\lambda_2\boldsymbol{x}(s)\boldsymbol{x}(s)^H ds&\boldsymbol{0}&\cdots&\boldsymbol{0}\\
                                      \boldsymbol{0}&\boldsymbol{0}&\ddots& \ddots& \boldsymbol{0}\\
                                      \boldsymbol{0}&\ddots&\ddots& \ddots& \boldsymbol{0}\\
                                      \boldsymbol{0}&  \cdots&\cdots&\boldsymbol{0}&\int_{0}^{t}K\lambda_n\boldsymbol{x}(s)\boldsymbol{x}(s)^H ds
                                    \end{matrix}\right]\\
                                  &= \left[
                                    \begin{matrix}
                                      \lambda_1\int_{0}^{t}\boldsymbol{x}(s)\boldsymbol{x}(s)^H ds &\boldsymbol{0}&\cdots&\cdots&\boldsymbol{0}\\
                                      \boldsymbol{0} & \lambda_2\int_{0}^{t}\boldsymbol{x}(s)\boldsymbol{x}(s)^H ds&\boldsymbol{0}&\cdots&\boldsymbol{0}\\
                                      \boldsymbol{0}&\boldsymbol{0}&\ddots& \ddots& \boldsymbol{0}\\
                                      \boldsymbol{0}&\ddots&\ddots& \ddots& \boldsymbol{0}\\
                                      \boldsymbol{0}&  \cdots&\cdots&\boldsymbol{0}&\lambda_n\int_{0}^{t}\boldsymbol{x}(s)\boldsymbol{x}(s)^H ds
                                    \end{matrix}\right]\,.
  \end{align}
This fulfills condition (1) of the CLT, 
for all $t\in[0,\,T]$. For the second condition, 
it is enough to see that coeffcients of $\boldsymbol{H}$ are bounded by a term decreasing in $\frac{1}{\sqrt{K}}$. The CLT is thus satisfied, 
and we get convergence in distribution to a zero-mean complex Gaussian of covariance $\widetilde{\boldsymbol{V}}(t)$ for each $t$. 
Specializing the result for $t=T$, we get, based on Assumption~\ref{assum:multi}, 
a diagonal covariance matrix with block-constant diagonal coefficients
\begin{equation}\label{eq:covMult}
  \widetilde{\boldsymbol{V}}(T) =
  \left[
    \begin{matrix}
      T\lambda_1 \boldsymbol{I}_p &\boldsymbol{0}&\cdots&\cdots&\boldsymbol{0}\\
      \boldsymbol{0} & T\lambda_2 \boldsymbol{I}_p&\boldsymbol{0}&\cdots&\boldsymbol{0}\\
      \boldsymbol{0}&\boldsymbol{0}&\ddots& \ddots& \boldsymbol{0}\\
      \boldsymbol{0}&\ddots&\ddots& \ddots& \boldsymbol{0}\\
      \boldsymbol{0}&  \cdots&\cdots&\boldsymbol{0}&T\lambda_n \boldsymbol{I}_p
    \end{matrix}\right]\,,
\end{equation}
where $\boldsymbol{I}_p$ indicates the $(p\times p)$ identity matrix, 
which provides the covariance matrix of the (vectorized) coefficients of matrix $\sqrt{K}\widehat{\boldsymbol{C}}_K$.

Therefore, for the normalized coupling matrix,
$\widehat{\boldsymbol{C}}_K\text{diag}(\sqrt{T\boldsymbol{\lambda}_0})^{-1}$, 
the column by column normalization, normalizes each block of the above covariance matrix by a multiplicative term $\frac{1}{T\lambda_k}$, 
to lead to an identity covariance. This proves
convergence of the normalized coupling matrix in distribution for $K\rightarrow +\infty$ to a random matrix with i.i.d. unit variance complex Gaussian coefficients 
(because uncorrelation implies independence in the Gaussian case)
\begin{equation}
  \label{eq:covVVmartSeq_finalDist}
  \sqrt{K}\mbox{Vec}\{\widehat{\boldsymbol{C}}_K \text{diag}(\sqrt{T\boldsymbol{\lambda}_0})^{-1}\}
  \underset{K\rightarrow +\infty
  }{\longrightarrow}\mathcal{N}\left(\boldsymbol{0}_{pn}, \rm I_{pn}\right)\,.
\end{equation}

\end{proof}


\begin{proof}[Proof of Theorem~\ref{thm:cvMP}]
  Based on Proposition~\ref{prop:cvMPdep} in Appendix~\ref{app:rmtBackground}, 
  we need only to check the four following necessary conditions, using the Kronecker delta notation of Eq.(\ref{eq:delta})
\begin{enumerate}
\item $\mathbb{E} \bar{X}_{jk}{X}_{lk}  = \delta_{lj}$, for all $k$,
\item $\frac{1}{n} \max_{j\neq l}\mathbb{E} \left| \bar{X}_{jk}{X}_{lk}\right|^2\rightarrow 0$ uniformly in $k\leq n$,
\item $\frac{1}{n^2}\sum_{\Gamma}\left(\mathbb{E}[\left( \bar{X}_{jk}{X}_{lk}-\delta_{lj}\right)\left( {X}_{j'k}\bar{X}_{l'k}-\delta_{j'l'}\right)\right)^2\rightarrow 0$ uniformly in $k\leq n$, where 
  $\Gamma \!=\!\{(j,l,j',l'):1\leq j,l,j',l'\leq p\}\!\setminus\! \{(j,l,j',l'): j\!=\!j'\!\neq\! l\!=\!l' \mbox{ or }j\!=\!l'\!\neq\! j'\!=\!l\}$,
\item $p/n\rightarrow \alpha \in (0,\infty)$.
\end{enumerate}

Based on the same developments as Theorem~\ref{thm:multiclt}, we use the auxiliary processes
\[
X_{lk}(t)=\frac{\sqrt{K}}{\sqrt{\lambda_kT}}\frac{1}{K}\int_{0}^{t}x_l(s)dP_k(s)
=\frac{1}{\sqrt{K\lambda_kT}}\int_{0}^{t}x_l(s)dP_k(s)=\int_{0}^{t} H_{lk}(s)dP_k(s)
\]
with $P_k$ zero-mean martingale associated to Poisson process of intensity $K\lambda_k$ 
(see Eq.(\ref{eq:multvarPoissonMartingale})) and 
\[
  H_{lk}(t)=\frac{x_l(t)}{\sqrt{K\lambda_kT}}\,,
\]
and will denote $X_{lk}=X_{lk}(T)$, i.e. the random variables that we are concerned with are the final values (at $t=T$) of those processes.

\textbf{Condition 1} is a direct application of results from Eq.(\ref{eq:covMult}) in the proof of Theorem~\ref{thm:multiclt}, because
$\mathbb{E} \left[\bar{X}_{jk}{X}_{lk}\right]$ is the covariance between the coefficients of the normalized coupling matrix.

\textbf{For condition 2}, let us first evaluate
\[
\mathbb{E}\left| \bar{X}_{jk}{X}_{lk}-\delta_{lj}\right|^2\,.
\]
For that we can use Ito's formula of Eq.(\ref{eq:itoJumpMult}) and derive the expression of $\bar{X}_{jk}{X}_{lk}$ as a stochastic integral, using the function 
$F(\bar{X}_{jk},{X}_{lk})= \bar{X}_{jk}{X}_{lk}$
We obtain
\begin{multline}\label{eq:productStochIntMart}
\bar{X}_{jk}{X}_{lk} =  -\int_0^T  \left({X}_{lk}\bar{H}_{jk}(s)+ \bar{X}_{jk}H_{lk}(s)\right)K\lambda_k ds \\
+ \int_0^T \!\!\! \left[\left(\bar{X}_{jk}(s_-)\!\!+\!\!\bar{H}_{jk}(s_-)\right)\left({X}_{lk}(s_-)\!\!+\!\!H_{lk}(s_-)\right)-\bar{X}_{jk}{X}_{lk}(s_-)\right]\left(dP_k(s)\!\!+\!\!K\lambda_k dt\right)\,,\\
\!\!\!= \!\!\!\int_0^T \!\!\! \left({X}_{lk}\bar{H}_{jk}(s_-)\!\!+\!\! \bar{X}_{jk}H_{lk}(s_-)\right)\!dP_k(s) 
\!+\!\!\! \int_0^T \!\!\! \left[\bar{H}_{jk}(s_-)H_{lk}(s_-)\right]\left(dP_k(s)\!\!+\!\!K\lambda_k ds\right)\!.
\end{multline}
The first term is a stochastic integral of a zero mean martingale, while the second term is a stochastic integral of a Poisson counting process, for which we can verify (due to Assumption~\ref{assum:multi}), that it has mean $\delta_{ij}$. As a consequence, $\mathbb{E}\left| \bar{X}_{jk}{X}_{lk}-\delta_{lj}\right|^2$ is the variance of the above expression, which is (by stochastic integral formula)
\begin{multline}
\mathbb{E}\left| \bar{X}_{jk}{X}_{lk}-\delta_{lj}\right|^2=
-\int_0^T  \mathbb{E}\left[\left({X}_{lk}(s_-)\bar{H}_{jk}(s_-)+ \bar{X}_{jk}(s_-)H_{lk}(s_-)\right)^2 \right]K\lambda_k ds \\
+ \int_0^T \left[\bar{H}_{jk}(s_-)H_{lk}(s_-)\right]^2K\lambda_k ds\,.
\end{multline}
Applying again the formula for predictable variation process, we obtain
\begin{multline}
\mathbb{E}\left| \bar{X}_{jk}{X}_{lk}-\delta_{lj}\right|^2=
-\int_0^T  \left[\int_{0}^{s}\left({H}_{lk}(u)\bar{H}_{jk}(s_-)+ \bar{H}_{jk}(u)H_{lk}(s_-)\right)^2 K\lambda_k du\right]K\lambda_k ds \\
+ \int_0^T \left[\bar{H}_{jk}(s_-)H_{lk}(s_-)\right]^2K\lambda_k ds\,.
\end{multline}
Due to Assumption~\ref{assum:multi}, this expression is bounded uniformly for any values of $i,j,n,k$, and condition 2 is fulfilled.

\textbf{For condition 3}, we use the auxiliary result presented in Proposition~\ref{prop:fourthMom} to compute the required fourth order moments.
\begin{multline*}
\frac{1}{K^2\lambda_k^2 }\mathbb{E}\left[\left( \bar{X}_{jk}{X}_{lk}\right)\left( {X}_{j'k}\bar{X}_{l'k}\right)\right]=
\int_{0}^{T} H_{lk} H_{j'k}ds \int_{0}^{T} \bar{H}_{jk} \bar{H}_{l'k}ds\\+\int_{0}^{T} H_{lk} \bar{H}_{jk} ds \int_{0}^{T} H_{j'k} \bar{H}_{l'k}ds +\int_{0}^{T} H_{lk}\bar{H}_{l'k}ds \int_{0}^{T} \bar{H}_{jk} H_{j'k} ds+\frac{1}{K\lambda_k}\int_{0}^{T} H_{lk} \bar{H}_{jk} H_{j'k} \bar{H}_{l'k}ds\\
=\frac{1}{\lambda_k^2 T^2 K^2}\left[
\int_{0}^{T} x_{l} x_{j'}ds \int_{0}^{T} \bar{x}_{j} \bar{x}_{l'}ds+\int_{0}^{T} x_{l} \bar{x}_{j} ds \int_{0}^{T} x_{j'} \bar{x}_{l'}ds +\int_{0}^{T} x_{l}\bar{x}_{l'}ds \int_{0}^{T} \bar{x}_{jk} x_{j'k} ds\right]\\
+\frac{1}{K^3\lambda_k^3T^2}\int_{0}^{T} x_{l} \bar{x}_{j} x_{j'} \bar{x}_{l'}ds
\,.
\end{multline*}
We first consider the term consisting in all products of two integrals, that we call \textit{integral product term}, the last term in this expression will be dealt with independently. Given Assumption~\ref{assum:multi}, it is clear that for $l,j,j',l'$ all different from each other, the integral product term is vanishing. If there happens to be only two indices that are equal, the moment also vanishes (at least one term of each product vanishes).
For the case $j=l=k'=l'$ the integral product term possibly does not vanish, but is uniformly bounded, and only $n$ terms satisfy this relation, such that it will not affect the limit of the relevant expression for condition 3 (due to the $1/n^2$ factor). 

Remains the case where three indices exactly are identical. In such case, one among  $\delta_{jl}$ or $\delta_{j'l'}$ is one while the other is zero. Take $\delta_{jl}=1$ and $\delta_{j'l'}=0$ without loss of generality, assuming $j=l=j'\neq l'$. The relevant quantity of condition 3 is
\begin{multline*}
\frac{1}{K^2\lambda_k^2 }\mathbb{E}\left[\left( \bar{X}_{jk}{X}_{lk}-1\right)\left( {X}_{j'k}\bar{X}_{l'k}\right)\right]=\frac{1}{K^2\lambda_k^2 }\mathbb{E}\left[\left( \bar{X}_{jk}{X}_{lk}\right)\left( {X}_{j'k}\bar{X}_{l'k}\right)\right]-\frac{1}{K^2\lambda_k^2 }\mathbb{E}\left[ {X}_{j'k}\bar{X}_{l'k}\right]\\
=\frac{1}{\lambda_k^2 T^2 K^2}\left[
\int_{0}^{T} x_{l} x_{j'}ds \int_{0}^{T} \bar{x}_{j} \bar{x}_{l'}ds+\int_{0}^{T}\left( x_{l} \bar{x}_{j}-T\right) ds \int_{0}^{T} x_{j'} \bar{x}_{l'}ds +\int_{0}^{T} x_{l}\bar{x}_{l'}ds \int_{0}^{T} \bar{x}_{jk} x_{j'k} ds\right]\\
+\frac{1}{K^3\lambda_k^3T^2}\int_{0}^{T} x_{l} \bar{x}_{j} x_{j'} \bar{x}_{l'}ds
\,,
\end{multline*}
in which, due to Assumption~\ref{assum:multi}, the integral product term still vanishes. As a consequence, the asymptotic behavior we are interested in is given by the behavior of the remaining \textit{single integral term} of the moment: $\frac{1}{K\lambda_k}\int_{0}^{T} x_{l} \bar{x}_{j} x_{j'} \bar{x}_{l'}ds$ (the only remaining non-vanishing terms are bounded and intervene only in $n$ terms of the sum), such that
\begin{multline}
\lim \frac{1}{n^2}\sum_{\Gamma}\left(\mathbb{E}\left[\left( \bar{X}_{jk}{X}_{lk}-\delta_{lj}\right)\left( {X}_{j'k}\bar{X}_{l'k}-\delta_{j'l'}\right)\right]\right)^2 =\\
\lim \frac{1}{n^2K^2\lambda_k^2}\sum_{\Gamma}\left(\mathbb{E}\left[\left( \bar{X}_{jk}{X}_{lk}-\delta_{lj}\right)\left( {X}_{j'k}\bar{X}_{l'k}-\delta_{j'l'}\right)\right]\right)^2 \,.
\end{multline}
Thus condition 3 is satisfied, due to the theorem's assumption.

To sum up, all four necessary condition for the application of Proposition~\ref{prop:cvMPdep} are fulfilled (condition 4 is part of the assumptions), and the convergence to the MP law follows immediately.


\end{proof}

\begin{proof}[Proof of Theorem~\ref{thm:evCvEnds}]
	Let us write the result of	 \citep{chafai2018convergence} readapted to our complex case and adapt the dimension notation ($n\rightarrow p(n)$, $m_n\rightarrow n$, but we keep the notation $X_n$) (we checked in all proofs and lemmas that the result still hold when we replace symmetric matrices by hermitian ones and scalar product of real vectors by Hermitian products of complex vectors, putting an absolute value to the hermitian product when the original scalar product was squared). We consider $\{X_n\}$, a sequence of isotropic (i.e. identity covariance) zero mean random vectors, and consider the empirical covariance matrix that for $n$ independent copies of $X_n$,
	\[
	\widehat{\Sigma}_n = \frac{1}{n}\sum_{k=1}^n \boldsymbol{X}_n^{(k)} {\boldsymbol{X}_n^{(k)}}^H\,.
	\]
	We  rely on the Strong Tail Projection property (STP) that guaranties convergence of the spectral measure of the empirical covariance to the MP law, and convergence of the extreme eigenvalues to the ends of the MP support.
	\begin{defn}[Strong Tail Projection property (STP)]
		STP hold when there exist $f:\mathbb{N}\rightarrow [0,1]$, $g:\mathbb{N}\rightarrow \mathbb{R}^+$ such that $f(r)\rightarrow 0$ and $g(r)\rightarrow 0$ as $r\rightarrow \infty$, and for every $p\in\mathbb{N}$, for any orthogonal projection $P:\mathbb{C}^p\rightarrow\mathbb{C}^p$ of rank $r>0$, for any real $t>f(r).r$ we have
		\[
		\mathbb{P}\left(\left\Vert P \boldsymbol{X}_n\right\Vert^2-r\geq t\right)\leq \frac{g(r)r}{t^2}\,.
		\]
	\end{defn}
	
	By noting that $\mathbb{ E}\left\Vert P X_n\right\Vert ^2=r$, we can use Chebyshev's inequality to satisfy such property: let $\sigma^2$ be the variance of $\left\Vert P \boldsymbol{X}_n\right\Vert^2$, the inequality leads to, for any $t$
	\[
	\mathbb{P}\left(\left\Vert P \boldsymbol{X}_n\right\Vert^2-r\geq \sigma t\right)\leq 	\mathbb{P}\left(\left|\left\Vert P \boldsymbol{X}_n\right\Vert^2-r\right|\geq \sigma t\right)\leq \frac{1}{t^2}\,,
	\]
	so we get $\mathbb{P}\left(\left\Vert P \boldsymbol{X}_n\right\Vert^2-r\geq  t\right)\leq \sigma^2/t^2$ and just need to find an upper bound of $\sigma^2$ of the form $g(r)r$. To limit the complexity of the rank-dependent analysis, we will look for $g$ in the for $g(r)=C/r$ for a fixed positive constant, such that we just need to bound the above variance by a constant. Finer bounds are likely possible but left to future work.
	
	In our specific case, in line with proof of Theorem~\ref{thm:cvMP} we use
	\[
	\boldsymbol{X}_{n}=\int_{0}^{T}\frac{\boldsymbol{x}(t)}{\sqrt{K\lambda T}}dP(t)
	\]
	with $P$ the compensated Poisson process martingale of rate $K\lambda$. In an orthonormal basis adapted to the othogonal projection $P$ with rank $r$, we can rewrite
	\[
	\left\Vert P X_n\right\Vert^2=\sum_{k=1}^r\left|\left\langle \boldsymbol{w}_k,\, \boldsymbol{X}_{n}\right\rangle\right|^2\,, 
	\]
	where $\{\boldsymbol{w}_k\}$ are $r$ orthonormal vectors in $\mathbb{C}^p$
	Then we have
	\[
	\sigma^2 = \sum_{k,l\leq r} \mathbb{ E} \left[\left|\left\langle \boldsymbol{w}_k,\, \boldsymbol{X}_{n}\right\rangle\right|^2\left|\left\langle \boldsymbol{w}_l,\, \boldsymbol{X}_{n}\right\rangle\right|^2-1\right]\,.
	\]
	Using similar fourth order moment results as in Theorem~\ref{thm:cvMP} (based on Proposition~\ref{prop:fourthMom}) leads to an expansion for which all terms vanish but one per expectation, leading to
	\[
	\sigma^2 = \frac{1}{K\lambda T^2}\sum_{k,l\leq r}\int_{0}^{T} \left\langle \boldsymbol{w}_k,\, \boldsymbol{x}(t)\right\rangle 
	\left\langle \boldsymbol{x}(t),\, \boldsymbol{w}_k\right\rangle
	\left\langle \boldsymbol{w}_l,\, \boldsymbol{x}(t)\right\rangle
	\left\langle \boldsymbol{x}(t),\, \boldsymbol{w}_l\right\rangle
	dt \,.
	\]
	which can be rewritten using the Hermitian operator $\mathcal{X}$ acting on the space of $p\times p$ matrices as a positive definite bilinear form
	\[
	\mathcal{X}(U,V) =
	\int_{0}^{T} \left\langle V,\, \boldsymbol{x}\boldsymbol{x}^H(t)\right\rangle 
	\left\langle \boldsymbol{x}\boldsymbol{x}^H(t),\, U\right\rangle
	dt
	\]
	with associated eigenvalues $\xi_1\geq ...\geq \xi_{p^2}\geq 0$ such that
	\[
	\sigma^2 = \frac{1}{K\lambda T^2}\sum_{k,l\leq r} 	\mathcal{X}\left(\boldsymbol{w}_k\boldsymbol{w}_l^H,\boldsymbol{w}_k\boldsymbol{w}_l^H\right) \,.
	\]
	This sum is maximized when the $r^2$ unitary tensor matrices of the sum $\boldsymbol{w}_k\boldsymbol{w}_l$ are eigenvectors associated to the largest eigenvalues of the operator, such that we get
	\[
	\sigma^2 \leq \frac{1}{K\lambda T^2}\sum_{k=1\leq r^2} 	\xi_k
	\]
	which is it self upper bounded by the trace of the operator, leading to
	\[
	\sigma^2 \leq \frac{1}{K\lambda T^2}\sum_{k,l\leq p(n)}\int_{0}^{T} |x_k x_l|^2dt
	\]
	which is bounded according the theorem's assumptions, completing the proof.

\end{proof}

\section{Additional background and useful results}\label{app:addBackground}
\subsection{Jump processes}
Jump process exhibit discontinuities related to the occurrence of random events, 
which are distributed according to the given point process models. 
In this paper we will be concerned with jump times distributed according to (possibly inhomogeneous) Poisson processes. 

\subsubsection{Martingales related to counting processes}\label{app:Hmartingale}
As introduced in section \ref{sec:cpm} (Eq.(\ref{eq:cpm})),
under mild assumptions, we can associate a zero-mean martingale to a counting process $N(t)$:
\begin{equation}
  \label{eq:cpm_background}
  M(t) = N(t) - \int_0^t \lambda(s)ds\,.
\end{equation}
In addition, in our case (deterministic intensity), the variance of $M(t)$ is given by
\[
  V(t) = \mathbb{E}\left[M(t)^2\right]=  \int_0^t \lambda(s)ds\,. 
\]
\subsubsection{Stochastic integrals}
Now if we consider a deterministic predictable process $H$ 
(w.r.t. to the same filtration $\mathcal{F}_t$), the stochastic integration
\begin{equation}
  \label{eq:defCmartingale}
  M_H(t)=\int_0^t H(s)dM(s)ds\,.
\end{equation}
Using Eq.(\ref{eq:cpm_background}), we can write:
\begin{equation}
  \label{eq:splitCmartingale_background}
  M_H(t) =  \int_{0}^t H(s) dN(s) - \int_{0}^t H(s) \lambda(s) ds\,.
\end{equation}
which is equivalent to Eq.(\ref{eq:doobmeyer}) which introduced the separation of the deterministic component of empirical coupling measure from the (zero-mean) random fluctuations of the measure.
$M_H(t)$ is also a zero-mean martingale with respect to history $\{\mathcal{F}_t\}$. 
This trivially entails that $\mathbb{E}\left[M_H(t)\right]=0$ at all times.

\subsubsection{Second order statistics}
In addition, the second order statistics of such stochastic integrals can be explicitly derived from the original intensities. 
In particular for $M_H(t)=\int_0^t H(s)dM(s)ds$, we have the variance 
\begin{equation}\label{eq:martinVar}
V_H(t) =  \mathbb{E}\left[M_H(t)^2\right]=\int_0^t H(s)^2\lambda(s)ds\,,
\end{equation}
that corresponds to its \textit{predictable variation process} 
(see \citet[section 2.2.6]{aalen2008SurvivalEventHistory}). 
A similar result applies to covariance as well: 
let $G$ and $H$ be deterministic predictable, then 
\begin{equation}\label{eq:martinCov}
  V_{H,G}(t) =  \mathbb{E}\left[M_H(t)M_G(t)\right]=\int_0^t H(s)G(s)\lambda(s)ds\,. 
\end{equation}
Importantly, let us mention that this non-vanishing covariance reflects the fact that both stochastic integrals are computed from the same realization of $M(t)$. 
If two stochastic integrals are derived from independent point processes, 
the resulting covariance between them is zero.

\subsubsection{General jump stochastic processes}
For the proofs of our results, 
it is convenient to state some general results for jump processes combine deterministic and a jump stochastic integral, 
decomposable as
\begin{equation}\label{eq:jumpProc}
X(t) = X(0)+\int_0^t f(X(s),s)ds+\int_{0}^{t} h(X(s),s) dN(s)\,,	
\end{equation}
with $N(t)$ a Poisson process with intensity $\lambda(t)$, $f$ and $h$ square integrable. This clearly includes the martingales defined above. 

\subsubsection{Mean stochastic jump integrals}\label{app:meanStochInt}
According to \citet[Theorem 3.20]{hanson2007applied}, 
we can compute the expectation of $X(t)$ defined in Eq.(\ref{eq:jumpProc}).
\begin{equation}\label{eq:meanJumpInt}
\mathbb{E}[X(t)] = \mathbb{E}[X(0)] +  \int_0^t f(X(s),s)ds+\int_{0}^{t} \mathbb{E}\left[h(X(s),s)\right] \lambda(s)ds\,.
\end{equation}
This allows to retrieve the zero-mean property of the stochastic integral of martingales.

\subsubsection{It\^o's formula}\label{app:itoFormula}
It\^o's formula or It\^o's lemma is an identity to find the differential of a function of a stochastic process. 
It is a counterpart of the chain rule used to compute the differential of composed functions. 
We restrict ourselves to the case of a time independent scalar function of a jump process, 
while different formulas exist for other cases.

A generalized chain rule for the time derivative of such processes allows to derive an integral formula for scalar process $Y(t)= F(X(t))$ with $F$ continuously differentiable 
(see \citet[Lemma~4.22, Rule~4.23]{hanson2007applied}):
\begin{multline}\label{eq:itoJump}
  Y(t) = Y(0) +\int_0^t \frac{dF}{dx}(X(s))f(X(s),s)ds \\
  + \int_0^t \left[F\left(X(s_-)+h(X(s_-),s)\right) -F(X(s_-))\right]dN(s)\,,
\end{multline}
where $X(s_-)=\lim_{t\rightarrow s_-} X(t)$ indicates the left limit.

For a scalar function of a multivariate process ${Y}(t)= F(\boldsymbol{X}(t))$ with
\begin{equation}\label{eq:jumpProcMult}
  \boldsymbol{X}(t) = \boldsymbol{X}(0)+\int_0^t
  \boldsymbol{f}(\boldsymbol{X}(s),s)ds+\int_{0}^{t}
  \boldsymbol{h}(\boldsymbol{X}(s),s) dN(s)\,,	
\end{equation}
the generalization is straightforward
\begin{multline}\label{eq:itoJumpMult}
  Y(t) = Y(0) +\int_0^t \sum_k
  \frac{dF}{dx_k}(\boldsymbol{X}(s))f_k(\boldsymbol{X}(s),s)ds \\
  + \int_0^t \left[F\left(\boldsymbol{X}(s_-)+\boldsymbol{h}(\boldsymbol{X}(s_-),s)\right) -F(\boldsymbol{X}(s_-))\right]dN(s)\,.
\end{multline}
This allows retrieving the expression of martingale second order statistics presented above, 
as well as computing higher order moments required in the proof of Theorem~\ref{thm:cvMP}.

An application of this formula that we will use is the following
\begin{prop}\label{prop:fourthMom}
	Assume $W(t)=\int_0^t A(s)dM(s)$, $X(t)=\int_0^t B(s)dM(s)$, $Y(t)=\int_0^t C(s)dM(s)$, $Z(t)=\int_0^t D(s)dM(s)$ are stochastic s of the same (possibly inhomogeneous) Poisson process martingale $M(t)= N(t)-\int_{0}^{t}\lambda(s) ds$ with intensity $\lambda(t)$. Then
	\begin{multline}\label{eq:4mom}
	\mathbb{E}\left[WXYZ\right](t)=\int_{0}^{t} ABCD(s_-)\lambda(s)ds+\left(\int_{0}^{t} AB(s)\lambda(s)ds\right)\left(\int_{0}^{t} CD(s)\lambda(s)ds\right)\\+\left(\int_{0}^{t} AC(s)\lambda(s)ds\right)\left(\int_{0}^{t} BD(s)\lambda(s)ds\right)+\left(\int_{0}^{t} AD\lambda(s)(s)ds\right)\left(\int_{0}^{t} BC(s)\lambda(s)ds\right)\,.
	\end{multline}
\end{prop}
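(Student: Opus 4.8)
The plan is to view $(W,X,Y,Z)$ as a single four-dimensional jump process driven by the common Poisson process $N$ and apply the multivariate It\^o formula of Eq.~(\ref{eq:itoJumpMult}) to the product functional $F(w,x,y,z)=wxyz$. Concretely, writing $\mathbf{U}(t)=(W(t),X(t),Y(t),Z(t))$, each coordinate has the form of Eq.~(\ref{eq:jumpProcMult}) with jump vector $\mathbf{h}(s)=(A,B,C,D)(s)$ and drift $-\mathbf{h}(s)\lambda(s)$, since $dM=dN-\lambda\,ds$. The first key step is to expand the jump increment $F(\mathbf{U}(s_-)+\mathbf{h})-F(\mathbf{U}(s_-))$: because $F$ is multilinear, this equals the sum over all nonempty subsets $S\subseteq\{1,2,3,4\}$ of the monomial obtained by replacing the coordinates indexed by $S$ with the corresponding entries of $\mathbf{h}$ and keeping the left-limits of the others. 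Grouping by $|S|$ gives four families of terms ($|S|=1,2,3,4$).

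The second step is to take expectations using the mean stochastic jump integral formula Eq.~(\ref{eq:meanJumpInt}), which kills the $dM$-martingale part of the $dN$ integral and leaves only deterministic $\lambda\,ds$ integrals, with expectations pulled inside (the coefficients $A,B,C,D$ being deterministic, they factor out). I then match terms by cardinality. The $|S|=1$ family reproduces, up to sign, exactly the It\^o drift term $\sum_k \partial_k F\,(-h_k\lambda)$, so these cancel. The $|S|=3$ family carries a single surviving process left-limit such as $\mathbb{E}[Z(s_-)]$, which vanishes because each stochastic integral of $M$ is zero-mean; hence this family contributes nothing. The $|S|=4$ term is $ABCD$ integrated against $\lambda\,ds$, producing precisely the diagonal contribution $\int_0^t ABCD(s_-)\lambda(s)\,ds$.

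The heart of the argument is the $|S|=2$ family of six terms, for instance $AB(s)\,Y(s_-)Z(s_-)$. Using the second-order statistics of Eq.~(\ref{eq:martinCov}), each such term contributes $\int_0^t AB(s)\lambda(s)\big(\int_0^s CD(u)\lambda(u)\,du\big)\,ds$, i.e.\ a \emph{triangle} integral over $\{u<s\}$. The six terms come in three pairs sharing the same coefficient pairing but with the roles of the two integration variables exchanged (e.g.\ $AB\,Y_-Z_-$ pairs with $CD\,W_-X_-$). For each pair, Fubini on the square $[0,t]^2$ (equivalently, the product rule $\int_0^t f\,(\int_0^s g)\,ds+\int_0^t g\,(\int_0^s f)\,ds=(\int_0^t f)(\int_0^t g)$) recombines the two triangle integrals into one full product of integrals, yielding the three Isserlis-type pairing terms of the claimed identity. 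Collecting the surviving families then gives Eq.~(\ref{eq:4mom}).

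I expect the main obstacle to be bookkeeping: correctly enumerating the fifteen subset-monomials, verifying the exact cancellation of the $|S|=1$ family against the It\^o drift, and checking that the six $|S|=2$ triangle integrals pair up with the right complementary coefficients so that the Fubini recombination produces exactly the three symmetric products and no cross terms. A secondary point to state carefully is that replacing left-limits $Y(s_-)$ by $Y(s)$ inside the $ds$-integrals is legitimate since jump times form a Lebesgue-null set, so Eq.~(\ref{eq:martinCov}) applies verbatim; no recursion beyond the already-established first- and second-order moments of Eqs.~(\ref{eq:martinVar}) and (\ref{eq:martinCov}) is needed, which keeps the computation self-contained.
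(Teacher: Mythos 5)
Your proposal is correct and follows essentially the same route as the paper's proof: apply the multivariate It\^o formula to $F(w,x,y,z)=wxyz$, expand the jump increment by multilinearity into the fifteen subset-monomials, observe that the single-replacement terms recombine with the drift into a zero-mean martingale integral, that the triple-replacement terms vanish by the zero-mean property of a single surviving stochastic integral, that the full-replacement term yields the $\int ABCD\,\lambda$ contribution, and that the six pair-replacement terms produce triangle integrals which recombine pairwise via the product rule $\int f\int g=\int\bigl(g\int f+f\int g\bigr)$ into the three Isserlis-type products. The only cosmetic difference is that you invoke Eq.~(\ref{eq:martinCov}) directly for the second-order moments where the paper re-derives them via Eq.~(\ref{eq:productStochIntMart}); the content is identical.
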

\begin{proof}
	We apply the above formula to $F(W,X,Y,Z)=WXYZ$, yielding
	\begin{multline*}
	WXYZ(t)=-\int_{0}^{t} \left(AXYZ(s)+WBYZ(s)+WXCZ(s)+WXYD(s)\right)\lambda ds\\
	+\int_{0}^{t} \left[(W(s_-)+A)(X(s_-)+B)(Y(s_-)+C)(Z(s_-)+D) - WXYZ(s_-)\right]dN(s)\,.
	\end{multline*}
	Expanding the second term we obtain the formula
	\begin{multline*}
	WXYZ(t)=\int_{0}^{t} \left(AXYZ(s)+WBYZ(s)+WXCZ(s)+WXYD(s)\right)dM(s)\\
	+\int_{0}^{t} \!\!\!\left( ABYZ(s_-)\!+\!AXCZ(s_-)\!+\!AXYD(s_-)\!+\!WBCZ(s_-)\!+\!WBYD(s_-)\!+\!WXCD(s_-)\right)dN(s)\\
	+\int_{0}^{t} ABCD(s_-)dN(s)+\int_{0}^{t} \left(ABCZ(s_-)+AXCD(s_-)+ABYD(s_-)+WBCD(s_-)\right)dN(s)	\,.
	\end{multline*}
	The first and last integral terms in this last formula have vanishing expectation, the first because it is a stochastic integral of zero mean martingale $M$, the last because each term inside the integral contains only one random variable, which is itself a stochastic integral of the martingale $M$ (and thus zero mean). Thus for the expectation we get
	\begin{multline}
	\mathbb{E}\left[WXYZ\right](t)=\int_{0}^{t} ABCD(s_-)d\lambda(s)+\int_{0}^{t} \!\!\!\left( AB\mathbb{E}YZ(s_-)\!+\!AC\mathbb{E}XZ(s_-)\!\right.\\\left.+\!AD\mathbb{E}XY(s_-)\!+\!BC\mathbb{E}WZ(s_-)\!+\!BD\mathbb{E}WY(s_-)\!+\!CD\mathbb{E}WX(s_-)\right)\lambda(s) ds\,.
	\end{multline}
	Based on the It\^o integral formula, one can easily derive and expression for the expectation of each product of two variables (see Eq.(\ref{eq:productStochIntMart})), leading to, after reordering the terms
	\begin{multline}
	\mathbb{E}\left[WXYZ\right](t)=\int_{0}^{t} ABCD(s_-)d\lambda(s)+\int_{0}^{t} \!\!\!\left( AB(s_-)\int_{0}^{s} CD(u_-)\lambda(u)du\!\right.\\\left.+CD(s_-)\int_{0}^{s}AB(u_-)\lambda(u)du\!+\!AC(s_-)\int_{0}^{s}BD(u_-)\lambda(u)du\!+\!BD(s_-)\int_{0}^{s}AC(u_-)\lambda(u)du\right.\\\left.+\!AD(s_-)\int_{0}^{s}BC(u_-)\lambda(u)du\!+\!BC(s_-)\int_{0}^{s}AD(u_-)\lambda(u)du\!\!\right)\lambda(s) ds\,.
	\end{multline}
	We then observe that the terms inside the integral can be paired such that integral form of the product derivative formula ($\int f \int g=\int \left(g\int f+f\int f\right)$) can be applied, leading directly to Eq.(\ref{eq:4mom}).
\end{proof}

\subsection{Notions of convergence}\label{app:convergence}
In contrast to finite dimensional vectors, 
there is a different and non-equivalent notions of convergence for functions and random variables. 
We explain the two types of convergence encountered in this paper.
For a random variable $X$, we consider its probability measure $\mu_X$ such that
\[
  \mu_X(A)=P(X\in A)\,,
\]
and its associated cumulative distribution function (CDF)
\[
  F_X(x)=\mu_{X}\left(\left(-\infty,x\right]\right)=P(X\leq x)
\]

\subsubsection{Convergence in distribution}
The classical definition is based on the CDF.
\begin{defn}[Convergence in distribution]
  We say that sequence of random variables $\{X_n\}$ converges in distribution (or in law) to $X$ whenever
  \[
    F_{X_n}(x)\underset{n\rightarrow+\infty}{\longrightarrow}F_X\,,
  \]
  at all continuity points of $F_X$. This is then denoted $X_n\overset{D}{\longrightarrow}X$.
\end{defn}
An equivalent definition can be formulated in terms of weak convergence:
\begin{prop}\label{prop:distCV}
  $X_n\overset{D}{\longrightarrow}X$ if and only if, for any bounded continuous function $f$,
  \[
    \mathbb{E}\left[f(X_n)\right] = \int f d\mu_{X_n} \rightarrow
    \mathbb{E}\left[f(X)\right] =\int f d\mu_X \,,
  \]
  that is, in classical topological terms, the measure $\mu_{X_n}$ converges weakly to $\mu_X$.
\end{prop}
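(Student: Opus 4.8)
The plan is to establish the two implications separately, since this is the classical equivalence between convergence in distribution (phrased through the cumulative distribution functions) and weak convergence of the associated probability measures. I would first prove that weak convergence implies pointwise convergence of the CDFs at continuity points, which is the more elementary direction, and then treat the converse, which is the crux. Throughout I would use that a monotone bounded function has at most countably many discontinuities, so the continuity points of $F_X$ are dense in $\mathbb{R}$.

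For the direction ``weak $\Rightarrow$ CDF'', fix a continuity point $x$ of $F_X$ and $\epsilon>0$. The key idea is to sandwich the indicator $\mathbf{1}_{(-\infty,x]}$ between bounded continuous functions. Concretely, I would take a continuous $f_\epsilon$ with $\mathbf{1}_{(-\infty,x]}\le f_\epsilon \le \mathbf{1}_{(-\infty,x+\epsilon]}$ (piecewise linear, equal to one on $(-\infty,x]$ and to zero beyond $x+\epsilon$), so that $F_{X_n}(x)\le \mathbb{E}[f_\epsilon(X_n)] \to \mathbb{E}[f_\epsilon(X)] \le F_X(x+\epsilon)$ by Proposition~\ref{prop:distCV}'s hypothesis, giving $\limsup_n F_{X_n}(x)\le F_X(x+\epsilon)$. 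A symmetric construction with a function supported to the left of $x$ yields $\liminf_n F_{X_n}(x)\ge F_X(x-\epsilon)$. Letting $\epsilon\to 0$ and invoking continuity of $F_X$ at $x$ (together with right-continuity of any CDF) pinches both bounds to $F_X(x)$, proving $F_{X_n}(x)\to F_X(x)$.

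For the converse ``CDF $\Rightarrow$ weak'', I would use the Skorokhod coupling. Define the generalized inverses (quantile functions) $F_{X_n}^{-1}$ and $F_X^{-1}$ on $(0,1)$, and realize $Y_n=F_{X_n}^{-1}(U)$ and $Y=F_X^{-1}(U)$ on a common space with $U$ uniform on $(0,1)$; then $Y_n$ and $Y$ have laws $\mu_{X_n}$ and $\mu_X$. The main step is to show $Y_n\to Y$ almost surely: at each $u$ where $F_X^{-1}$ is continuous — and there are only countably many exceptions, hence a Lebesgue-null set — the hypothesis that $F_{X_n}\to F_X$ at continuity points of $F_X$ forces $F_{X_n}^{-1}(u)\to F_X^{-1}(u)$. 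Given a bounded continuous $f$, continuity gives $f(Y_n)\to f(Y)$ almost surely, and the bounded convergence theorem yields $\mathbb{E}[f(Y_n)]\to \mathbb{E}[f(Y)]$, which is exactly $\int f\,d\mu_{X_n}\to \int f\,d\mu_X$.

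The main obstacle is this last direction: establishing the almost-sure convergence of the quantile functions requires carefully handling the points where $F_X$ fails to be strictly increasing or continuous and verifying that the exceptional set of $u$ is indeed null. An alternative that avoids the coupling is a direct step-function approximation of $\int f\,d\mu_{X_n}$ on a partition whose nodes are all continuity points of $F_X$, so that each increment $F_{X_n}(x_i)-F_{X_n}(x_{i-1})$ converges; this trades the quantile analysis for the bookkeeping of tail mass (choosing continuity points $a,b$ with $F_X(a)$ small and $F_X(b)$ near one) and uniform continuity of $f$ on the compact interval $[a,b]$. Either route reduces the proof to elementary estimates once the continuity-point structure of $F_X$ is exploited.
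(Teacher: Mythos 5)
The paper itself contains no proof of this proposition: it is stated in the background material of Appendix~\ref{app:convergence} as the classical equivalence between convergence in distribution and weak convergence of the laws (the real-line case of the portmanteau theorem, a textbook fact for which the paper implicitly defers to standard references such as \citet{billingsley1995probability}, cited nearby for Cram\'er--Wold), so there is no in-paper argument to compare yours against. Judged on its own, your proposal is a correct proof of the classical statement. The sandwich argument for the direction ``weak convergence $\Rightarrow$ convergence of CDFs at continuity points'' is complete as written; the parenthetical appeal to right-continuity is superfluous, since continuity of $F_X$ at $x$ already pinches both the $\limsup$ and $\liminf$ bounds to $F_X(x)$. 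For the converse, the Skorokhod quantile coupling is sound, and you correctly isolate the one delicate step — almost-sure convergence of $F_{X_n}^{-1}(U)$ to $F_X^{-1}(U)$ — together with the reason it holds: $F_X^{-1}$ is nondecreasing, hence has at most countably many discontinuities, a Lebesgue-null set of $u$'s, and at a continuity point $u$ one uses the density of continuity points of $F_X$ to choose sandwich points $x<F_X^{-1}(u)<y$ at which $F_{X_n}\to F_X$, forcing the quantiles to converge; bounded convergence then yields $\int f\,d\mu_{X_n}\to\int f\,d\mu_X$. Your alternative Helly--Bray-style route (partitioning at continuity points of $F_X$, controlling tails via continuity points $a,b$ with $F_X(a)$ small and $F_X(b)$ near one, and uniform continuity of $f$ on $[a,b]$) is equally valid and more elementary; the coupling route costs more care at the quantile step but yields the Skorokhod representation as a reusable by-product. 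Either version would serve as a legitimate proof of the proposition the paper merely asserts.
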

The generalization to multidimensional variable encountered in Theorem~\ref{thm:multiclt} 
consists simply in replacing the cumulative distribution by its multivariate version, 
$F_{\boldsymbol{X}}(x) = P(X_1<x_1,\dots,X_n<x_n)$ in the above definition. 
As simple necessary and sufficient condition for $\boldsymbol{X}\longrightarrow \boldsymbol{Y}$ is that for all vectors $\boldsymbol{t}$, 
$\boldsymbol{t}^\top \boldsymbol{X}\longrightarrow \boldsymbol{t}^\top \boldsymbol{Y}$ 
(this is the Cram\'er-Wold theorem, see \citet{billingsley1995probability}).

\subsubsection{Convergence in probability}
This is a stronger notion of convergence, 
denotes $X_n\overset{P}{\longrightarrow} X$ stating that for any $\epsilon>0$
\begin{equation}\label{eq:cvProb}
  P\left(|X_n-X| >\epsilon\right)\underset{n\rightarrow +\infty}{\longrightarrow} 0\,.
\end{equation}

It can be shown that convergence in probability implies convergence in distribution. The converse is true only in special cases such as
\begin{prop}
  If $X$ converges in distribution to a (deterministic) constant $c$, 
  then it also converges to it in probability.
\end{prop}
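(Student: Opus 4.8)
The plan is to unwind the definition of convergence in distribution in terms of cumulative distribution functions and to exploit the very special shape of the limiting CDF when the limit is a deterministic constant. Throughout I will write the sequence as $\{X_n\}$ and denote the target constant by $c$.

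First I would identify the limiting distribution. The law of the constant $c$ is the Dirac mass $\delta_c$, whose CDF is the step function $F_c(x)=\boldsymbol{1}_{x\geq c}$. Its only point of discontinuity is $x=c$; every $x\neq c$ is a continuity point of $F_c$. Hence, by the definition of convergence in distribution, the hypothesis $X_n\overset{D}{\longrightarrow}c$ gives $F_{X_n}(x)\to F_c(x)$ for every $x\neq c$. In particular, fixing any $\epsilon>0$, the two evaluation points $c+\epsilon$ and $c-\epsilon$ are automatically distinct from $c$, so they are continuity points and we obtain $F_{X_n}(c+\epsilon)\to F_c(c+\epsilon)=1$ and $F_{X_n}(c-\epsilon)\to F_c(c-\epsilon)=0$.

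Next I would convert the ``bad event'' probability into CDF differences. Using the inclusion $\{|X_n-c|\leq\epsilon\}\supseteq\{c-\epsilon<X_n\leq c+\epsilon\}$, we get the bound
\[
P\!\left(|X_n-c|>\epsilon\right)=1-P\!\left(|X_n-c|\leq\epsilon\right)\leq 1-\big(F_{X_n}(c+\epsilon)-F_{X_n}(c-\epsilon)\big)\,.
\]
Letting $n\to+\infty$, the right-hand side tends to $1-(1-0)=0$. Since $\epsilon>0$ was arbitrary, this is exactly the convergence-in-probability criterion of Eq.~(\ref{eq:cvProb}), namely $X_n\overset{P}{\longrightarrow}c$, completing the argument.

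This proposition is elementary, so there is no genuine obstacle; the only point requiring care is the bookkeeping between strict and non-strict inequalities, since $F_{X_n}(x)=P(X_n\leq x)$ encodes a non-strict event. The key simplification, which avoids any appeal to continuity-point assumptions on the limit, is that the single discontinuity of $F_c$ sits exactly at $c$, so the symmetric evaluation points $c\pm\epsilon$ are continuity points for \emph{every} $\epsilon>0$. It is worth emphasizing that this one-directional implication is all that is needed in the main text: it is invoked to upgrade distributional limits of the normalization constants (such as $\nu_K$ in the proof of Corollary~\ref{corol:1DPLV}) to convergence in probability, enabling the use of Slutsky-type arguments.
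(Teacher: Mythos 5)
Your proof is correct: the reduction to the two continuity points $c\pm\epsilon$ of the Dirac CDF, the sandwich $P(|X_n-c|\leq\epsilon)\geq F_{X_n}(c+\epsilon)-F_{X_n}(c-\epsilon)\to 1$, and the care with strict versus non-strict inequalities are all handled properly. Note that the paper states this proposition without any proof at all, as classical background material in Appendix~\ref{app:convergence}, so there is no in-paper argument to compare against; what you have written is precisely the standard textbook demonstration that the paper implicitly relies on, and your closing observation about its role (upgrading the distributional limit of $\nu_K$ to convergence in probability so that the Slutsky-type product argument in the proof of Corollary~\ref{corol:1DPLV} applies) matches how the result is actually used in the main text.
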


An extension to the multivariate case is obtained in finite vector spaces by replacing the absolute value in Eq.(\ref{eq:cvProb}) by any norm, 
or simply by requiring the convergence of all components individually.

\subsubsection{Convergence of random measures}
The ESDs are random measures, and as such, random variables leaving in an infinite dimensional space of measures. 
This means that for a fixed realization $\omega$, 
the random measure $\boldsymbol{\mu}$ takes deterministic value $\boldsymbol{\mu}(\omega)$.

Several types of convergence can be defined. 
First, the notion of \textit{convergence  weakly in probability} can
be seen as a combination of the above definitions. 
It is known that the weak convergence of deterministic measures 
(see Proposition~\ref{prop:distCV}) 
can be associated to a (non-unique) metric 
(the topological space of weak convergence is metrizable). 
Let us pick such a metric $\rho(\mu,\nu)$ between two deterministic measures, then

\begin{defn}[Convergence weakly in probability]
  The sequence of random measures $\boldsymbol{\mu}_n$ converges \textit{weakly in probability} to the deterministic measure $\nu$ for any $\epsilon>0$
  \begin{equation}\label{eq:cvWeakProb}
    P\left(\rho (\boldsymbol{\mu}_n,\nu) >\epsilon\right)\underset{n\rightarrow +\infty}{\longrightarrow} 0\,.
  \end{equation}
\end{defn}

Next, we can also define convergence with probability 1 (also called almost sure convergence).
\begin{defn}[Convergence (weakly) with probability one.]
  The sequence of random measures $\boldsymbol{\mu}_n$ converges weakly with probability one to the deterministic measure $\nu$ for any $\epsilon>0$
  \begin{equation}\label{eq:cvWeakProb}
    P\left(\rho (\boldsymbol{\mu}_n(\omega),\nu) \underset{n\rightarrow +\infty}{\longrightarrow} 0\right)=1\,.
  \end{equation}
\end{defn}
As for the case of scalar random variables, 
convergence with probability one implies convergence in probability.

\subsection{Random matrix theory}\label{app:rmtBackground}
\subsubsection{Wishart ensemble}\label{app:wishartBackground}

\label{sec:wishart}
Let $\boldsymbol{X}$ be a $p\times n$ data matrix.
Assume that coeffcient of $\boldsymbol{X}$, $x_{ij}$ are i.i.d. $\mathcal{N}_{\mathbb{C}}\left(0, 1\right)$.
$\mathcal{N}_{\mathbb{C}}$ specifies a \textit{standard complex normal} distribution.
By definition, this means that $x_{ij} = x_{ij}^{real} + \boldsymbol{i} x_{ij}^{imag}$,
where $x_{ij} = x_{ij}^{real}$ and $x_{ij}^{imag}$ are independent
(real) $\mathcal{N}(0,\frac{1}{2})$.
This implies that, columns of $\boldsymbol{X}$
are i.i.d. $\mathcal{N}_{\mathbb{C}}\left(\boldsymbol{0}_p, \rm I_p\right)$
and similary the real and imaginary parts are
$\mathcal{N}\left(\boldsymbol{0}_p, \rm I_p/2 \right)$.

As mentioned in main text,
as $n$ grows and $\frac{p}{n}\underset{n\rightarrow+\infty}{\rightarrow} \alpha \in (0,+\infty)$, 
the ESD of the so called Wishart ensemble $\boldsymbol{S}_n=\frac{1}{n}\boldsymbol{X} \boldsymbol{X}^H$, 
converges to the Marchenko-Pastur law $\mu_{MP}(x)$ \citep{marchenko1967distribution} with density
\begin{equation}
   \label{eq:mpLaw2}
  \frac{d\mu_{MP}}{dx}(x) = \frac{1-\alpha}{\alpha}\boldsymbol{1}_{\alpha>1}\delta_0+
    \frac{1}{2\pi \alpha x}\sqrt{(b-x)(x-a)}\boldsymbol{1}_{[a,b]}\,,
\end{equation}
with $a=(1 - \sqrt{\alpha})^2$ and $b=(1 + \sqrt{\alpha})^2$
(see examples for Marchenko-Pastur law for different values of $\alpha$ in Figure~\ref{fig:mpLawIlust}).

We wrote here the general formula that holds for all $\alpha>0$, accounting for zero eigenvalues with a Dirac mass in zero in the rank deficient case $\alpha>1$. 


\subsubsection{Stieltjes transform of ESD}
The Stieltjes transform is a very useful tools to establish the convergence of ESD and determine its limit. 
The Stieltjes transform of a measure $\mu$ is defined as 
\[
  m_{\mu}(z)=\int \frac{1}{x-z} d\mu(x)\,, z\in \mathbb{C}\setminus \mathbb{R}\,.
\]
A key example for us is the Stieltjes transform of the MP law, that writes
\[
  m(z)=\frac{1-c-z+\sqrt{(1+c-z)^2-4c}}{2cz}\,.
\]
Many important results relate measures to their Stieltjes transform. 
For our needs, we only need the property that the Stieltjes transform identifies the limit of a sequences of measures, 
with the following proposition that immediately derives from 
\citet[Theorem~2.4.4]{anderson2010introduction}.
\begin{prop}\label{prop:StieltjesUnique}
  If two sequences of random measures $\{\boldsymbol{\mu}_k\}$ and 
  $\{\boldsymbol{\nu}_k\}$ converge weakly in probability to a deterministic with identical Stieltjes transform, they converge to the same measure.
\end{prop}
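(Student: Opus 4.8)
The plan is to reduce the statement to the purely deterministic fact that a finite positive measure on $\mathbb{R}$ is uniquely determined by its Stieltjes transform, which is exactly the content of the Stieltjes--Perron inversion formula underlying \citet[Theorem~2.4.4]{anderson2010introduction}. First I would name the two deterministic limits: let $\mu$ denote the weak-in-probability limit of $\{\boldsymbol{\mu}_k\}$ and $\nu$ the weak-in-probability limit of $\{\boldsymbol{\nu}_k\}$, both of which exist and are deterministic by hypothesis, and both of which are probability measures (the ESDs have unit mass, and total mass is preserved under weak convergence). The assumption is then precisely $m_\mu(z)=m_\nu(z)$ for all $z\in\mathbb{C}\setminus\mathbb{R}$, and the entire proposition amounts to showing that this forces $\mu=\nu$.

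The core step is the inversion formula. With the sign convention $m_\mu(z)=\int\frac{1}{x-z}\,d\mu(x)$ used in the excerpt, one checks that $\text{Im}\,m_\mu(x+\boldsymbol{i}\eta)=\int\frac{\eta}{(t-x)^2+\eta^2}\,d\mu(t)$ is a Poisson-kernel integral, so that for any two continuity points $a<b$ of the cumulative distribution function of $\mu$ the interval mass is recovered as
\[
\mu((a,b))=\lim_{\eta\downarrow 0}\frac{1}{\pi}\int_a^b \text{Im}\,m_\mu(x+\boldsymbol{i}\eta)\,dx\,,
\]
and the same identity holds for $\nu$ with $m_\nu$ in place of $m_\mu$. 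Since $m_\mu=m_\nu$ by assumption, the right-hand sides coincide whenever $a,b$ are simultaneously continuity points of both measures. Because a finite measure has at most countably many atoms, the set of such common continuity points is dense in $\mathbb{R}$, so $\mu$ and $\nu$ assign equal mass to a generating $\pi$-system of intervals; by the uniqueness part of Carath\'eodory's extension (a Dynkin $\pi$--$\lambda$ argument) this yields $\mu=\nu$ as Borel measures.

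Finally I would conclude that, writing $\rho:=\mu=\nu$, both sequences $\{\boldsymbol{\mu}_k\}$ and $\{\boldsymbol{\nu}_k\}$ converge weakly in probability to one and the same deterministic measure $\rho$, which is the assertion. I expect the only genuine obstacle to be bookkeeping rather than depth: one must confirm that the inversion formula applies (finiteness of the limit measures, guaranteed by mass preservation of the ESDs) and observe that the randomness of $\{\boldsymbol{\mu}_k\},\{\boldsymbol{\nu}_k\}$ is irrelevant to the identification step precisely because both limits are assumed deterministic --- the probabilistic content lives entirely in the convergence hypotheses, while the equality of the common limit is a deterministic consequence of the injectivity of the Stieltjes transform on finite measures. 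As used in the proof of Theorem~\ref{thm:cvMP}, the proposition then lets one upgrade ``the ESDs' Stieltjes transforms converge to that of the MP law'' into ``the ESDs converge weakly in probability to the MP law.''
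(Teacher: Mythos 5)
Your proposal is correct and takes essentially the same approach as the paper, which disposes of this proposition by citing \citet[Theorem~2.4.4]{anderson2010introduction}: your argument simply unpacks that citation, proving the injectivity of the Stieltjes transform on finite measures via the Stieltjes--Perron inversion formula and correctly noting that, since both limits are assumed deterministic, the identification step is purely deterministic while the probabilistic content lives only in the convergence hypotheses. No gaps.
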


\subsubsection{Convergence to MP for matrices with dependent coefficients}
Based on the above, we can now write a results that is a combination of results found in \citet{bai2008large} (mainly Theorem~1.1 and Corollary~1.1) adapted to our specific case. We consider a sequence of random matrices $\{\boldsymbol{X}_n\}$ with independent columns and study the ESD of 
\[
\boldsymbol{S}_n=\frac{1}{n}\boldsymbol{X}_n \boldsymbol{X}_n^H\,.
\]
In the following proposition, we use the Kronecker delta symbold delta (Eq.\ref{eq:delta})
 and denote by $\bar{X}$ the complex conjugate of $X$.
\begin{prop}\label{prop:cvMPdep}
  Let  As $n\rightarrow \infty$, assume the following. Let 
  
  \begin{enumerate}
  \item $\mathbb{E} \bar{X}_{jk}{X}_{lk} = \delta_{lj}$, for all $k$,
  \item $\frac{1}{n} \max_{j\neq l}\mathbb{E}\left| \bar{X}_{jk}{X}_{lk}-\delta_{lj}\right|^2\rightarrow 0$ uniformly in $k\leq n$,
  \item $\frac{1}{n^2}\sum_{\Gamma}\left(\mathbb{E}\left( \bar{X}_{jk}{X}_{lk}-\delta_{lj}\right)\left( {X}_{j'k}\bar{X}_{l'k}-\delta_{j'l'}\right)\right)^2\rightarrow 0$ uniformly in $k\leq n$, where 
    $\Gamma \!=\!\{(j,l,j',l'):1\leq j,l,j',l'\leq p\}\!\setminus\! \{(j,l,j',l'): j\!=\!j'\!\neq\! l\!=\!l' \mbox{ or }j\!=\!l'\!\neq\! j'\!=\!l\}$,
  \item $p/n\rightarrow \alpha\in (0,\infty)$.
  \end{enumerate}
  Then, with probability 1, the ESD of $\boldsymbol{S}_n$ tends (weakly) to the MP law.
\end{prop}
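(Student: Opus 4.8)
The plan is to obtain the proposition as the complex Hermitian counterpart of the main result of \citet{bai2008large}, which establishes Marchenko-Pastur convergence for matrices $\frac{1}{n}\boldsymbol{X}_n\boldsymbol{X}_n^H$ whose columns are independent but whose entries may be dependent \emph{within} each column. Their Theorem~1.1, and the simplified sufficient conditions of their Corollary~1.1, are stated for real symmetric matrices, so the first task is to check that every lemma in their argument survives the substitution of transposes by conjugate transposes, of real inner products by Hermitian ones, and of squared real scalar products by squared moduli of Hermitian products. This is routine, since the underlying machinery relies only on self-adjointness of $\boldsymbol{S}_n$ and on the resolvent identity, both insensitive to the real-versus-complex distinction. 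Granting this, I would verify that hypotheses (1)--(4) are exactly the complexified hypotheses of Corollary~1.1, so that the conclusion follows by direct invocation.

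To explain why conditions (1)--(4) are the correct ones, I would recall the analytic core, the Stieltjes-transform method. By Proposition~\ref{prop:StieltjesUnique} it suffices to show that, for each $z\in\mathbb{C}\setminus\mathbb{R}$, the transform $m_n(z)=\frac{1}{p}\operatorname{tr}(\boldsymbol{S}_n-z\boldsymbol{I}_p)^{-1}$ converges with probability one to the Stieltjes transform of the MP law recalled in Appendix~\ref{app:rmtBackground}. Writing $\boldsymbol{S}_n=\frac{1}{n}\sum_k\boldsymbol{y}_k\boldsymbol{y}_k^H$ with $\boldsymbol{y}_k$ the $k$-th (independent) column and applying the rank-one Sherman--Morrison perturbation identity column by column, the analysis reduces to controlling the quadratic forms $\boldsymbol{y}_k^H\boldsymbol{R}^{(k)}\boldsymbol{y}_k$, where $\boldsymbol{R}^{(k)}=(\boldsymbol{S}_n^{(k)}-z\boldsymbol{I})^{-1}$ is the resolvent of the matrix with the $k$-th column deleted, hence independent of $\boldsymbol{y}_k$. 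Showing that each such form concentrates around $\frac{1}{n}\operatorname{tr}\boldsymbol{R}^{(k)}$ yields the self-consistent equation characterising the MP fixed point and hence the limit of $m_n(z)$.

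The decisive step, which consumes hypotheses (1)--(3), is the concentration bound $\mathbb{E}\big|\boldsymbol{y}_k^H\boldsymbol{B}\boldsymbol{y}_k-\operatorname{tr}\boldsymbol{B}\big|^2 = o(n^2)\,\|\boldsymbol{B}\|^2$, uniformly over $k$ and over deterministic $\boldsymbol{B}$ of bounded spectral norm (the relevant $\boldsymbol{B}$ being resolvents, whose norm is controlled by $1/|\operatorname{Im}z|$). Condition (1) produces the correct mean $\mathbb{E}\,\boldsymbol{y}_k^H\boldsymbol{B}\boldsymbol{y}_k=\operatorname{tr}\boldsymbol{B}$. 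Expanding the variance gives $\sum B_{jl}\bar{B}_{j'l'}\,\mathbb{E}[(\bar{X}_{jk}X_{lk}-\delta_{lj})(X_{j'k}\bar{X}_{l'k}-\delta_{j'l'})]$, and I would split the index sum into the paired set $\Gamma^c$ ($j=j'\neq l=l'$ or $j=l'\neq j'=l$) and its complement $\Gamma$: the paired part reproduces the benign $O(p)\|\boldsymbol{B}\|^2$ contribution already present in the i.i.d.\ case and is kept $o(n^2)\|\boldsymbol{B}\|^2$ through condition (2) on the off-diagonal second moments, while the genuinely dependent part over $\Gamma$ is controlled, via Cauchy--Schwarz against $\boldsymbol{B}$, by $\|\boldsymbol{B}\|^2$ times the square root of the left-hand side of condition (3), which vanishes. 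The main obstacle is precisely this variance estimate --- matching the combinatorics of the quadratic-form expansion to the $\Gamma/\Gamma^c$ decomposition and verifying that the dependence on $\boldsymbol{B}$ factors cleanly through its spectral norm --- preceded by the standard truncation and centralisation reduction needed to legitimise the moment computations; condition (4), $p/n\to\alpha$, then fixes the shape parameter of the limiting law.
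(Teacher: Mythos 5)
Your proposal follows essentially the same route as the paper: both reduce the statement to a direct invocation of Theorem~1.1 and Corollary~1.1 of \citet{bai2008large} with $\boldsymbol{T}_n$ the identity, and both identify the limit as the MP law through uniqueness of the Stieltjes transform (Proposition~\ref{prop:StieltjesUnique}). Your additional unpacking of the rank-one resolvent expansion and the $\Gamma$ versus paired-index decomposition is a faithful account of the internal mechanics of the cited result rather than a different argument, so the two proofs coincide in substance.
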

\begin{proof}[Sketch of the proof]
	We use Theorem~1.1 from \citet{bai2008large} combined with sufficient condition of Corollary~1.1, assuming the identity matrix $\boldsymbol{T}_n$. These conditions are compatible with the case of the Wishart ensemble, such that the ESD convergence to a distribution with the same Stieltjes transform as the MP law
\footnote{This requires checking that the self consistency equation (1.1) in \citet{bai2008large} has a unique solution, which they establish by equation (1.2))}. 
As a consequence of Proposition~\ref{prop:StieltjesUnique}, 
we get that the limit ESD should is the MP law.
\end{proof}


\begin{figure}
  \centering
  \includegraphics[width=.6\linewidth]{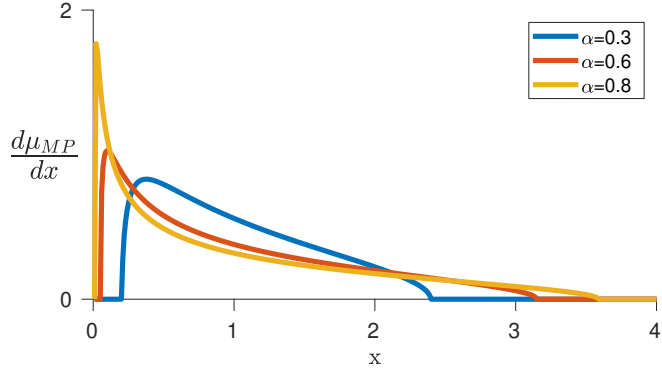}
  \caption{
    Density of the Marchenko-Pastur law for different values of the aspect ratio of the matrices, $\alpha$, in Eq.(\ref{eq:mpLaw}).     \label{fig:mpLawIlust} 
  }
\end{figure}





\section{Additional corollaries}\label{app:addColos}
Additional corollaries based on simplifying Assumption~\ref{assum:linearPhase},
where a linear phase is considered instead of the gneral assumption on phae that was used in Corollary~\ref{corol:kappaplv} and Corollary~\ref{corol:uniuncoupl}.

\begin{assum}\label{assum:linearPhase}
  Assume that $\phi(t)$ is a linear function of $t$ on $[0,\,T]$,
  \begin{equation}
    \phi(t) = m t \,, \quad m = 2\pi f = 2\pi/\tau\,,
  \end{equation}
  where $f > 0$
  (interpretable as the frequency of an oscillation for the continuous signal)
  and  $\alphaT$ is the ratio of length ($T$) of signal to period of oscillation $\tau$,
\[
  \alphaT = \frac{T}{\tau} = \frac{\phi(T)\!-\!\phi(0)}{2 \pi} \,.
\]
\end{assum}

\begin{corol}\label{corol:kappaplv_linearPhase}
  Under the assumptions of Corollary~\ref{corol:kappaplv},
  assume additionally Assumption~\ref{assum:linearPhase} is also satisfied,
  and the intensity of the point-process is given by
  \begin{equation}
    \label{eq:Vmrate_xPhiDer}          
    \lambda(t)=\lambda_0\exp(\kappa\cos(\phi(t)-\varphi_0))\,,
  \end{equation}
  for a given $\kappa \geq 0$,
  then the expectation of the multi-trial {\rm PLV} estimate converges (for $K\rightarrow +\infty$) to
  \begin{equation}\label{eq:PLVmeanGeneral_lph}
    \text{\rm PLV}^*=
    \frac
      {\int_{0}^{T} e^{\boldsymbol{i}2\pi f t} \exp(\kappa\cos(2\pi f t \!-\! \varphi_0))dt}
      {\int_{0}^{T} \exp(\kappa\cos(2\pi f t \!-\! \varphi_0))dt}\,.
  \end{equation}
  If in addition $[0,\,T]$ corresponds to an integer number $\alphaT>0$ of periods of the oscillation
  \begin{equation}\label{eq:classicPLVbessel_simplified}
    \text{\rm PLV}^* =	e^{\boldsymbol{i}\varphi_0}\frac{\int_{\phi(0)}^{\phi(T)} \cos(\theta)  \exp(\kappa\cos(\theta))d\theta}{\int_{\phi(0)}^{\phi(T)}  \exp(\kappa\cos(\theta))d\theta}=e^{\boldsymbol{i}\varphi_0}\frac{I_1(\kappa)}{I_0(\kappa)}\,,
  \end{equation}    
  and the scaled residual
  $\sqrt{K}\left(\widehat{\text{\rm PLV}}_K - \text{\rm PLV}^* \right)$
  converges to a zero mean complex Gaussian $Z$ with the following covariance:
  \begin{equation}
    \label{eq:covPLSPT}
    \text{\rm Cov}
    \left[
      \begin{matrix}
        \text{Re}\{Ze^{-i\varphi_0}\}\\
        \text{Im}\{Ze^{-i\varphi_0}\}
      \end{matrix}
    \right]=\frac{1}{2\lambda_0 T I_0(\kappa)^2}\left[
      \begin{matrix}
        I_0(\kappa)\!+\!I_2(\kappa) & 0\\
        0&	I_0(\kappa)\!-\!I_2(\kappa)
      \end{matrix}\right]\,.
  \end{equation}
\end{corol}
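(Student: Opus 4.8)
The plan is to obtain Corollary~\ref{corol:kappaplv_linearPhase} as a direct specialization of Corollary~\ref{corol:kappaplv} to the linear phase $\phi(t)=2\pi f t$. The key observation is that under Assumption~\ref{assum:linearPhase} the derivative $\phi'(t)=2\pi f$ is constant, so the classical von Mises intensity of Eq.(\ref{eq:Vmrate_xPhiDer}) coincides with the generalized intensity of Eq.(\ref{eq:Vmrate}) upon setting the generalized baseline rate to $\tilde\lambda_0=\lambda_0/(2\pi f)$, since the constant factor $\phi'(t)$ is simply absorbed into the baseline. A linear phase trivially satisfies the continuity, strict monotonicity and piecewise differentiability hypotheses of Corollary~\ref{corol:kappaplv}, so I may invoke its conclusions and then simplify the resulting expressions.

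For the expectation, I would note that $\text{\rm PLV}^*$ in Eq.(\ref{eq:PLVmeanGeneral}) is a ratio of integrals in which any constant prefactor of $\lambda(t)$ cancels; substituting $\phi(t)=2\pi f t$ and reverting the change of variable $\theta=\phi(t)$ back to $t$ immediately yields Eq.(\ref{eq:PLVmeanGeneral_lph}). When $[0,\,T]$ spans an integer number $\alphaT>0$ of periods, the shift $\psi=\theta-\varphi_0$, the $2\pi$-periodicity of the integrand, and the even/odd symmetry argument already used in the proof of Corollary~\ref{corol:kappaplv} collapse numerator and denominator onto the modified Bessel integrals, giving the compact form $e^{\boldsymbol{i}\varphi_0}I_1(\kappa)/I_0(\kappa)$ of Eq.(\ref{eq:classicPLVbessel_simplified}).

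For the covariance, I would start from the normalization of Corollary~\ref{corol:kappaplv}, which carries the factor $\tilde\lambda_0\,(\phi(T)-\phi(0))$ in its denominator. Using $\phi(T)-\phi(0)=2\pi f T$ together with $\tilde\lambda_0=\lambda_0/(2\pi f)$, this product collapses to $\tilde\lambda_0(\phi(T)-\phi(0))=\lambda_0 T$, which replaces $\phi(T)-\phi(0)$ by $T$ in the prefactor and yields exactly Eq.(\ref{eq:covPLSPT}) of the present corollary; the $2\times 2$ matrix of Bessel combinations and the vanishing of the off-diagonal entries by symmetry are inherited unchanged. The only genuine subtlety, and hence the step I would be most careful about, is the bookkeeping of this reparametrization: one must check that the constant $\phi'(t)=2\pi f$ cancels consistently in the \emph{mean} (where it drops out of the ratio) and in the \emph{variance} (where it combines with the $(\phi(T)-\phi(0))$ normalization to leave $\lambda_0 T$).

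A fully self-contained alternative, which sidesteps the reparametrization entirely, is to apply Corollary~\ref{corol:1DPLV} directly with $x(t)=e^{\boldsymbol{i}2\pi f t}$ and $\lambda(t)=\lambda_0\exp(\kappa\cos(2\pi f t-\varphi_0))$. Evaluating $\Lambda(T)=\lambda_0 T I_0(\kappa)$ and the matrix integral $\int_0^T M(t)\lambda(t)dt$ via the change of variable $\theta=2\pi f t$ and the identities $\cos^2\theta=(1+\cos 2\theta)/2$ and $\sin^2\theta=(1-\cos 2\theta)/2$ reproduces Eq.(\ref{eq:covPLSPT}) after dividing by $\Lambda(T)^2$. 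This route requires no new ingredient beyond the Bessel-function evaluations already established, and I would favour it in the written proof precisely because it avoids any ambiguity about the two conventions for $\lambda_0$.
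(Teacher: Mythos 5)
Your proposal is correct, and the route you say you would favour in the written proof --- applying Corollary~\ref{corol:1DPLV} directly with $x(t)=e^{\boldsymbol{i}2\pi ft}$ and the classical von Mises intensity, computing $\Lambda(T)=\lambda_0 T I_0(\kappa)$, and evaluating the covariance integral via the change of variable $\theta=2\pi ft$ with the half-angle identities --- is essentially the paper's own proof, including the cancellation of the Jacobian factor $1/m$ against the $\Lambda(T)^2$ normalization. Your first route, absorbing the constant $\phi'(t)=2\pi f$ into a rescaled baseline $\tilde\lambda_0=\lambda_0/(2\pi f)$ so as to invoke Corollary~\ref{corol:kappaplv} directly, is also sound, and you handle the key bookkeeping correctly since $\tilde\lambda_0\left(\phi(T)-\phi(0)\right)=\lambda_0 T$.
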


\begin{proof}[Proof]
  We use the intensity function introduced in Eq.(\ref{eq:Vmrate_xPhiDer}). 
  The $\rm PLV$ asymptotic value ($\rm PLV^*$) can be derived from definition introduced in Eq.(\ref{thePLV}) by using Assumption~\ref{assum:linearPhase},
  \begin{align}
    {\rm PLV^*}
    & =  \frac{\int_0^T e^{\boldsymbol{i}\phi(t)}\lambda(t)dt}{\int_0^T \lambda(t)dt}\\
    & =\frac
      {\lambda_0 \int_{0}^{T} e^{\boldsymbol{i}\phi(t)} \exp(\kappa\cos(\phi(t)-\varphi_0))dt}
      {\lambda_0\int_{0}^{T} \exp(\kappa\cos(\phi(t)-\varphi_0))dt}\\
    & =\frac
      {\lambda_0 \int_{0}^{T} e^{\boldsymbol{i}mt} \exp(\kappa\cos(mt-\varphi_0))dt}
      {\lambda_0\int_{0}^{T} \exp(\kappa\cos(mt-\varphi_0))dt}\,.
  \end{align}
  We change  the integration variable from $mt$ to $\theta$,
  \begin{align}
    \label{eq:theoPLVintermed1}
    {\rm PLV^*}
    & =\frac{\int_{\theta(0)}^{\theta(T)} e^{\boldsymbol{i}\theta} \exp(\kappa\cos(\theta-\varphi_0))d\theta}{\int_{\theta(0)}^{\theta(T)} \exp(\kappa\cos(\theta-\varphi_0))d\theta}\,.
  \end{align}
  To simplify the integral (bring the $\varphi_0$ out of the integral), 
  we change  the integration variable again, from $\theta$ to $\psi$,
  ($\psi = \theta -\varphi_0$),
  \begin{align}
    {\rm PLV^*}
    & = \frac{\int_{\theta(0)-\varphi_0}^{\theta(T)-\varphi_0} e^{\boldsymbol{i}(\psi + \varphi_0)} \exp(\kappa\cos(\psi))d\psi}{\int_{\theta(0)-\varphi_0}^{\theta(T)-\varphi_0} \exp(\kappa\cos(\psi))d\psi}\\
    & = e^{\boldsymbol{i}\varphi_0}
      \frac{\int_{\theta(0)-\varphi_0}^{\theta(T)-\varphi_0} e^{\boldsymbol{i}\psi} \exp(\kappa\cos(\psi))d\psi}{\int_{\theta(0)-\varphi_0}^{\theta(T)-\varphi_0} \exp(\kappa\cos(\psi))d\psi}\,.
  \end{align}
When $[0,\,T]$ corresponds to an integer number  of periods of the oscillation
(\ie is an integer number),
and given that the integration interval is $2\pi\alphaT$, 
and integrates $2\pi$-periodic functions 
(thus the integral is invariant to translations of the integration
interval), we have
%
\[
 {\rm PLV^*}
= e^{\boldsymbol{i}\varphi_0}
\frac{\int_{-\pi}^{\pi} e^{\boldsymbol{i}\psi} \exp(\kappa\cos(\psi))d\psi}{\int_{-\pi}^{\pi} \exp(\kappa\cos(\psi))d\psi}\,.
\]
Observing that the integrand of the denominator is even, while for the numerator the imaginary part is odd and the real part is even, we get
\[
{\rm PLV^*}
= e^{\boldsymbol{i}\varphi_0}
\frac{\int_{0}^{\pi} \cos(\psi) \exp(\kappa\cos(\psi))d\psi}{\int_{0}^{\pi} \exp(\kappa\cos(\psi))d\psi}\,.
\]
We  prove the first part of the corollary (Eq.(\ref{eq:PLVmeanGeneral_lph}).
By using the integral form of the modified Bessel functions $I_k$ \textbf{for $k$ integer}  (see \eg \citet[p.~181]{watson1995treatise}):
  \begin{align}
    I_k(\kappa)&=\frac{1}{\pi}\int_{0}^{\pi} \cos(k\theta)  \exp(\kappa\cos(\theta))d\theta+\frac{\sin(k\pi)}{\pi}\int_{0}^{+\infty} e^{-\kappa\cosh t -kt}dt \\\label{modBesselAppend}
    &=\frac{1}{\pi}\int_{0}^{\pi} \cos(k\theta)  \exp(\kappa\cos(\theta))d\theta\,,
  \end{align}
  we can derive the compact form:
  \begin{align}    
    {\rm PLV^*}
    \label{eq:theoPLVintermed2}
& = e^{\boldsymbol{i}\varphi_0}\frac{I_1(\kappa)}{I_0(\kappa)}\,.
  \end{align}
  
  The covariance matrix of the asymptotic distribution,
  can be easily derived by plugging Eq.(\ref{eq:Vmrate_xPhiDer}) as $\lambda(t)$ in Corollary~\ref{corol:1DPLV}
  and integrating on $[0,\,T]$:
  \begin{equation}
    \label{covz11_linPh}
    \left(\rm Cov(Z) \right)_{11}
    = \frac{\lambda_0}{\Lambda(T)^2} \int_0^T \cos^2(\phi(t))
    \exp\left(\kappa\cos(\phi(t)-\varphi_0)\right)dt\,.
  \end{equation}
  As we have
  \[
    \Lambda(T)=\lambda_0  T I_0(\kappa) 
    \,,
  \]
  We can continue with Eq.(\ref{covz11_linPh}) as,
  \begin{equation}
  \left(\rm Cov(Z) \right)_{11}
  = \frac{1}{\lambda_0T^2I_0(\kappa)^2} \int_0^T \cos^2(\phi(t))
  \exp\left(\kappa\cos(\phi(t)-\varphi_0)\right)dt\,.
  \label{covz11_seconstep_linPh}
  \end{equation}
  
  To simplify the rest of the derivations, we transform the complex variable coordinates by using $e^{\boldsymbol{i}\phi(t)}e^{-\boldsymbol{i}\varphi_0}$ instead of $e^{\boldsymbol{i}\phi(t)}$
  as predictable with respect to $\{\mathcal{F}_t\}$
  (\ie replacing $x(t)$ with $e^{\boldsymbol{i}\phi(t)}e^{-\boldsymbol{i}\varphi_0}$ in Theorem~\ref{thm:1Dcoupling}).
  With this change Eq.(\ref{covz11_seconstep_linPh}) becomes,
  \begin{equation}
    \left(\rm Cov(Z) \right)_{11}
    = \frac{1}{\lambda_0T^2I_0(\kappa)^2} \int_0^T \cos^2(\phi(t)-\varphi_0)
    \exp\left(\kappa\cos(\phi(t)-\varphi_0)\right)dt\,.  
  \end{equation}
  Then we change the variable of the integral from $mt-\varphi_0$ to $\theta$
  (and consequently $dt$ to $\frac{1}{m} d\theta$)
  and use the following trigonometric identity,
  \begin{eqnarray}
    \cos^2(\theta) = \frac{1}{2} \left( 1 + \cos(2\theta) \right)
  \end{eqnarray}
  to obtain
  \begin{multline*}
    \left(\rm Cov(Z) \right)_{11}
    = \frac{1}{2 m \lambda_0T^2I_0(\kappa)^2}\int_{\theta(0)}^{\theta(T)}
    \left( 1 + \cos(2\theta) \right)
    \exp\left(\kappa\cos(\theta)\right)d\theta \\
    = \frac{1}{2 m \lambda_0T^2I_0(\kappa)^2}\int_{\theta(0)}^{\theta(T)}
    \left( \exp\left(\kappa\cos(\theta)\right) +
      \cos(2\theta)\exp\left(\kappa\cos(\theta)\right) \right)d\theta \,.
  \end{multline*}
  Given that the integral is invariant to translations of the integration, we get
  \begin{multline*}
    \left(\rm Cov(Z) \right)_{11}
    = \frac{1}{2 m \lambda_0T^2I_0(\kappa)^2}
    \left[
      \int_{0}^{2\pi \alphaT}
      \exp\left(\kappa\cos(\theta)\right) d\theta\right. \\
    \left.+
      \int_{0}^{2\pi\alphaT}
      \cos(2\theta)\exp\left(\kappa\cos(\theta)\right) d\theta
    \right]
  \end{multline*}  
  \begin{align}
    \left(\rm Cov(Z) \right)_{11}
    & = \frac{1}{2 m \lambda_0T^2I_0(\kappa)^2}
      \left[
      2\alphaT\pi I_0(\kappa)+ 2\alphaT\pi I_2(\kappa)
      \right]\\
    & = \frac{2 \pi\alphaT}{ 2 m \lambda_0T^2 I_0(\kappa)^2}
      \left[
      I_0(\kappa)+  I_2(\kappa)
      \right]\\
    & = \frac{mT}{ 2 m \lambda_0T^2 I_0(\kappa)^2}
      \left[
       I_0(\kappa)+  I_2(\kappa)
      \right]\\
    &=\frac{1}{2 \lambda_0 T I_0(\kappa)^2}
      \left[
      I_0(\kappa)+  I_2(\kappa)
      \right]\,.\label{eq:4}
  \end{align}

  We can have a similar calculation for the imaginary part \ie $\left(\rm Cov(Z) \right)_{22}$ as well,
  but using the identity   $\sin^2(\theta) = \frac{1}{2} \left( 1 - \cos(2\theta) \right)$
  instead of Eq.(\ref{eq:trigIdCos}).
  The off-diagonal elements of the covariance matrix vanish due to symmetry of integrand.

  Therefore, we showed that for a given $\kappa \geq 0$, scaled residual
  \[
   Z' = e^{-\boldsymbol{i}\varphi_0} \sqrt{K}\left(\widehat{\rm PLV}_K - \rm PLV^* \right)\,,
  \]
  converges to a zero mean complex Gaussian with the following covariance:
  \begin{equation*}
    \text{\rm Cov}
    \left[
    \begin{matrix}
    \text{Re}\{Z'\}\\
    \text{Im}\{Z'\}
    \end{matrix}
    \right]=\left[
      \begin{matrix}
        \text{Re}\{Ze^{-i\varphi_0}\}\\
        \text{Im}\{Ze^{-i\varphi_0}\}
      \end{matrix}
    \right]=\frac{1}{2 \lambda_0 T I_0(\kappa)^2}\left[
      \begin{matrix}
        I_0(\kappa)+I_2(\kappa) & 0\\
        0&	I_0(\kappa)-I_2(\kappa)
      \end{matrix}\right]\,.
  \end{equation*}
\end{proof}

\begin{corol}
  \label{corol:uniuncoupl_linearPhase}
  Assume $\phi(t)=2\pi kt/T$, with $k > 0$ integer, and a sinusoidal modulation of the intensity at frequency $m/T$, with $m > 0$ integer possibly different from $k$, phase shift $\varphi_0$ and modulation amplitude $\varkappa$ such that
  \begin{equation}
    \label{eq:simpleRateCoro5}
    \lambda(t)=\lambda_0\left(1+ \varkappa \cos\left( 2\pi m t / T-\varphi_0 \right)\right),\, \lambda_0>0,\,0\leq\varkappa\leq 1\,,
  \end{equation}
  and the point process is homogeneous Poisson with rate $\lambda_0$.
  Then the expectation of the PLV estimate converges (for $K\mapsto +\infty$) to
  \begin{equation}
    \label{eq:theoPLV_simpleRateCoro5}
    \text{\rm PLV}^*=\frac{1}{2}\varkappa e^{\boldsymbol{i}\varphi_0}\delta_{km}\,,
  \end{equation}
where $\delta_{km}$ denotes the Kronecker symbol. Moreover the asymptotic covariance of $Z=\sqrt{K}\left(\widehat{\rm PLV}_K - \rm PLV^* \right)$ is
  \begin{equation}
    \text{\rm Cov}
  \left[
  \begin{matrix}
  \text{Re}\{Z\}\\
  \text{Im}\{Z\}
  \end{matrix}
  \right]=\frac{1}{2 \lambda_0 T}\left[
  \begin{matrix}
  1&0\\
  0&	1
  \end{matrix}\right]\,.
  \end{equation}
\end{corol}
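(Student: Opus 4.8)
The plan is to specialize Corollary~\ref{corol:1DPLV} to the linear phase $\phi(t)=2\pi k t/T$ and the modulated intensity of Eq.(\ref{eq:simpleRateCoro5}), and then to reduce every integral to an elementary Fourier orthogonality computation on $[0,\,T]$. First I would record that, since $m$ is a positive integer, the window $[0,\,T]$ spans an integer number of periods of the modulation, so $\int_0^T \cos(2\pi m t/T-\varphi_0)\,dt=0$ and hence $\Lambda(T)=\int_0^T\lambda(t)\,dt=\lambda_0 T$. This is the normalization that will enter both the mean and the covariance formulas of Corollary~\ref{corol:1DPLV}.

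For the expectation I would substitute this intensity into Eq.(\ref{thePLV}) and write the modulating cosine in exponential form, $\cos(2\pi m t/T-\varphi_0)=\tfrac12\bigl(e^{\boldsymbol{i}(2\pi m t/T-\varphi_0)}+e^{-\boldsymbol{i}(2\pi m t/T-\varphi_0)}\bigr)$. The numerator of $\text{\rm PLV}^*$ then splits into pure exponentials with frequencies $k/T$, $(k+m)/T$ and $(k-m)/T$. Because $k>0$ and $k+m>0$, the first two integrate to zero over $[0,\,T]$, while the third contributes exactly $T$ precisely when $k=m$. Dividing by $\Lambda(T)=\lambda_0 T$ yields $\text{\rm PLV}^*=\tfrac12\varkappa e^{\boldsymbol{i}\varphi_0}\delta_{km}$, which is Eq.(\ref{eq:theoPLV_simpleRateCoro5}); note that the phase $e^{\boldsymbol{i}\varphi_0}$ emerges directly from the surviving exponential, so no explicit rotation of coordinates is needed here.

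For the covariance I would plug the same $\phi$ and $\lambda$ into the matrix integral of Corollary~\ref{corol:1DPLV}, using $\cos^2\phi=\tfrac12(1+\cos 2\phi)$, $\sin^2\phi=\tfrac12(1-\cos 2\phi)$ and the explicit $\sin(2\phi)$ off-diagonal entry. Expanding the products of these $2k/T$-frequency terms against the constant $1$ and the $m/T$-frequency modulation produces, apart from the constant $\tfrac12$, only sinusoids whose frequencies are positive integer multiples of $1/T$ (namely $2k/T$, $m/T$, and $(2k\pm m)/T$). Each of these integrates to zero over the integer number of periods contained in $[0,\,T]$, so the off-diagonal entries vanish and every diagonal entry reduces to $\tfrac{1}{\Lambda(T)^2}\cdot\lambda_0\cdot\tfrac{T}{2}=\tfrac{1}{2\lambda_0 T}$, giving the claimed isotropic covariance $\tfrac{1}{2\lambda_0 T}\boldsymbol{I}_2$. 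This also reconciles the statement's reference to a homogeneous process of rate $\lambda_0$: the modulation affects the mean through the $\delta_{km}$ term but drops out of the variance, which therefore equals that of the constant-rate case.

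The step requiring the most care is the vanishing of the mixed-frequency terms in the covariance, which uses crucially that the window spans an integer number of periods of \emph{every} sinusoid appearing, a consequence of both $k$ and $m$ being integers. The one exceptional configuration is the resonance $m=2k$, where the product of the doubled signal frequency $2k/T$ with the rate modulation yields a nonvanishing constant $\propto\varkappa\cos\varphi_0$ that would break isotropy; I would either assume $m\neq 2k$ or explicitly flag this degenerate alignment as excluded. Finally, the passage from the raw coupling statistic $\widehat{\text{\rm c}}_K$ to the normalized $\widehat{\text{\rm PLV}}_K$, together with the asymptotic Gaussianity itself, is inherited verbatim from the Slutsky-type argument already carried out in the proof of Corollary~\ref{corol:1DPLV}, so the present computation contributes no new probabilistic content beyond the trigonometric evaluation of the limiting mean and covariance.
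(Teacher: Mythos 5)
Your proposal follows essentially the same route as the paper's own proof: substitute the modulated intensity into Corollary~\ref{corol:1DPLV}, compute $\Lambda(T)=\lambda_0 T$ by periodicity, expand the cosine via Euler's formula so that the numerator of $\text{\rm PLV}^*$ reduces to exponentials at frequencies $k/T$, $(k+m)/T$, $(k-m)/T$ whose integrals vanish except in the resonant case $k=m$, and then evaluate the covariance entries with the half-angle identities and Fourier orthogonality. The probabilistic content is indeed inherited from Corollary~\ref{corol:1DPLV} in both treatments.

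One point deserves emphasis: your flag of the configuration $m=2k$ is not pedantry but identifies a genuine gap in the paper's own argument. The paper asserts that $\int_0^T \cos(4\pi kt/T)\cos(2\pi mt/T-\varphi_0)\,dt$ vanishes ``given that $k$ and $m$ are integers,'' but this integral equals $\tfrac{T}{2}\cos\varphi_0$ when $m=2k$, and the corresponding off-diagonal term $\int_0^T \tfrac{1}{2}\sin(4\pi kt/T)\,\varkappa\cos(2\pi mt/T-\varphi_0)\,dt$ equals $\tfrac{\varkappa T}{4}\sin\varphi_0$. In that resonant case the limiting covariance is not the isotropic matrix $\tfrac{1}{2\lambda_0 T}\boldsymbol{I}_2$ claimed in the statement. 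Your suggestion to exclude $m=2k$ explicitly (or to state the corrected anisotropic covariance in that case) is the right fix; otherwise your computation is complete and correct.
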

\begin{proof}[Proof]
  Similar to Corollary~\ref{corol:kappaplv}, we can derive the asymptotic $\rm PLV$ (Eq.(\ref{eq:theoPLV_simpleRateCoro5})) for this case,
  from the definition in Eq.(\ref{thePLV}).
  We and use the assumed phase $\phi(t)=2\pi kt/T$ 
  and apply the intensity function defined in Eq.(\ref{eq:simpleRateCoro5})
  ,
  in Corollary~\ref{corol:1DPLV},
  \begin{align}
    {\rm PLV^*}
    & =  \frac{\int_0^T e^{\boldsymbol{i}\phi(t)}\lambda(t)dt}{\int_0^T \lambda(t)dt}\\
      & =\frac
      {\int_{0}^{T} e^{\boldsymbol{i}2\pi kt/T} 
      \left(1+ \varkappa \cos\left( 2\pi m t / T-\varphi_0 \right)  \right)
      dt}
      {\int_{0}^{T} 
      \left(1+ \varkappa \cos\left( 2\pi m t / T-\varphi_0 \right)  \right)
      dt}\,.
  \end{align}
  By using the Euler's formula we can write the second term in the numerator as weighted sum of exponentials ($cos(x) = \frac{1}{2}(e^{\boldsymbol{i} x} + e^{-\boldsymbol{i} x}))$, 
  \begin{align}
    {\rm PLV^*}
    & = \frac{1}{2} \frac
      {\int_{0}^{T} e^{\boldsymbol{i}2\pi kt/T} 
      + \varkappa \int_{0}^{T} e^{\boldsymbol{i}2\pi kt/T}
      \left( e^{\boldsymbol{i} (2\pi m t / T-\varphi_0)} + e^{-\boldsymbol{i} (2\pi m t / T-\varphi_0)  }  \right) 
      dt}
      {\int_{0}^{T} dt +
      \int_{0}^{T} \varkappa \cos\left( 2\pi m t / T-\varphi_0 \right) dt}\\
    & = \frac{1}{2} \frac
      {\int_{0}^{T} e^{\boldsymbol{i}2\pi kt/T} 
      + \varkappa \int_{0}^{T} e^{\boldsymbol{i}2\pi (k+m)t/T}e^{\boldsymbol{i}\varphi_0}
      + \varkappa \int_{0}^{T} e^{-\boldsymbol{i}2\pi (k-m)t/T}e^{\boldsymbol{i}\varphi_0}
      dt}
      {\int_{0}^{T} dt +
      \int_{0}^{T} \varkappa \cos\left( 2\pi m t / T-\varphi_0 \right) dt}\\
    & = \frac{1}{2} \frac
      {\int_{0}^{T} e^{\boldsymbol{i}2\pi kt/T} 
      + \varkappa e^{\boldsymbol{i}\varphi_0}  \int_{0}^{T} e^{\boldsymbol{i}2\pi (k+m)t/T}
      + \varkappa e^{\boldsymbol{i}\varphi_0} \int_{0}^{T} e^{-\boldsymbol{i}2\pi (k-m)t/T}
      dt}
      {\int_{0}^{T} dt +
      \varkappa \int_{0}^{T}  \cos\left( 2\pi m t / T-\varphi_0 \right) dt}\,.
  \end{align}
  Given that $k,m > 0$ and we are integrating over full periods all terms vanishes
  except the last term in the numerator (if and only if $k=m$) and first term in the denominator. Therefore we have,
  \begin{align}
    {\rm PLV^*}
    & = \frac{1}{2} \frac
      {\varkappa e^{\boldsymbol{i}\varphi_0} \int_{0}^{T} e^{-\boldsymbol{i}2\pi (k-m)t/T} dt}
      {\int_{0}^{T} dt}\\
    & = \frac{1}{2}\varkappa e^{\boldsymbol{i}\varphi_0} \delta_{km}\,.
  \end{align}
  We prove the first part of the corollary.

  The covariance matrix of the asymptotic distribution,
  can be derived by the procedure we used for the proof of
  Corollary~\ref{corol:kappaplv}.
  We plug the rate $\lambda(t)$ assumed in the corollary (Eq.(\ref{eq:simpleRateCoro5}))
  and integrate on $[0,\,T]$,
  \begin{equation}
    \left(\rm Cov(Z) \right)_{11} 
    = \frac{\lambda_0}{\Lambda(T)^2} \int_0^T \cos^2(2\pi kt/T)
      \left(1 + \varkappa \cos\left( 2\pi m t / T-\varphi_0 \right)  \right) dt\,,
  \end{equation}
  and use the trigonometric identity Eq.(\ref{eq:trigIdCos}), to get
  \begin{equation}
    \left(\rm Cov(Z) \right)_{11} 
    = \frac{\lambda_0}{2\Lambda(T)^2}
    \int_0^T \left( 1 + \cos(4\pi kt/T) \right)
    \left(1 + \varkappa \cos\left( 2\pi m t / T-\varphi_0 \right)  \right) dt\,.
  \end{equation}
In the resulting equation,
  \begin{equation}
    \begin{split}
      \left(\rm Cov(Z) \right)_{11}
      =  \frac{\lambda_0}{2\Lambda(T)^2}
      \left[
        \int_0^T dt  
        + \varkappa \int_0^T \cos\left( 2\pi m t / T-\varphi_0\right) dt  \right. &
      + \int_0^T \cos(4\pi kt/T) dt \\
      \left. + \varkappa \int_0^T \cos(4\pi kt/T) \cos\left( 2\pi m t / T-\varphi_0\right) dt  
      \right]
    \end{split}
  \end{equation}
  all terms vanish, except the first term.
  Second and third vanishes as we integrate in the full period and 
  the last term vanishes given that,
    \begin{multline}
      \int_0^T \cos(4\pi kt/T) \cos\left( 2\pi m t / T-\varphi_0\right) dt \\
      = \cos(\varphi_0) \int_0^T
      \cos(4\pi kt/T)
        \cos\left( 2\pi m t / T) \right) dt
        + \sin(\varphi_0) \int_0^T \cos(4\pi kt/T) \sin\left( 2\pi m t / T)\right)
      dt \,,
  \end{multline}
  and $k$ and $m$ are integers.

  Finally, given that $\Lambda(T) = \int_0^T \lambda(t) dt  =  \lambda_0 T$, we have, 
  \begin{equation}
          \left(\rm Cov(Z) \right)_{11} = \frac{1}{2 \lambda_0 T}\,.
  \end{equation}

  We have a similar calculation for the imaginary part \ie $\left(\rm Cov(Z) \right)_{22}$.
  The off-diagonal elements of the covariance matrix vanish due to symmetry of integrand.

  Therefore, we showed that for the scaled residual,
  \[
    Z =  \sqrt{K}\left(\widehat{\rm PLV}_K - \rm PLV^* \right)\,,
  \]
  converges to a zero mean isotropic complex Gaussian:
  \begin{equation*}
    \sqrt{K}\left(\widehat{\rm PLV}_K - \rm PLV^* \right)
    \underset{K\rightarrow +\infty
    }{\longrightarrow}\mathcal{N}\left(
      \left[\begin{matrix}
          0\\
          0
        \end{matrix}\right], \frac{1}{2 \lambda_0 T}\left[
        \begin{matrix}
          1&0\\
          0&	1
        \end{matrix}\right]\right)\,.
  \end{equation*}
\end{proof}


\section{Circular noise}\label{app:circAddNoise}
We use random numbers drawn from the von Mises distribution to generate noise for the phase of an oscillation.
Consider the oscillation $O^{orig}[t] = e^{2 \pi \boldsymbol{i} f t}$, where the bracket indicates the oscillation is sampled at equispaced discrete times $t=\{k\Delta\}_{k=1,\dots,q}$, 
then $O[t]$ is a noisy version of this oscillation which is 
perturbed in the phase:
\begin{equation}
  \label{eq:noisyosc}
  O[t] = e^{2 \pi \boldsymbol{i} f t} \exp\left(\boldsymbol{i} \eta[t]\right)\,, 
\end{equation}
where $\eta[t]$ is sampled i.i.d. from the zero-mean von Mises distribution $\mathcal{M}(0, \kappa)$ at each time $t$. 
Notably, $\kappa$ is the dispersion parameter, therefore
larger $\kappa$ correspond to smaller variance of the noise.
In simulation used in section~\ref{ssec:simulation_multD} we use $\kappa = 10$.

In the simulation for the multivariate case,
we use $N_{osc}$-dimensional vector of oscillations,
$\boldsymbol{O}^{orig}[t] = \{O^{orig}_j[t]\}_{j = 1, \dots, N_{osc}}$, and sample i.i.d. the noise for each oscillation, leading to the vector time series $\boldsymbol{\eta}[t]$.
In this case the noisy oscillations can be written as,
\begin{equation}
  \boldsymbol{O}[t] = \boldsymbol{O}^{orig}[t] \odot \exp \left( \boldsymbol{i} \boldsymbol{\eta}[t]\right)\,,
\end{equation}
where $\odot$ is (entrywise) \emph{Hadamard} product.

The advantage of such phase noise is to preserve the spectral content of the original oscillation better
than conventional normal noise.
Nevertheless, using conventional normal (white) noise (on both real and imaginary part of the oscillation) 
did not change the results significantly.









\section{Tables of parameters}\label{tb:parmasList}

Choice of parameters used in the figures in the main text.
In all simulations, $\phi_0 = 0$.

\begin{table}[ht]
  \caption{
    Parameters used for simulations used in figure~\ref{fig:uniVarSim}
    \label{table:figParam_uniVar} 
  } 
\centering 
\begin{tabular}{c c | c  c} 
\hline\hline 
Parameter & Description & A & B \\ [0.5ex] 
\hline 
$f$ & Frequency & \multicolumn{2}{c}{1 Hz}    \\ 
$K$ & Num. of trials & \multicolumn{2}{c}{5000}  \\
$T$ & Simulation length & \multicolumn{2}{c}{5 s} \\
$\lambda_0$ & Average firing rate & \multicolumn{2}{c}{20 Hz} \\
$N_S$ & Num. of simulations & \multicolumn{2}{c}{5000}  \\
$\kappa$ & Modulation strength & 0 & 0.5 \\ [0ex] 
\hline 
\end{tabular}
\end{table}  

\begin{table}[ht]
  \caption{
    Parameters used for simulations used in figure~\ref{fig:coro3exp}
    \label{table:figParam_coro3exp} 
  } 
\centering 
\begin{tabular}{c c | c c c c } 
\hline\hline 
Parameter & Description & A & B & C & D\\ [0.5ex] 
\hline 
$f$ & Frequency & \multicolumn{4}{c}{1 Hz}    \\ 
$K$ & Num. of trials & \multicolumn{4}{c}{10}  \\
$T$ & Simulation length & 0.75 s & 0.5 s & 1 s & x-axis \\
$\lambda_0$ & Average firing rate & \multicolumn{4}{c}{30 Hz} \\
$N_S$ & Num. of simulations & \multicolumn{4}{c}{500}  \\
$\kappa$ & Modulation strength & \multicolumn{4}{c}{0} \\ [0ex] 
\hline 
\end{tabular}
\end{table}

\begin{table}[ht]
  \caption{
    Parameters used for simulations used in figure~\ref{fig:multVarSim}
    \label{table:figParam_multVarSim} 
  } 
\centering 
\begin{tabular}{c c || c c c | c c c } 
\hline\hline 
Parameter & Description & A1 & A2 & A3 & B1 & B2 & B3\\ [0.5ex] 
\hline 
$f$ & Frequency & \multicolumn{6}{c}{5 oscillatory components, 11-15 Hz}    \\ 
$K$ & Num. of trials & \multicolumn{6}{c}{10}  \\
$T$ & Simulation length & \multicolumn{6}{c}{11 s} \\
$\lambda_0$ & Average firing rate & \multicolumn{6}{c}{20 Hz} \\
$N_S$ & Num. of simulations & \multicolumn{6}{c}{100}  \\
$\kappa$ & Modulation strength & \multicolumn{3}{c}{0} & \multicolumn{3}{|c}{0.15}\\ 
$n_c$ & Num. of LFP channels & \multicolumn{6}{c}{100}  \\ [0ex] 
$n_s$ & Num. of spiking units & 10 & 50 & 90 & 10 & 50 & 90  \\ [0ex] 
$\kappa_{noise}$ & Dispersion parameter of phase noise & \multicolumn{6}{c}{10} \\ [0ex] 
\hline 
\end{tabular}
\end{table}


\end{document}